\definecolor{blue}{rgb}{0.1,0.2,0.5}
\definecolor{brown}{rgb}{0.6,0.6,0.2}
\theoremstyle{plain}
\newtheorem{theorem}{Theorem}
\newcommand{\newtheoremwithcrefformat}[2]{%
  \newtheorem{#1}[theorem]{#2}%
  \crefformat{#1}{##2\MakeUppercase#1~##1##3}%
  \Crefformat{#1}{##2\MakeUppercase#1~##1##3}%
}
\newcommand{\newseptheoremwithcrefformat}[2]{%
  \newtheorem{#1}{#2}%
  \crefformat{#1}{##2\MakeUppercase#1~##1##3}%
  \Crefformat{#1}{##2\MakeUppercase#1~##1##3}%
}
\theoremstyle{nonumberplain}
\newtheorem{proof}{Proof}
\def\cqedsymbol{\ifmmode$\lrcorner$\else{\unskip\nobreak\hfil
\penalty50\hskip1em\null\nobreak\hfil$\lrcorner$
\parfillskip=0pt\finalhyphendemerits=0\endgraf}\fi} 
\newcommand{\cqed}{\renewcommand{\qed}{\cqedsymbol}}
\newcommand{\Oh}{\mathcal{O}}
\newcommand{\Pp}{\mathcal{P}}
\newcommand{\Cc}{\mathscr{C}}
\newcommand{\Dd}{\mathscr{D}}
\newcommand{\rDomSet}{{\sc{\mbox{Distance-$r$} Dominating Set}}\xspace}
\newcommand{\domSet}{{\sc{Dominating Set}}\xspace}
\newcommand{\rIndSet}{{\sc{\mbox{Distance-$r$} Independent Set}}\xspace}
\newcommand{\profile}{\mathrm{profile}}
\newcommand{\tn}[1]{{\tiny{#1}}}
\newcommand{\set}[1]{\{#1\}}
\newcommand{\setof}[2]{\set{#1\colon #2}}
\newcommand{\numprofiles}[3]{\nu^{#2}_{#1}(#3)}
\newcommand{\eps}{\varepsilon}
\newcommand{\poly}{\mathrm{poly}}
\newcommand{\nats}{\mathbb{N}}
\newcommand{\Pc}{{\mathscr P}}
\newcommand{\Gg}{\mathcal{G}}
\renewcommand{\subset}{\subseteq}
\newcommand{\N}{\mathbb{N}}
\newcommand{\tup}[1]{\overline{#1}}
\renewcommand{\phi}{\varphi}
\renewcommand{\epsilon}{\varepsilon}
\newcommand{\dist}{\mathrm{dist}}
\newcommand{\Ff}{\mathcal{F}}
\renewcommand{\geq}{\geqslant}
\renewcommand{\leq}{\leqslant}
\renewcommand{\ge}{\geqslant}
\renewcommand{\le}{\leqslant}
\renewcommand{\setminus}{-}
\begin{document}
\title{Progressive Algorithms for Domination and Independence\thanks{The work of M.\ Pilipczuk and S.\ Siebertz is supported by the National Science Centre of 
Poland via POLONEZ grant agreement UMO-2015/19/P/ST6/03998, 
which has received funding from the European Union's Horizon 2020 research and 
innovation programme (Marie Sk\l odowska-Curie grant agreement No.\ 665778).}}

\author{
Grzegorz Fabia\'nski\thanks{
Faculty of Mathematics, Informatics, and Mechanics, University of Warsaw, Poland,\newline
\texttt{grzegorz.fabianski@students.mimuw.edu.pl}}
\and
Micha\l~Pilipczuk\thanks{
  Institute of Informatics, University of Warsaw, Poland, \texttt{michal.pilipczuk@mimuw.edu.pl}.
}
\and
Sebastian Siebertz\thanks{
  Institut f\"ur Informatik
Humboldt-Universit\"at zu Berlin \texttt{siebertz@informatik.hu-berlin.de}.
}
\and
Szymon Toru\'nczyk\thanks{
Institute of Informatics, University of Warsaw, Poland
\texttt{szymtor@mimuw.edu.pl}
}}

\maketitle

\begin{abstract}
  We consider a generic algorithmic paradigm that we call
  \emph{progressive exploration}, which can be used to develop simple and efficient parameterized
  graph algorithms.  We identify two model-theoretic properties that
  lead to efficient progressive algorithms, namely variants of the
  \emph{Helly property} and \emph{stability}.  We demonstrate our
  approach by giving linear-time fixed-parameter algorithms for the \rDomSet
  problem (parameterized by the solution
  size) in a wide variety of restricted graph classes, 
  such as powers of nowhere dense
  classes, map graphs, and (for $r=1$) biclique-free graphs.
  Similarly, for the \rIndSet problem the technique can be used to give a linear-time fixed-parameter algorithm on any nowhere dense class.
  Despite the simplicity of the method, in several cases our results extend known boundaries of tractability for the considered problems and improve the best known running times.
\end{abstract}

\begin{picture}(0,0) \put(410,-280)
{\hbox{\includegraphics[scale=0.25]{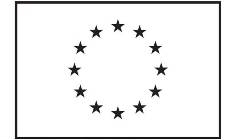}}} \end{picture} 
\vspace{-0.8cm}

\section{Introduction}\label{sec:intro}

It is widely believed that many
important algorithmic graph problems cannot be solved efficiently on
general graphs. Consequently, a natural question is to identify the
most general classes of graphs on which a given problem can be solved
efficiently. Structural graph theory offers a wealth of concepts that
can be used to design efficient algorithms for generally intractable
problems on restricted graph classes. An important result in this area
states that every property of graphs expressible in monadic
second-order logic can be tested in linear time on every class
of bounded treewidth~\cite{courcelle1990monadic}. 
Similarly, every property expressible in first-order logic can be tested in almost 
linear time on every nowhere dense graph class~\cite{GroheKS17}.

%\vspace{-0.3pt}
Nowhere denseness is an abstract notion of uniform sparseness in graphs, 
which is the foundational definition of the theory of sparse graphs; see
the monograph of Ne\v{s}et\v{r}il and Ossona de Mendez~\cite{sparsity} for an overview.
Formally, a graph class $\Cc$ is {\em{nowhere dense}} if for every $r\in \N$, one
cannot obtain arbitrary large cliques by contracting disjoint connected subgraphs of radius at most $r$ in graphs from $\Cc$.
Many well-studied classes of sparse graphs are nowhere dense, for instance the class of planar
graphs, any class of graphs with a fixed bound on the maximum degree, or any class of graphs excluding a fixed (topological) minor, are nowhere
dense. Furthermore, under certain closure conditions, nowhere denseness constitutes the frontier of
parameterized tractability for natural classes of problems. 
For instance, while the first-order model-checking problem is fixed-parameter tractable on every nowhere dense class $\Cc$~\cite{GroheKS17},
on every subgraph-closed class $\Dd$ that is not nowhere
dense, it is as hard as on the class of all graphs~\cite{dawar2009parameterized,dvovrak2013testing}.  Similar
 lower bounds are known for many individual problems,
e.g.~for the \rIndSet problem and the \rDomSet problem, on
subgraph-closed classes
which are not nowhere dense~\cite{drange2016kernelization, PS18}.

%\vspace{-0.3pt}
Towards the goal of
extending the border of algorithmic tractability for the above
mentioned problems beyond graph classes that are closed under taking
subgraphs, we study a very simple and generic algorithmic paradigm
that we call \emph{progressive exploration}.
%To illustrate the
%paradigm, we consider as an example the \domSet problem. Recall that a 
%dominating set in a graph $G$ is a set $D$ such that every $v\in V(G)\setminus D$ has a neighbor in $D$.  The
%algorithmic task is to determine whether an input graph~$G$ admits a
%dominating set of size at least $k$, where $k$ is a given parameter.

The idea of progressive exploration can be applied to a {\em{parameterized subset problem}}: given a graph $G$ and parameter $k$,
we look for a vertex subset $S$ of size $k$ that has some property.
A progressive exploration algorithm works in rounds $i=1,2,\ldots$, where each round $i$ finishes with constructing a {\em{candidate solution}}~$S_i$, short, a \emph{candidate}, and, in case $S_i$ is not a solution, a {\em{witness}} $W_i$ for this fact.
More precisely, in round~$i$, we first attempt to find a candidate $S_i$ of size $k$ which {\em{agrees}} with all the previously found witnesses $W_1,W_2,\ldots,W_{i-1}$,
in the sense that none of them witnesses that $S_i$ is not a feasible solution.
Obviously, if no such $S_i$ exists, then we can terminate and return that there is no solution.
On the other hand, if the found candidate $S_i$ is in fact a feasible solution, then we can also terminate and return it.
Otherwise, we find another witness~$W_i$, which witnesses that all the candidates found so far -- $S_1,\ldots,S_{i-1},S_i$ -- are not feasible solutions,
and we proceed to the next round.
In this way, we progressively explore the whole solution
space, while constructing more and more problematic witnesses that the future candidates
must agree with, until we either find a solution or enough witnesses to certify that no solution exists.

\begin{comment}
In round $i$ of the algorithm, we consider a vertex set
$S_i$ of size $k$ as a candidate solution (starting with an arbitrary
set in the first step).  We either verify that $S_i$ is indeed a
solution and output this solution, or we find a set $W_i$ highlighting
a conflict with all candidate solutions considered so far. 
%For
%example, in the case of \domSet, such a conflict set $W_i$ could be 
%single vertex $v$ such that every candidate solution $S_j$ does not
%dominate $v$. Of course, for different algorithmic problems, 
%different forms of conflicts must be highlighted. 
Let us assume for now that for the problem under consideration 
we can always find a small conflict set $W_i$ efficiently.  We then
attempt to find a new candidate solution $S_{i+1}$ which is not
conflicting~$W_i$. If we fail to do so, we output $W_i$ as a witness
for the fact that no solution of size
$k$ exists in $G$. Otherwise, we proceed to the next round with the candidate
set $S_{i+1}$, which again we assume to be efficiently computable for
the moment. In this way, we progressively explore the whole solution
space, while constructing more and more conflict sets that a solution
must obey, until we either find a solution or a conflict set
witnessing that no solution exists.
\end{comment}

Progressive graph exploration is a generic 
approach to solving graph problems, which so far rather resembles a wishful-thinking heuristic than a viable algorithmic methodology.
Such algorithms can be applied to any input graph,
however, a 
priori there are multiple problem-dependent details to be filled.
First, in order to implement the iteration, we need to efficiently
compute candidates $S_i$ and small witnesses $W_i$ in every round.
Second, to analyze the running time we need to give an upper bound on the number of rounds in which the algorithm
terminates. If we can guarantee that each round can be
implemented efficiently and that the number of rounds is bounded in
the parameter $k$, we immediately obtain a fixed-parameter algorithm
for the problem under consideration. 

In this work we study properties
of algorithmic problems and graphs that ensure these desired features.
We consider problems for which the property that a candidate $S$ agrees with a witness $W$
is, in a way to be made precise, definable by a first-order formula. 
We identify model-theoretic properties of formulas that
lead to efficient progressive exploration algorithms. The properties  
that guarantee the existence of small witness sets are 
variants of the \emph{Helly property}, called 
\emph{nfcp} in model theory. 
The property that guarantees that the progressive
exploration algorithms stop after a bounded number of rounds is
the model-theoretic notion of \emph{stability}.
Under these conditions, for problems formulated using short distances in the graph, 
we are able to implement progressive exploration efficiently,
yielding fast and simple fixed-parameter algorithms.
See the caption in \cref{fig:overview} for an
overview of the paper.

\begin{figure}\centering\vspace{-2em}
    \includegraphics[scale=0.55]{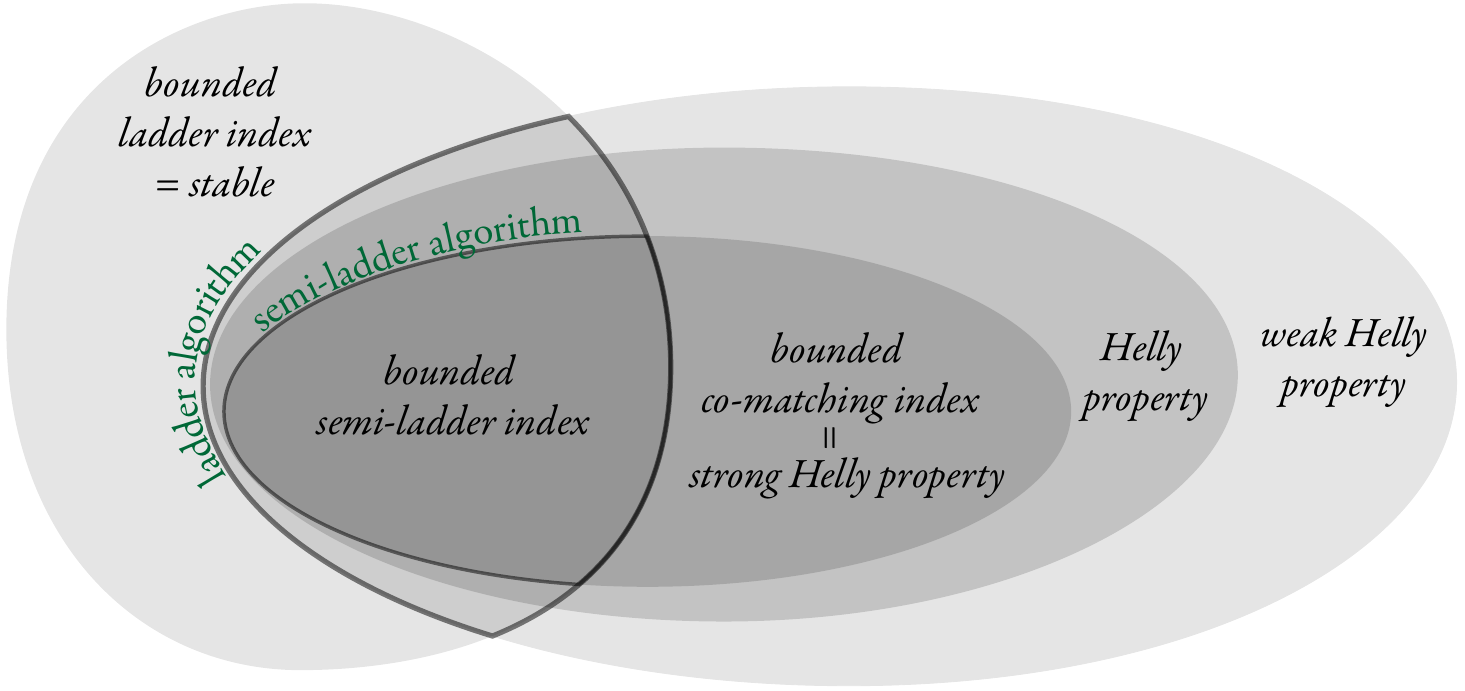}\vspace{-.5em}
    \caption{The figure depicts various properties 
    of classes of bipartite graphs, which are introduced in \cref{sec:helly}.
    Domination- and independence-type problems
    studied in \cref{sec:problems}
    reduce to the problem of determining whether the right part of a given bipartite graph has a common neighbor.
    In \cref{sec:algo} 
      two algorithms for the latter problem are devised, and their domains of applicability are marked above. The  
      \emph{ladder algorithm} has a larger domain, but requires a more powerful access oracle and has higher running time.
      Finally, we apply these algorithms to specific graph classes, yielding new fixed-parameter tractability results for domination- and independence-type problems.\vspace{-1em}
    }\label{fig:overview}
\end{figure}

We demonstrate our approach by applying it to the \rDomSet problem and 
the \rIndSet problem on a variety of restricted graph classes.
Precisely, we prove that:
\begin{itemize}
\item For every $r\in \N$ and graph class $\Cc$ that is either nowhere dense, or is a power of a nowhere dense class, or is the class of map graphs, 
the \rDomSet problem on any graph $G\in \Cc$ can be solved in time $2^{\Oh(k\log k)}\cdot \|G\|$.
Here and throughout the paper, $\|G\|$ denotes the total number of vertices and edges in a graph $G$, while $|G|$ denotes the number of vertices in $G$.
\end{itemize}
\pagebreak
\begin{itemize}
\item For every $t\in \N$, the \domSet problem on any $K_{t,t}$-free graph $G$ can be solved in time $2^{\Oh(k\log k)}\cdot \|G\|$; here, a graph is {\em{$K_{t,t}$}}-free if it does not contain the complete
bipartite graph $K_{t,t}$ as a subgraph.
\item For every $r\in \N$ and nowhere dense graph class $\Cc$, the \rIndSet problem on any graph $G\in \Cc$ can be solved in time $f(k)\cdot \|G\|$, for some function $f$.
\end{itemize}
Actually, for the last result, we also give a different algorithm with running time $2^{\Oh(k\log k)}\cdot \|G\|$, which however does not rely on the concept of progressive exploration and 
uses some black-boxes from the theory of nowhere dense graphs.

Our results extend the limits of tractability for \rDomSet and \rIndSet and, in some cases, improve the best known running times.
We include a comprehensive comparison with the existing literature at the end of \cref{sec:algo}.
However, let us stress here a key point: all our algorithms are derived in a generic way using the idea of progressive exploration, hence they are very easy to implement and
they do not use any algorithmic black-boxes that depend on the class from which the input is drawn. In fact, properties of the considered classes are used only when analyzing the running~time.

\paragraph*{Acknowledgment.}
We thank Ehud Hrushovski for pointing us to the
nfcp property.

\section{Complexity-measures for bipartite graphs}\label{sec:helly}
In this section, we define the basic notions used in this paper, related to various complexity measures
associated with bipartite graphs.

A \emph{bipartite graph} is a triple $G=(L,R,E)$,
where $L$ and $R$ are two sets of vertices, called  the \emph{left part} and \emph{right part}, respectively, and $E\subset L\times R$ is a binary relation whose elements are called \emph{edges}. 
Hence, bipartite graphs with parts $L,R$ correspond bijectively to binary relations
with domain $L$ and codomain~$R$.
Note that each bipartite graph has a uniquely determined left and right part.
Also, those parts are not necessarily~disjoint.

\paragraph*{Ladders, semi-ladders, and co-matchings.}
We now define various complexity measures for bipartite graphs, based on the size of a largest ``obstruction'' found in a given bipartite
graph. There are several types of obstructions, leading to different complexity measures.
We start with defining the various types of obstructions.
Let $G=(L,R,E)$ be a bipartite graph. Two sequences, $a_1,\ldots,a_n\in L$ and $b_1,\ldots,b_n\in R$, form: 
\begin{itemize}
\item a {\em{co-matching}} of order $n$ in $G$ if we have $(a_i,b_j)\in E \Longleftrightarrow i\neq j$, for all $i,j\in \{1,\ldots,n\}$;
\item a {\em{ladder}} of order $n$ in $G$ if we have $(a_i,b_j)\in E \Longleftrightarrow i>j$, for all $i,j\in \{1,\ldots,n\}$; and
\item a {\em{semi-ladder}} of order $n$ in $G$ if $(a_i,b_j)\in E$ for all $i,j\in \{1,\ldots,n\}$ with $i>j$, and $(a_i,b_i)\notin E$ for all $i\in \{1,\ldots,n\}$.
\end{itemize}
Note that in case of a semi-ladder we do not impose any condition for $i<j$.
Observe that any ladder of order $n$ and any co-matching of order $n$ are also semi-ladders of order $n$.

\begin{figure}[h]
                \centering
                \def\svgwidth{0.8\columnwidth}
                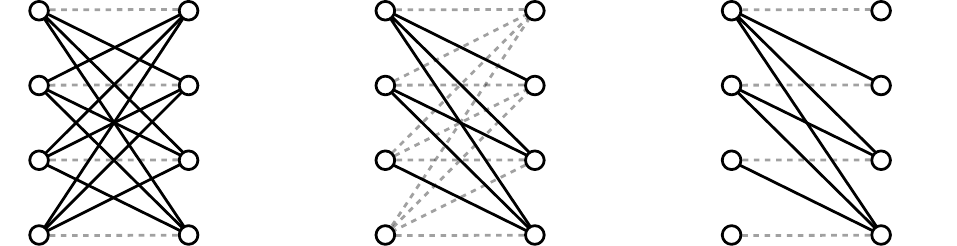
\caption{A co-matching, a ladder, and a semi-ladder of order $4$, respectively. Dashed gray lines represent non-edges.}\label{fig:obstructions}
\end{figure}

The \emph{co-matching index} of a bipartite graph
is the maximum order of a co-matching that it contains.
A class of bipartite graphs has \emph{bounded}
co-matching index if the supremum of the co-matching indices of its members is finite.
We define analogous notions for the \emph{ladder index} and the \emph{semi-ladder index}, in the expected way.

\smallskip

In this paper, we will often not care about the precise bounds on the indices of graphs, 
and it will only matter whether the respective index is bounded in a given class. 
Classes of bipartite graphs with bounded ladder index are also known 
as \emph{stable} classes.
We will later relate the property of having
a bounded co-matching index to a variant of the \emph{Helly property}.
However, using a simple Ramsey argument,
we now observe that boundedness of the semi-ladder index is equivalent to  boundedness of both the co-matching and the ladder index.
Let us first state Ramsey's theorem in the form used in this paper.

\begin{theorem}[Ramsey's theorem]\label{thm:ramsey}
 For all $c,\ell\in\N$ there exists a number $R^c(\ell)$ with the following property.
 If the edges of a complete graph on $R^c(\ell)$ 
 vertices are colored using $c$ colors,
 then there is a set of $\ell$ vertices which is \emph{monochromatic},
 that is, all edges with both endpoints in this set are of the same color.
\end{theorem}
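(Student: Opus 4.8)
The plan is to first treat the case $c=2$ and then bootstrap to arbitrary $c$ by induction on the number of colours. For $s,t\in\N$ I would show that there is a number $R(s,t)$ such that every $2$-colouring (with colours \emph{red} and \emph{blue}) of the edges of the complete graph on $R(s,t)$ vertices contains a red clique on $s$ vertices or a blue clique on $t$ vertices, together with the bound $R(s,t)\le \binom{s+t-2}{s-1}$. The base cases $R(s,1)=R(1,t)=1$ are immediate, since a single vertex is (vacuously) a clique of every colour. For the inductive step I claim $R(s,t)\le R(s-1,t)+R(s,t-1)$: take the complete graph on that many vertices, fix a vertex $v$, and partition the remaining $R(s-1,t)+R(s,t-1)-1$ vertices according to the colour of their edge to $v$; by pigeonhole either at least $R(s-1,t)$ of them are joined to $v$ in red or at least $R(s,t-1)$ of them are joined to $v$ in blue. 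In the first case, applying the induction hypothesis to that set yields a blue $K_t$ (done) or a red $K_{s-1}$ which together with $v$ is a red $K_s$; the second case is symmetric. The stated binomial bound then follows from Pascal's identity $\binom{s+t-2}{s-1}=\binom{s+t-3}{s-2}+\binom{s+t-3}{s-1}$. In particular $R^2(\ell):=R(\ell,\ell)\le\binom{2\ell-2}{\ell-1}\le 4^\ell$ is finite.

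\textbf{Arbitrarily many colours.} Next I would establish the existence of $R^c(\ell)$ by induction on $c$, with $R^1(\ell)=\ell$ and $R^2(\ell)$ as above. For $c\ge 3$ set $R^c(\ell):=R^2\big(R^{c-1}(\ell)\big)$. Given a $c$-colouring of the edges of the complete graph on $R^c(\ell)$ vertices, recolour it with two colours by identifying colours $2,3,\ldots,c$ into a single new colour while keeping colour $1$ unchanged. By the two-colour case there is a set $U$ of $R^{c-1}(\ell)$ vertices that is monochromatic in the recoloured graph. If the colour of $U$ is colour $1$, then $U$ is already monochromatic in the original colouring and $|U|=R^{c-1}(\ell)\ge\ell$, so we are done. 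Otherwise, inside $U$ only the original colours $2,\ldots,c$ occur, so $U$ carries a $(c-1)$-colouring of its edges and $|U|=R^{c-1}(\ell)$; the induction hypothesis for $c-1$ colours then yields a monochromatic set of $\ell$ vertices inside $U$. This completes the induction and hence the proof.

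\textbf{On the difficulty.} There is no genuine obstacle here: this is the classical Ramsey theorem, and the only points needing care are the pigeonhole split in the two-colour step (in particular accounting correctly for the pivot vertex $v$) and the bookkeeping in the colour-merging reduction, so that the induction on $c$ is well-founded and the base case $c=2$ is actually proved rather than assumed. I would deliberately not optimise the constants, since the applications in this paper — the Ramsey arguments bounding the semi-ladder index in terms of the co-matching index and the ladder index — use only the bare finiteness of $R^c(\ell)$.
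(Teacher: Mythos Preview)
Your proof is correct and complete for the theorem as stated, but the paper takes a different route. Rather than first proving the two-colour case via the Erd\H{o}s--Szekeres recursion $R(s,t)\le R(s-1,t)+R(s,t-1)$ and then reducing $c$ colours to two by merging, the paper builds directly, for all $c$ at once, a ``canonical sequence'' $u_1,\ldots,u_p$ inside a $c$-coloured complete graph on $c^{p-1}$ vertices, with the property that for each $i$ all edges from $u_i$ to $\{u_1,\ldots,u_{i-1}\}$ share a single colour $s_i$; this is obtained by repeatedly choosing a pivot and passing to its majority colour class. Taking $p=c\ell$ and pigeonholing on the labels $s_i$ then gives a monochromatic $\ell$-set, yielding the explicit bound $R^c(\ell)\le c^{c\ell-1}$.

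The practical difference is quantitative: your colour-merging recursion $R^c(\ell)=R^2(R^{c-1}(\ell))$ produces a bound that is a tower of exponentials of height roughly $c$, whereas the paper's argument gives a single exponential $c^{c\ell-1}$. Contrary to your closing remark, the paper does use this explicit bound (see the running-time analysis following the Semi-ladder Algorithm, where $R^k(\ell)\le k^{k\ell-1}$ is invoked to obtain the stated $2^{\Oh(k^2\log k)}$ and $2^{\Oh(k\log k)}$ bounds). So while your argument establishes the theorem, plugging its bound into those estimates would noticeably worsen the claimed running times.
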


The standard proof of Ramsey's theorem yields an upper bound of $R^c(\ell)\leq c^{c\ell-1}$ for $c\geq 2$. We include a proof
for completeness. However, we defer the proof, and
other proofs which are merely technical, to \cref{sec:omitted} in order
to not disturb the flow of presentation.

From now on, we adopt the notation $R^c(\ell)$ for the multicolored Ramsey numbers as described in \cref{thm:ramsey}.
Also the proof of the following lemma is relegated to \cref{sec:omitted}.

%We can now prove that the boundedness of the semi-ladder index
%is equivalent to boundedness of both the co-matching 
%and the ladder index. 

\begin{lemma}\label{lem:comatch&semiladder}
    A class of bipartite graphs has finite semi-ladder index if and only if 
    both its ladder index and its co-matching index are finite.
\end{lemma}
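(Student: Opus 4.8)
The plan is to prove the two implications separately; the forward direction is immediate and the reverse direction is a short Ramsey argument.

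For the ``only if'' direction, I would simply invoke the observation already recorded in the text: every ladder of order $n$ and every co-matching of order $n$ is in particular a semi-ladder of order $n$. Hence, if the semi-ladder index of every member of the class is at most some $N$, then the same bound $N$ applies to the ladder index and to the co-matching index of every member. So finiteness of the semi-ladder index trivially implies finiteness of the other two indices.

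For the ``if'' direction, suppose the ladder index is at most $k$ and the co-matching index is at most $m$ throughout the class, and let $G=(L,R,E)$ be a member containing a semi-ladder $a_1,\dots,a_n\in L$ and $b_1,\dots,b_n\in R$ of order $n$. The idea is to pass to a sub-sequence on which the adjacency pattern becomes fully determined, turning the semi-ladder into an actual ladder or co-matching. By definition of a semi-ladder, $(a_i,b_j)\in E$ whenever $i>j$ and $(a_i,b_i)\notin E$; the only undetermined pairs are $(a_i,b_j)$ with $i<j$. Accordingly, color the edges of the complete graph on vertex set $\{1,\dots,n\}$ with two colors: the edge $\{i,j\}$ with $i<j$ gets the color ``in'' if $(a_i,b_j)\in E$ and ``out'' otherwise. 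By \cref{thm:ramsey}, if $n\geq R^2(\ell)$ there is a monochromatic set $I=\{i_1<\dots<i_\ell\}$. Restricting to the sub-sequences $(a_{i_1},\dots,a_{i_\ell})$ and $(b_{i_1},\dots,b_{i_\ell})$: if $I$ is monochromatic in color ``out'', then $(a_{i_s},b_{i_t})\in E$ holds exactly when $s>t$, so these sub-sequences form a ladder of order $\ell$, whence $\ell\leq k$; if $I$ is monochromatic in color ``in'', then $(a_{i_s},b_{i_t})\in E$ holds exactly when $s\neq t$, so they form a co-matching of order $\ell$, whence $\ell\leq m$. In either case $\ell\leq\max(k,m)$, so taking $\ell=\max(k,m)+1$ forces $n<R^2(\max(k,m)+1)$. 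Thus the semi-ladder index of the class is bounded by $R^2(\max(k,m)+1)-1$, which is finite.

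The only real point requiring care is the bookkeeping in the Ramsey step: one must check that the two-coloring needs to resolve only the pairs $(a_i,b_j)$ with $i<j$, and that in each monochromatic case the information retained on the diagonal and below it (non-edges on the diagonal, edges strictly below) combines with the chosen color to yield precisely a ladder or precisely a co-matching. Beyond that, the argument is routine, and the quantitative bound $R^2(\max(k,m)+1)-1$ on the semi-ladder index follows immediately.
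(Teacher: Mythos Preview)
Your proof is correct and follows essentially the same approach as the paper: both directions are handled identically, with the forward implication being immediate and the reverse implication proceeding by two-coloring the pairs $(i,j)$ with $i<j$ according to whether $(a_i,b_j)\in E$, then applying Ramsey's theorem to extract either a ladder or a co-matching. The only cosmetic difference is that the paper assumes a common bound~$\ell$ on both the ladder and co-matching indices, whereas you track separate bounds $k$ and $m$ and arrive at the equivalent conclusion with $\ell=\max(k,m)+1$.
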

%\begin{proof}
%The left-to-right implication is immediate:
%every ladder of order $n$ and every co-matching of order $n$
%are also semi-ladders of order $n$, so the semi-ladder index is always an upper bound on both the co-matching and the ladder index.
%
%For the right-to-left implication, we prove that if a bipartite graph $G=(L,R,E)$ has both the ladder and the co-matching index smaller than some $\ell\in \N$, then its semi-ladder index is smaller than $q=R^2(\ell)$.
%Suppose for contradiction that $a_1,\ldots,a_q\in L$ and $b_1,\ldots,b_q\in R$ form a semi-ladder of order $q$ in $G$.
%Color all pairs $(i,j)$ with $1\le i<j\ge q$ red or blue, depending on whether $(a_i,b_j)\in E$ or not.
%By Ramsey's theorem, there is a subset of $\set{1,\ldots,q}$
%of size $\ell$ which is monochromatic, i.e., which only spans 
%red edges, or only spans blue edges.
%In the first case, the corresponding vertices $a_i$ and $b_i$ form a co-matching of order $\ell$ in $G$,
%and in the latter case we analogously exhibit a ladder of order $\ell$ in $G$.
%In any case, this is a contradiction.
%\end{proof}

%\medskip
\paragraph*{Helly property and its variants.}\label{sec:helly-variants}
Let $p\in \N$ and let $G=(L,R,E)$ be a bipartite graph.
We say that a subset $B\subseteq R$ is {\em{covered}} by a subset $A\subseteq L$ if there exists a vertex $a\in A$ which is adjacent to all the vertices of $B$.
Then subsets $A$ and $B$ have the {\em{$p$-Helly property}} 
if either $B$ is covered by $A$, or $B$ contains a subset of size at most $p$ that is not covered by $A$.
We shall say that $G$ has:
\begin{itemize}
\item  the {\em{weak $p$-Helly property}} if $L$ and $R$ have the $p$-Helly property;
\item  the {\em{$p$-Helly property}} if $L$ and $B$ have the $p$-Helly property, for all $B\subseteq R$; and
\item  the {\em{strong $p$-Helly property}} if all $A\subset L$ and $B\subset R$ have the $p$-Helly property.
\end{itemize}
% Obviously, the strong $p$-Helly property implies the $p$-Helly property,
% which implies the weak $p$-Helly property.
We say that a class $\Cc$ of bipartite graphs 
has the (weak/strong) Helly property if there is
some $p\in \N$ such that all graphs in $\Cc$ have the (weak/strong) $p$-Helly property. The Helly property is called
\emph{nfcp} in model theory~\cite[Chapter II.4]{shelah1990classification}.

It turns out that the strong $p$-Helly property corresponds precisely to
having co-matching index at most~$p$. A proof of the lemma is presented in \cref{sec:omitted}.

\begin{lemma}\label{lem:hel-com}
    Let $p\in\N$ and $G$ be a bipartite  graph.
    Then $G$ has the strong $p$-Helly property if and only if it has co-matching index at most $p$.
\end{lemma}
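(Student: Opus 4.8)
The plan is to prove the two implications separately, each by contraposition, the common device being the translation between a failure of the strong $p$-Helly property and the presence of a co-matching of order $p+1$: in both directions one extracts $p+1$ left vertices $a_1,\dots,a_{p+1}$ and $p+1$ right vertices $b_1,\dots,b_{p+1}$ realising exactly the pattern $(a_i,b_j)\in E \iff i\ne j$.

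For the implication ``strong $p$-Helly property $\Rightarrow$ co-matching index at most $p$'' I would argue the contrapositive. Suppose $G$ contains a co-matching of order $p+1$, with left vertices $a_1,\dots,a_{p+1}$ and right vertices $b_1,\dots,b_{p+1}$. First note that the $b_i$ are pairwise distinct, since $b_i=b_j$ with $i\ne j$ would force both $(a_i,b_i)\notin E$ and $(a_i,b_i)=(a_i,b_j)\in E$. Put $A=\{a_1,\dots,a_{p+1}\}$ and $B=\{b_1,\dots,b_{p+1}\}$, so $|B|=p+1$. Then $B$ is not covered by $A$, since each $a_i\in A$ misses $b_i\in B$. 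On the other hand, every $B'\subseteq B$ with $|B'|\le p$ omits some index $i$, hence $B'\subseteq B\setminus\{b_i\}$, and $a_i$ is adjacent to all of $B\setminus\{b_i\}$, so $B'$ is covered by $A$. Thus $A$ and $B$ fail the $p$-Helly property, witnessing that $G$ does not have the strong $p$-Helly property.

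For the implication ``co-matching index at most $p$ $\Rightarrow$ strong $p$-Helly property'' I would again argue the contrapositive: assuming some $A\subseteq L$ and $B\subseteq R$ fail the $p$-Helly property, I would build a co-matching of order $p+1$. By assumption $B$ is not covered by $A$, while every subset of $B$ of size at most $p$ is covered by $A$. The key move is to pass from $B$ to a minimal offender: let $B_0\subseteq B$ be a subset of $B$ that is not covered by $A$ and is inclusion-minimal with this property (such $B_0$ exists as $B$ is finite). Then $|B_0|\ge p+1$, for otherwise $|B_0|\le p$ and $B_0$ would be covered by $A$, contrary to its choice. For every $b\in B_0$, the proper subset $B_0\setminus\{b\}$ is covered by $A$, so fix some $a_b\in A$ adjacent to all of $B_0\setminus\{b\}$; then $(a_b,b)\notin E$, since otherwise $a_b$ would be adjacent to all of $B_0$, contradicting that $B_0$ is not covered. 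Finally pick any $p+1$ distinct vertices $b_1,\dots,b_{p+1}\in B_0$ and set $a_i:=a_{b_i}$; for $i\ne j$ we have $b_j\in B_0\setminus\{b_i\}$, whence $(a_i,b_j)\in E$, while $(a_i,b_i)\notin E$ by the choice of $a_{b_i}$. Hence $a_1,\dots,a_{p+1}$ and $b_1,\dots,b_{p+1}$ form a co-matching of order $p+1$, so the co-matching index of $G$ is strictly larger than $p$.

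I do not expect a serious obstacle. The only point requiring care is in the second implication, where one must work with a minimal non-covered subset $B_0$ rather than with $B$ itself, because the ``private witnesses'' $a_b$ and the non-adjacency $(a_b,b)\notin E$ are guaranteed only once minimality is in force; this is also where finiteness of $G$ (really, of its right part) is used. It is also worth recording explicitly, in the first implication, that the $b_i$'s in a co-matching are automatically distinct, so that $B$ genuinely has $p+1$ elements.
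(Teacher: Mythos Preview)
Your proof is correct and follows essentially the same approach as the paper: both directions hinge on the observation that a minimal uncovered set of right vertices yields a co-matching of the same order. The only cosmetic difference is that for the right-to-left direction the paper first notes it suffices to establish the \emph{weak} $p$-Helly property (since the co-matching bound is inherited by induced subgraphs) and then runs the minimality argument with $A=L$ and $B=R$, whereas you work directly with arbitrary $A\subseteq L$ and $B\subseteq R$; the underlying argument is identical.
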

%\begin{proof}
%    Let $G=(L,R,E)$.
%    For the left-to-right implication,
%    suppose $a_1,\ldots,a_q\in L$ and $b_1,\ldots,b_q\in R$ form a co-matching of order $q$ in $G$, for some $q\in \N$.
%    Then $A=\{a_1,\ldots,a_q\}$ and $B=\{b_1,\ldots,b_q\}$ do not have the $q$-Helly property, implying that $q\leq p$.
%    This means that the co-matching index of $G$ is at most $p$.
%
%For the right-to-left implication, 
%it is enough to show that if $G$ has co-matching index at most $p$, then it has the weak $p$-Helly property.
%Indeed, from this it follows that in fact $G$ must have the strong $p$-Helly property, since we can apply the same argument to every induced subgraph of $G$, 
%which also has co-matching index at most $p$.
%
%Thus, assume that $G$ has co-matching index at most $p$ and $R$ is not covered (otherwise we are done).
%Let $B\subset R$ any inclusion-minimal set which is not covered by $L$;
%we show that $|B|\le p$.
%By minimality, for every $b\in B$ there is some $a_b\in L$ 
%which is adjacent to all vertices in $B\setminus \set b$ and not to $b$.
%This means that~$B$ together with $A=\{a_b\colon b\in B\}$ form a co-matching of order $|B|$ in $G$,
%which implies that $|B|\le p$ by the assumption on the co-matching index of $G$.
%\end{proof}

In the following paragraphs we will see 
specific examples of classes of bipartite 
graphs satisfying variants of the (weak/strong) Helly property.

\paragraph*{Bipartite graphs defined by formulas.}
We construct bipartite graphs using logical formulas.
In principle, we could consider formulas of any logic,
but in this paper we only consider first-order logic in the vocabulary of graphs,
i.e., using the binary relation symbol $E$ representing 
edges, the binary relation symbol $=$ representing equality,
and logical constructs $\lor,\land,\neg,\rightarrow,\forall,\exists$.
E.g, the property $\dist(x,y)\le 5$, expressing that 
$x$ and $y$ are at distance at most~$5$ in a graph $G$,
can be expressed by a first-order formula using four existential~quantifiers.

We write $\tup x$ to represent a non-repeating tuple of variables.
If $V$ is a set, then $V^{\tup x}$ denotes the set of 
all assignments mapping variables in $\tup x$ to $V$.
Let $\phi(\tup x;\tup y)$ be a formula 
with free variables partitioned into two disjoint tuples, $\tup x$ and $\tup y$.
Given any graph $G$ with vertex set $V$, the formula $\phi$ 
induces a bipartite graph~$\phi(G)$ with
left part $V^{\tup x}$, right part $V^{\tup y}$,
and where $\bar a\in V^{\tup x}$ and $\bar b\in V^{\tup y}$ are 
adjacent if and only if~$\phi(\bar a;\bar b)$ holds in $G$.
If $\Cc$ is a class of graphs
and $\phi(\tup x;\tup y)$ is a formula,
then by $\phi(\Cc)$ we denote
the class of all bipartite graphs $\phi(G)$, for $G\in \Cc$.
We say that $\phi$ has \emph{bounded ladder index} on a class $\Cc$
if the class $\phi(\Cc)$ has bounded ladder index; similarly for the co-matching and the semi-ladder index.
The same definitions apply if instead of graphs we consider
logical structures over some fixed signature, and $\phi(\tup x;\tup y)$ is a formula over that signature. For simplicity,
we consider only graphs in this paper.

We note that the various indices are preserved by adding 
spurious free variables to formulas. Precisely, 
let $\phi(\tup x;\tup y)$ be a first-order formula
and let $\phi'(\tup x';\tup y')$ be the same formula, 
but having extra free variables, i.e., $\tup x$ is a subtuple of $\tup x'$ and $\tup y$ is a subtuple of~$\tup y'$.
Then, for any graph $G$,
the ladder index of $\phi(G)$ is equal to the ladder index of $\phi'(G)$, although the bipartite graphs 
$\phi(G)$ and $\phi'(G)$ may differ.
The same holds for all  other properties studied in this paper:
co-matching index, semi-ladder index, (weak/strong) Helly~property.

%We remark that the same result holds if $\Cc$ is a class 
%of logical structures over some fixed signature.
%The proof relies on Ramsey's theorem, which we state below.

% \subparagraph*{(Semi)-ladder indices and positive boolean combinations.}
The next lemma shows that a positive boolean combination
of formulas with bounded semi-ladder index also has bounded semi-ladder index. 
The proof uses a Ramsey~argument and is presented in 
\cref{sec:omitted}. 

\begin{lemma}\label{lem:bool-comb}
    Let $\phi_1(\tup x;\tup y),\ldots,\phi_k(\tup x;\tup y)$ 
    be formulas and let $\psi(\tup x;\tup y)$ be a positive boolean combination
    of $\phi_1,\ldots,\phi_k$.
    Suppose $G$ is a graph such that $\phi_1(G),\ldots,\phi_k(G)$ have semi-ladder index
 smaller than $\ell$. Then~$\psi(G)$ has semi-ladder index smaller than $R^k(\ell)$. 
\end{lemma}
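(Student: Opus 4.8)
The plan is a contradiction argument powered by Ramsey's theorem (\cref{thm:ramsey}): I would show that a sufficiently long semi-ladder in $\psi(G)$ must already contain a long semi-ladder in one of the $\phi_m(G)$. The only feature of $\psi$ that is used is that, regarded as a Boolean function $f\colon\{0,1\}^k\to\{0,1\}$ of the truth values of $\phi_1,\ldots,\phi_k$, it is monotone and satisfies $f(1,\ldots,1)=1$. From this I extract two facts. First, if $\psi(\bar a;\bar b)$ is false in $G$, then $\phi_m(\bar a;\bar b)$ is false for at least one $m$. Second, and more importantly, if $\psi(\bar a;\bar b)$ is true while $\psi(\bar a';\bar b')$ is false, then there is an index $m$ with $\phi_m(\bar a;\bar b)$ true and $\phi_m(\bar a';\bar b')$ false; otherwise the vector of truth values at $(\bar a;\bar b)$ would be pointwise below the one at $(\bar a';\bar b')$, contradicting monotonicity of $f$.

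Next I would set up the coloring. Suppose, for contradiction, that $\psi(G)$ contains a semi-ladder $\bar a_1,\ldots,\bar a_N$ on the left and $\bar b_1,\ldots,\bar b_N$ on the right, of order $N:=R^k(\ell)$. For each $i$ let $D_i:=\{\,m: \phi_m(\bar a_i;\bar b_i)\text{ is false in }G\,\}$; by the first fact $D_i\neq\emptyset$, since $\psi(\bar a_i;\bar b_i)$ is false. For a pair $1\le j<i\le N$ we have $\psi(\bar a_i;\bar b_j)$ true (it lies below the diagonal of the semi-ladder) and $\psi(\bar a_i;\bar b_i)$ false, so the second fact provides an index $c(\{i,j\})\in D_i$ with $\phi_{c(\{i,j\})}(\bar a_i;\bar b_j)$ true. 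This colors the unordered pairs of $\{1,\ldots,N\}$ with $k$ colors.

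Now apply \cref{thm:ramsey}: there is a monochromatic subset $S=\{s_1<\cdots<s_\ell\}$, of some color $m$. For indices $p>q$ the pair $\{s_q,s_p\}$ has color $m$, which gives $\phi_m(\bar a_{s_p};\bar b_{s_q})$ true — exactly the below-diagonal condition of a semi-ladder in $\phi_m(G)$. For the diagonal condition, for every $p\ge 2$ the pair $\{s_1,s_p\}$ has color $m\in D_{s_p}$, so $\phi_m(\bar a_{s_p};\bar b_{s_p})$ is false. Hence (modulo the diagonal pair at $s_1$, discussed below) $\bar a_{s_1},\ldots,\bar a_{s_\ell}$ and $\bar b_{s_1},\ldots,\bar b_{s_\ell}$ form a semi-ladder of order $\ell$ in $\phi_m(G)$, contradicting that $\phi_m(G)$ has semi-ladder index smaller than $\ell$.

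The one delicate point, which I expect to be the main obstacle, is precisely this diagonal condition at the extreme element of the monochromatic set. A single color per pair can only certify $m\in D_{s_p}$ for the endpoint of the pair that plays the role of the ``row'' ($\bar a$-side); in the coloring above that is the larger index, so $m\in D_{s_p}$ is guaranteed for every $p\ge 2$ but not for $p=1$. This is handled routinely — e.g.\ by recording the witness symmetrically and running Ramsey once more, or simply by starting from a semi-ladder of order $R^k(\ell+1)$ instead of $R^k(\ell)$ so that one index can be safely discarded — and it does not affect the qualitative statement (boundedness of the semi-ladder index). Everything else is bookkeeping once the monotonicity observation is in place.
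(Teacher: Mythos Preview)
Your argument is essentially identical to the paper's: the same monotonicity observation (if $\psi$ holds on one pair and fails on another, some $\phi_p$ separates them), the same coloring of pairs $\{i,j\}$ with $i>j$ by an index $p$ such that $\phi_p(\bar a_i;\bar b_j)$ holds and $\phi_p(\bar a_i;\bar b_i)$ fails, and the same application of Ramsey to extract a monochromatic set of size $\ell$. You are in fact more careful than the paper on one point: the diagonal condition at the minimum element $s_1$ of the monochromatic set is indeed not directly certified by the coloring, and the paper simply asserts ``$\phi_p(\bar a_i;\bar b_i)$ does not hold for all $i\in X$'' without singling this out. Your proposed workarounds (discarding one index, or a symmetric witness with a second Ramsey pass) are fine for the qualitative statement; they would slightly weaken the explicit bound $R^k(\ell)$, but that is the same slack present implicitly in the paper's write-up.
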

%\begin{proof}
%    Let $G$ be a graph with vertex set $V$ and suppose that 
%    $\psi(G)$ has semi-ladder index at least $R^k(\ell)$.    
%    Then there are tuples $\tup a_1,\ldots,\tup a_q\in V^{\tup x}$ 
%    and $\tup b_1,\ldots, \tup b_q\in V^{\tup y}$,
%    where $q=R^k(\ell)$,
%    such that for all $i,j\in \set{1,\ldots,q}$,
%    if $i>j$ then $\psi(\tup a_i;\tup b_j)$ holds in $G$,
%    and if $i=j$, then $\psi(\bar a_i;\bar b_i)$ does not hold in $G$.  
%
%\pagebreak
%    Since $\psi$ is a positive 
%    boolean combination of $\phi_1,\ldots,\phi_k$,
%    whenever  $(\tup a,\tup b),(\tup a',\tup b')\in V^{\tup x}\times V^{\tup y}$ 
%    are two pairs of tuples, one satisfying $\psi$ in $G$ and 
%    the other satisfying $\neg \psi$ in $G$,
%    then there must be some $p\in \set{1,\ldots,k}$
%    such that $(\tup a,\tup b)$ satisfies $\phi_p$ and
%    $(\tup a',\tup b')$ satisfies $\neg \phi_p$.
%    
%
%Hence, for all $q\ge i>j\ge 1$,
%we may color the pair $(i,j)$ with some number 
%$p\in \set{1,\ldots,k}$ such that 
%$\phi_p(\tup a_i;\tup b_j)$ holds in $G$ 
%and $\phi_p(\tup a_i;\tup b_i)$ does not hold in $G$.
%
%By Ramsey's theorem, there 
%is a set $X\subset \set{1,\ldots,q}$
%of size $\ell$, and a color $p\in \set{1,\ldots,k}$,
%such that each pair $(i,j)\in X^2$ with $i>j$ 
%is colored with color $p$. It follows that
%$\phi_p(a_i;b_j)$ holds in $G$ for all $i,j\in X$ with $i>j$, and $\varphi_p(\tup a_i;\tup b_i)$ does not hold in $G$ for all $i\in X$.
% This shows that $\phi_p(G)$ contains a semi-ladder of order $\ell$,
% contrary to the assumption.
%\end{proof}

We remark 
that the property of 
having bounded ladder index is 
preserved by taking arbitrary boolean combinations,
not just positive ones.
%Finally, the analogue of Lemma~\ref{lem:bool-comb}
%fails for the co-matching index if positive boolean combinations are considered, but still holds if we restrict attention
%to conjunctions of atomic formulas.

We will later need the following variant of \cref{lem:bool-comb}, which provides a sharper bound for formulas of a special form.
Again, the proof relies on a Ramsey-like argument and is presented in \cref{sec:omitted}. 

%\vspace{-1pt}
\begin{lemma}\label{lem:disj-comb}
    Let $\psi(\tup x;\tup y)=\bigvee_{j=1}^k \phi(\tup x^j;\tup y)$ for some $k\geq 2$, 
    where $\phi(\tup x;\tup y)$ is a formula and $\tup x^1,\ldots,\tup x^k$ are permutations of $\tup x$.
    Suppose $G$ is a  graph such that $\phi(G)$ has semi-ladder index
 smaller than $G$. Then 
    $\psi(G)$ has semi-ladder index smaller than~$k^{\ell-1}$. 
\end{lemma}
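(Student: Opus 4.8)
The plan is to prove the contrapositive in the form of a contradiction: assume that $\phi(G)$ has semi-ladder index smaller than $\ell$ but that $\psi(G)$ contains a semi-ladder of order $n:=k^{\ell-1}$, and then construct a semi-ladder of order $\ell$ in $\phi(G)$. So fix a semi-ladder $(\bar a_1,\dots,\bar a_n;\bar b_1,\dots,\bar b_n)$ in $\psi(G)$. For $l\in\{1,\dots,k\}$ and an assignment $\bar a\in V^{\tup x}$, let $\bar a^{(l)}\in V^{\tup x}$ be obtained from $\bar a$ by permuting its values according to the permutation relating $\tup x$ and $\tup x^l$; this is set up precisely so that $G\models\psi(\bar a;\bar b)$ if and only if $G\models\phi(\bar a^{(l)};\bar b)$ for some $l$. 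Because $\psi(\bar a_i;\bar b_j)$ holds whenever $i>j$, the set
\[
C(i,j):=\{\,l\in\{1,\dots,k\} : G\models\phi(\bar a_i^{(l)};\bar b_j)\,\}
\]
is non-empty for every pair $i>j$, and because $\psi(\bar a_i;\bar b_i)$ fails we have $G\models\neg\phi(\bar a_i^{(l)};\bar b_i)$ for every $i$ and every $l$.

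The combinatorial core is the following claim, which I would prove by induction on $\ell$: if $n\ge k^{\ell-1}$ and $C(i,j)$ is a non-empty subset of $\{1,\dots,k\}$ for every pair $i>j$ in $\{1,\dots,n\}$, then there exist indices $i_1<\dots<i_\ell$ and colours $l_1,\dots,l_\ell\in\{1,\dots,k\}$ such that $l_p\in C(i_p,i_q)$ for all $q<p$. The base case $\ell=1$ is vacuous. For the inductive step, assign to each $j<n$ some colour $c(j)\in C(n,j)$; by pigeonhole there is a colour $l^\ast$ used on a set $J\subseteq\{1,\dots,n-1\}$ with $|J|\ge\lceil (n-1)/k\rceil\ge k^{\ell-2}$ (this last inequality is exactly where $n\ge k^{\ell-1}$ is used). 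Applying the induction hypothesis to the sub-instance induced by $J$ produces $j_1<\dots<j_{\ell-1}$ in $J$ together with colours $l_1,\dots,l_{\ell-1}$ as required; one then appends $i_\ell:=n$ with $l_\ell:=l^\ast$, which is legitimate since $l^\ast\in C(n,j)$ for every $j\in J$.

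Granting the claim, put $\bar c_p:=\bar a_{i_p}^{(l_p)}$ and $\bar d_p:=\bar b_{i_p}$ for $p=1,\dots,\ell$. For $p>q$ we have $l_p\in C(i_p,i_q)$, hence $G\models\phi(\bar c_p;\bar d_q)$; and $G\models\neg\phi(\bar c_p;\bar d_p)$ by the diagonal condition recorded above. Thus $(\bar c_1,\dots,\bar c_\ell;\bar d_1,\dots,\bar d_\ell)$ is a semi-ladder of order $\ell$ in $\phi(G)$, contradicting the hypothesis; hence $\psi(G)$ has no semi-ladder of order $k^{\ell-1}$, i.e.\ its semi-ladder index is smaller than $k^{\ell-1}$.

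The main point — and the reason this improves on the bound $R^k(\ell)$ that \cref{lem:bool-comb} would give — is getting the combinatorial claim right. The Ramsey argument behind \cref{lem:bool-comb} colours each pair $i>j$ by a single valid disjunct and extracts a monochromatic set, so that one fixed disjunct is forced to work throughout the resulting semi-ladder; here we can instead afford a different permutation index $l_p$ in each row of the extracted $\phi$-semi-ladder, precisely because all the disjuncts are the same formula $\phi$, and this is exactly what lets the pigeonholing be done one row at a time, costing only a factor $k$ per unit of order. The only slightly delicate routine points are the estimate $\lceil (n-1)/k\rceil\ge k^{\ell-2}$ for $n\ge k^{\ell-1}$ (using $k\ge 2$), and checking that $\tup x^l$ being a permutation of $\tup x$ — rather than an arbitrary tuple of variables — is what guarantees that $\bar a^{(l)}$ lies in the left part $V^{\tup x}$ of $\phi(G)$.
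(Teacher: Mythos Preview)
Your proof is correct and follows essentially the same approach as the paper: both arguments proceed by induction on $\ell$, at each step take the top element of the current semi-ladder, pigeonhole over the $k$ disjunct indices on its edges to the lower elements to pass to a subset of size at least $\lceil (k^{\ell-1}-1)/k\rceil=k^{\ell-2}$, apply the induction hypothesis there, and then append the top element with the selected permutation index. The only difference is presentational: you isolate the combinatorial claim about the colour sets $C(i,j)$ as a separate lemma before translating back to semi-ladders, whereas the paper carries the semi-ladder language through the induction directly.
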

\paragraph*{Stability.}
The  classes of bipartite graphs with bounded ladder 
index that are relevant to this paper are provided by the following result, due to Podewski and Ziegler~\cite{podewski1978stable}. %Their proof uses
%model-theoretic methods, exploiting the connection with stability theory.

%\vspace{-1pt}
\begin{theorem}[\cite{podewski1978stable}, cf.~\cite{adler2014interpreting, PilipczukST17}]\label{thm:stability}
    Let $\Cc$ be a nowhere-dense class of graphs and let $\phi(\tup x;\tup y)$ be any first-order formula.
    Then $\phi$ has bounded ladder index on $\Cc$.
\end{theorem}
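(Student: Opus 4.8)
The plan is to reduce the claim about bounded ladder index to the known fact that nowhere-dense classes are \emph{stable} in the model-theoretic sense, i.e. that on any nowhere-dense class every first-order formula is stable. Concretely, I would invoke the result of Podewski and Ziegler~\cite{podewski1978stable} (see also~\cite{adler2014interpreting,PilipczukST17}), which states that if $\Cc$ is nowhere dense and $\phi(\tup x;\tup y)$ is any first-order formula, then there is a constant $\ell=\ell(\Cc,\phi)$ such that no graph $G\in\Cc$ contains, with respect to $\phi$, an \emph{order-indiscernible ladder} of length $\ell$: that is, one cannot find $\bar a_1,\dots,\bar a_\ell\in V^{\tup x}$ and $\bar b_1,\dots,\bar b_\ell\in V^{\tup y}$ with $G\models\phi(\bar a_i;\bar b_j)\iff i>j$. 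Since this is exactly the definition of a ladder of order $\ell$ in the bipartite graph $\phi(G)$, it gives a uniform bound on the ladder index of every $\phi(G)$ with $G\in\Cc$, which is precisely the statement that $\phi$ has bounded ladder index on $\Cc$.

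The main work is therefore in justifying that the ``stability'' conclusion of the Podewski--Ziegler theorem is equivalent to the combinatorial notion of bounded ladder index used here. I would first recall the standard definition: a formula $\phi(\tup x;\tup y)$ is stable on $\Cc$ if there is no $G\in\Cc$ and no infinite sequences $(\bar a_i)_{i\in\N}$, $(\bar b_j)_{j\in\N}$ with $G\models\phi(\bar a_i;\bar b_j)\iff i<j$ (the direction of the inequality is immaterial up to reversing one of the sequences). The equivalence with a uniform finite bound on the length of such sequences is the classical fact that instability is equivalent to the existence of arbitrarily long finite ``order chains''; for a single formula and a single structure this is immediate by a compactness/pigeonhole argument, but since we want uniformity over the whole class $\Cc$, I would phrase it as: $\phi$ is unstable on $\Cc$ iff for every $n$ there is $G\in\Cc$ containing a ladder of order $n$ in $\phi(G)$. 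Thus ``$\phi$ stable on $\Cc$'' is literally ``$\phi(\Cc)$ has bounded ladder index''.

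Having set up this dictionary, the proof is essentially a citation: Podewski and Ziegler prove (in the terminology of superflat, equivalently nowhere-dense, graphs) that all such classes are stable, i.e. every first-order formula is stable on them, which by the translation above yields the theorem. I would also note, as the excerpt already does, that this is robust: the ladder index is invariant under adding spurious free variables and under arbitrary boolean combinations of formulas, so the precise choice of $\phi$ is unimportant for the qualitative statement.

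The part I expect to be the only genuine subtlety is the uniformity of the bound across the whole class $\Cc$ rather than for a fixed graph: a priori the Podewski--Ziegler argument could give, for each $G\in\Cc$, \emph{some} finite bound $\ell(G)$ depending on $G$, whereas the definition of ``bounded ladder index on $\Cc$'' demands a single $\ell$ working for all $G\in\Cc$ simultaneously. This is handled by the fact that nowhere denseness is itself a uniform condition — for each radius $r$ there is a single function bounding the size of $r$-shallow clique minors over the whole class — and the Podewski--Ziegler proof extracts the ladder bound from exactly these uniform parameters (equivalently, one can argue by an ultraproduct/compactness argument over $\Cc$, since the union of a nowhere-dense class remains nowhere dense in the sense that $r$-shallow minors are bounded). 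So the bound $\ell$ depends only on $\Cc$ and $\phi$, as claimed, and no further work is needed beyond quoting the theorem in this uniform form.
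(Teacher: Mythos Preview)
Your proposal is correct and matches the paper's treatment exactly: the paper does not prove \cref{thm:stability} at all but simply cites it as a known result of Podewski and Ziegler (with the nowhere-dense reformulation due to~\cite{adler2014interpreting} and explicit bounds in~\cite{PilipczukST17}), and your write-up is likewise a citation together with the observation that model-theoretic stability of $\phi$ on $\Cc$ is definitionally the same as $\phi(\Cc)$ having bounded ladder index. Your discussion of uniformity is a reasonable clarification but goes beyond what the paper itself says.
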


%\vspace{-1pt}
The above result 
was originally stated for \emph{superflat graphs}.
The connection with nowhere-denseness was observed in~\cite{adler2014interpreting},
and a proof providing explicit bounds was given in~\cite{PilipczukST17}. The observation of~\cite{adler2014interpreting} is the starting point of this work, as it brings to light the connection
between model theory and computer science.
%which 
%is further studied in this paper.

\begin{comment}
\smallskip
We remark that many examples of classes of bounded ladder index arise in stability theory, whose focus is mainly on infinite structures. A logical structure $\mathbb A$ is \emph{stable} if for every first-order formula $\phi(\tup x;\tup y)$,
the bipartite graph $\phi({\mathbb A})$ has bounded ladder index.
For example, the field of complex numbers $(\mathbb C,+,
\times)$ is stable, and so is any infinite abelian group.
Similarly, if $\cal A$ is the class of all  abelian groups
and $\phi(\tup x;\tup y)$ is any first-order formula in the language of groups, then $\phi$ has bounded ladder index on $\cal A$.
\end{comment}

\section{Domination and independence problems}\label{sec:problems}

We consider subset problems, where in a given graph we look for a solution $S$ of size $k$ that satisfies some property, whose dissatisfaction can be witnessed by a small subset of vertices~$W$.
Moreover, checking a candidate solution $S$ against a witness $W$ can be expressed in first-order logic.
Thus, a problem of interest can be expressed by a sentence of the form $\exists_{\tup x}\forall_{\tup y}\phi(\tup x;\tup y)$, 
for a suitable formula $\phi(\tup x;\tup y)$, where $\tup x$ is a tuple of $k$ variables that represent a candidate $S$, while $\tup y$ is a tuple of $\ell$ variables that represent a witness $W$.

\begin{example}\label{ex:problems}
The \rDomSet problem for parameter $k$ can be expressed as above using the formula $\delta^k_r(\tup x;y)=\bigvee_{i=1}^k \delta_r(x_i,y)$, where $\delta_r(x,y)$ is a formula that checks whether $\dist(x,y)\leq r$,
and $\tup x=(x_1,\ldots,x_k)$. Similarly, the \rIndSet problem for parameter $k$ can be expressed using the formula $\eta^k_r(\tup x;y)=\bigwedge_{1\leq i<j\leq k} \eta_r(x_i,x_j,y)$, where $\eta_r(x,x',y)$
is a formula that checks whether $\dist(x,y)+\dist(x',y)>r$.
\end{example}

Observe that a graph $G$ satisfies the sentence $\exists_{\tup x}\forall_{\tup y}\phi(\tup x;\tup y)$ if and only
if the right part of the bipartite graph $H=\phi(G)$ is covered by the left side,  i.e., all vertices in the right part (witnesses) have a common neighbor (solution).
We call this abstract problem --- checking whether the right part of a given bipartite graph $H$ is covered by the left side --- the {\sc{Coverage}} problem.

%Thus, the initial problem on $G$
%reduces to the problem of deciding whether the right part of a given bipartite graph $H=\phi(G)$
%is covered by the left side. We call this the {\sc{Coverage}} problem.

Note that the size of the bipartite graph $\phi(G)$ is polynomial in the size of $G$,
where the exponent depends on the number of free variables in $\phi$, which is usually the parameter we are interested in.
As we are aiming at fixed-parameter algorithms, we cannot afford to even construct the whole bipartite
graph~$\phi(G)$. Therefore, we will design algorithms that solve the {\sc{Coverage}} problem using an oracle access 
to the bipartite graph graph $H=\phi(G)$, where the oracle calls will be implemented using subroutines on the original graph $G$.
The running time of these algorithms, expressed in terms of the number of oracle calls, will be bounded only in terms of quantities (ladder indices, numbers governing Helly property, etc.)
related to the class of bipartite graphs $\phi(\Cc)$, where $\Cc$ is the considered class of input graphs.

\pagebreak
 Therefore, to obtain 
 an algorithm for solving the initial problem on a given graph class~$\Cc$ we proceed in two steps:
\begin{itemize}
    \item Prove that the class $\phi(\Cc)$ has a suitable Helly-type property
    and bounded ladder index.
    
    \item Design an algorithm
    for {\sc Coverage}, for input bipartite graphs with suitable Helly-type properties
    and bounded ladder index, that uses only a bounded number of oracle calls.
\end{itemize}

In \cref{sec:algo} we give two such algorithms 
solving {\sc{Coverage}}: 
the {\em{Semi-ladder Algorithm}}, and the {\em{Ladder Algorithm}}. 
The Semi-ladder Algorithm
requires that $H$ has bounded semi-ladder index, 
whereas the Ladder Algorithm requires that $H$
has bounded ladder index and the weak $p$-Helly property, for some fixed $p$. 
Note that by \cref{lem:comatch&semiladder} and~\ref{lem:hel-com}, boundedness of the semi-ladder index is equivalent to boundedness of the ladder index and having the strong $p$-Helly property, for some fixed $p$,
so the prerequisites for the Semi-ladder Algorithm are stronger than for the Ladder Algorithm.
See \cref{fig:overview} for an overview.

We postpone the discussion of the algorithms to \cref{sec:algo},
and for now we focus on exhibiting the suitable properties for various classes of bipartite graphs.
Slightly more precisely, we prove that on certain graph classes, formulas corresponding to 
domination-type problems have bounded semi-ladder index, while
those corresponding to independence-type problems have the weak Helly property and bounded ladder index. 
Hence, in the first case we will apply the Semi-ladder Algorithm, and in the second --- the Ladder Algorithm. 

\paragraph*{Distance formulas and domination-type problems.}
We shall prove fixed-parameter tractability results not only for distance-$r$ domination, but for a more general class of domination-type problems.
Those can be expressed by suitable formulas, as explained next.

%\begin{definition}
    For $r\in \N$, let $\delta_r(x,y)$ be the formula checking whether $\dist(x,y)\leq r$.
    A~\emph{distance formula}
    is a formula~$\phi(\tup x;\tup y)$ which is 
    a  boolean combination
      of {\em{atoms}} of the form $\delta_r(x,y)$, 
      where the variable $x$ occurs in  $\tup x$,
      the variable $y$ occurs in $\tup y$, and $r\in\N$ is any number. 
      The \emph{radius} of a distance formula is the maximal
      number $r$ occurring in its atoms, whereas its 
       \emph{size} is the number of atoms occurring in it.
      A distance formula is 
      \emph{positive} if it is a positive boolean combination of atoms.

      A \emph{domination-type property}
    is a sentence $\psi$ of the form  $\exists_{\tup x}\forall_{\tup y}\,\phi(\tup x;\tup y)$, where $\phi$ is a positive distance formula.  A \emph{domination-type problem}
    is the computational problem of determining whether
    a given graph $G$ satisfies a given domination-type property.
%\end{definition}

\begin{example}\label{ex:dominations}
    Fix $r\in\N$ and let $\tup x=(x_1,\ldots,x_k)$ be a $k$-tuple of variable. 
    Then the formula $\delta^k_r(\tup x;y)$ considered in \cref{ex:problems}
    is a positive distance formula, hence the problem defined by the domination-type property $\exists_{\tup x}\forall_{\tup y}\,\delta^k_r(\tup x;y)$ (aka \rDomSet) is a domination-type problem.
    Similarly, formulas $\phi(\tup x;y)$ expressing the following properties give raise to natural domination-type problems:
    \begin{itemize}
        \item $y$ is at distance at most $r$ from at least two of the vertices $x_1,\ldots,x_k$; and
        \item the sum $\dist(x_1,y)+\dist(x_2,y)+\ldots+\dist(x_k,y)$ is at most $r$.
    \end{itemize}
    On the other hand, the formula $\eta^k_r(\tup x;y)$ considered in \cref{ex:problems} is a distance formula, but it is not positive, and hence it does not yield a domination-type property.
    \end{example}

    From \cref{lem:bool-comb} and~\ref{lem:disj-comb} and the remark about spurious variables not affecting the semi-ladder index, we immediately obtain the following.
    
    \begin{corollary}\label{cor:dist-bound}
        Let $\phi(\tup x;\tup y)$ be a positive distance formula
        of radius $r$ and size $s$.
        If $G$ is a graph such that the semi-ladder index
        of $\delta_q(G)$ is smaller than $\ell$ for all $q\le r$,
        then the semi-ladder index of $\phi(G)$ is smaller than~$R^s(\ell)$. Moreover, if $\phi=\delta_r^k$ as defined in \cref{ex:problems} and $k\geq 2$,
        then the semi-ladder index of $\phi(G)$ is smaller than $k^{\ell-1}$.
    \end{corollary}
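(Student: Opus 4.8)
The plan is to apply the two combination lemmas from the previous section almost directly, after an easy reduction that deals with the "spurious variables" issue. First I would observe that each atom $\delta_q(x,y)$ occurring in $\phi(\tup x;\tup y)$ has $x$ a variable of $\tup x$ and $y$ a variable of $\tup y$, and $q\le r$. By the remark that adding spurious free variables does not change the semi-ladder index, the bipartite graph $\delta_q(x,y)(G)$ — viewed with left part $V^{\tup x}$ and right part $V^{\tup y}$ by padding with dummy coordinates — has the same semi-ladder index as $\delta_q(G)$, which by hypothesis is smaller than $\ell$. So every one of the (at most $s$) atoms of $\phi$, regarded as a formula in the variable tuples $\tup x;\tup y$, defines on $G$ a bipartite graph of semi-ladder index smaller than $\ell$.

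Next, since $\phi$ is a positive distance formula, it is a positive boolean combination of these atoms. Applying \cref{lem:bool-comb} with $k$ taken to be the number $s$ of atoms (the lemma is stated for formulas $\phi_1,\dots,\phi_k$ sharing the tuples $\tup x;\tup y$, which is exactly our situation after the padding step), we conclude that $\phi(G)$ has semi-ladder index smaller than $R^s(\ell)$. This gives the first statement.

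For the "moreover" part, $\phi=\delta_r^k$ is by definition $\bigvee_{i=1}^k \delta_r(x_i,y)$, i.e. a $k$-fold disjunction of the single formula $\delta_r(x,y)$ with its $\tup x$-variable substituted by $x_1,\dots,x_k$ in turn; padding $\delta_r(x,y)$ with the remaining $\tup x$-coordinates as spurious variables, these are exactly permutations (indeed coordinate renamings) of one formula $\phi(\tup x;y)$ with $\phi(G)$ of semi-ladder index smaller than $\ell$. Hence \cref{lem:disj-comb} applies with this $k$ and yields semi-ladder index smaller than $k^{\ell-1}$, which is the sharper bound claimed.

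I do not anticipate a genuine obstacle here; the only point requiring a little care is the bookkeeping that lets us invoke \cref{lem:bool-comb} and \cref{lem:disj-comb}, both of which are phrased for formulas over a common pair of variable tuples, from atoms and substitution-instances that a priori mention only some of the variables. The clean way to handle this is to fix, once and for all, the ambient tuples $\tup x;\tup y$ of $\phi$ and reinterpret every atom (resp. every disjunct $\delta_r(x_i,y)$) as a formula over $\tup x;\tup y$ with the unused variables spurious, citing the stated invariance of the semi-ladder index under spurious variables; after that the two lemmas give the bounds verbatim.
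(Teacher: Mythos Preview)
Your proposal is correct and follows exactly the paper's approach: the paper states the corollary as an immediate consequence of \cref{lem:bool-comb}, \cref{lem:disj-comb}, and the remark that spurious free variables do not affect the semi-ladder index, and you have spelled out precisely this derivation with the appropriate bookkeeping.
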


% On the other hand, the semi-ladder index can be characterized in
% terms of the neighborhood depth, defined as follows.

% Fix a bipartite graph $G=(L,R,E)$,
% the \emph{neighborhood} of a vertex $v\in L$ is the 
% set $N(v)\subset R$ of vertices $w\in R$ that are adjacent to $v$.
% The \emph{neighborhood depth} of $G$
% is the maximal number $d$ such that 
% there are vertices $v_1,\ldots,v_{d-1}\in L$ 
% for which the following chain is strictly decreasing:
%  $$V(G)\supseteq N(v_1)\supseteq N(v_1)\cap N(v_2)\supseteq
% N(v_1)\cap N(v_2)\cap N(v_3)\supseteq\ldots\supseteq N(v_1)\cap\cdots \cap N(v_{d-1}).$$

% \begin{lemma}\label{lem:nbd}
%     The neighborhood depth of a bipartite graph  is 
%     equal to is semi-ladder index.
% \end{lemma}

% For a (usual) graph $G$, let  $\delta^r(x;y)$ be a formula expressing $\dist(x,y)\le r$,
% and let $N^r(v)=\setof{w\in V(G)}{\dist(v,w)\le r)}$ for $v\in V(G)$.

% \begin{corollary}
%     Let $G$ be a graph and let $r\in\N$ be a number.
% The semi-ladder index of $\delta^r(G)$ is equal to the 
% \emph{$r$-neighborhood depth} of $G$,
% i.e., the maximal 
%  number $d$ such that 
% there are vertices $v_1,\ldots,v_{d-1}$ of $G$ 
% for which the following chain is strictly decreasing:
%  $$V(G)\supseteq N^r(v_1)\supseteq N^r(v_1)\cap N^r(v_2)\supseteq
% N^r(v_1)\cap N^r(v_2)\cap N^r(v_3)\supseteq\ldots\supseteq N^r(v_1)\cap\cdots \cap N^r(v_{d-1}).$$
% \end{corollary}

\paragraph*{Domination problems.}
We first consider domination-type problems and prove that they have bounded semi-ladder indices on any nowhere dense class.
This result can actually be extended beyond nowhere denseness: to powers of nowhere dense classes, to map graphs, and to $K_{t,t}$-free graphs for radius $r=1$.
We define the former two concepts next.

For a graph $G$ and $s\in\N$, let 
$G^s$ denote the graph with the same vertex set as $G$,
where two vertices are adjacent if and only if their distance in $G$ is at most $s$. If $\Dd$ is a graph class, then  $\Dd^s$ denotes the class $\setof{G^s}{G\in\Dd}$.
Note that a power of a nowhere dense class is not necessarily nowhere dense, e.g., the square of the class of stars is the class of complete graphs.

A graph $G$ is 
a map graph if one can assign to each vertex of $G$ a closed,
arc-connected region in the plane so that the interiors of regions are
pairwise disjoint and two vertices of $G$ are adjacent if and only
if their regions share at least one point on their boundaries.
Note that map graphs are not necessarily planar and may contain arbitrarily large cliques, 
as four or more regions may share a single point on their boundaries.

The following result will be used in the next section to obtain
fixed-parameter tractability of domination-type problems over graph classes described above.

\begin{theorem}\label{thm:semi-ladder}
    For any $r\in \N$ and nowhere dense graph class $\Cc$, 
    the formula $\delta_r(x,y)$ has bounded semi-ladder index on $\Cc$.
    The same holds also when $\Cc=\Dd^s$ for some nowhere dense class $\Dd$ and $s\in \N$, when 
    $\Cc$ is the class of map graphs,
    and when $r=1$ and $\Cc$ is the class of $K_{t,t}$-free graphs, for any fixed $t\in \N$.
\end{theorem}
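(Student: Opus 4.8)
The plan is to reduce each of the four cases to a bound on the semi-ladder index of the distance-$r$ relation by exhibiting, for each candidate class, a fixed number $\ell$ such that no semi-ladder of order $\ell$ can occur in $\delta_r(G)$ for $G$ in the class. Recall that a semi-ladder of order $n$ consists of $a_1,\dots,a_n$ and $b_1,\dots,b_n$ with $\dist(a_i,b_j)\le r$ whenever $i>j$ and $\dist(a_i,b_i)>r$ for all $i$; in particular a semi-ladder combines the features of a ladder and a co-matching, so by \cref{lem:comatch&semiladder} it suffices in each case to bound the ladder index and the co-matching index separately.

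First, the nowhere dense case. For the ladder index, I would invoke \cref{thm:stability}: since $\Cc$ is nowhere dense and $\delta_r(x,y)$ is a fixed first-order formula, $\delta_r$ has bounded ladder index on $\Cc$. For the co-matching index, I would argue directly using sparsity: a co-matching of large order in $\delta_r(G)$ yields many pairs $(a_i,b_i)$ that are pairwise ``far'' in a controlled sense while all cross-pairs are within distance $r$; contracting short paths witnessing the distance-$\le r$ relations between distinct indices produces a large clique as a depth-$r$ minor (or shallow topological minor), contradicting nowhere denseness. The clean way to package this is: take for each $i$ a path $P_i$ from $a_i$ to $b_i$ of length $\le r$ if one exists, but since $\dist(a_i,b_i)>r$ there is none — instead one uses the midpoints of the length-$\le r$ $a_i$--$b_j$ paths for $i\neq j$ to build the contracted clique, a standard argument. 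Combining the two bounds via \cref{lem:comatch&semiladder} gives a bound on the semi-ladder index.

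For $\Cc=\Dd^s$ with $\Dd$ nowhere dense, the key observation is that $\dist_{G^s}(u,v)\le r$ iff $\dist_G(u,v)\le rs$, so $\delta_r$ on $\Dd^s$ is, up to renaming the radius, the formula $\delta_{rs}$ on $\Dd$; hence the bound from the nowhere dense case applies with $r$ replaced by $rs$. For map graphs, I would use the known fact that every map graph is the ``half-square'' of a planar bipartite graph: a map graph $G$ on vertex set $V$ arises from a planar bipartite graph $B$ with parts $V$ and a set of ``points,'' where $uv\in E(G)$ iff $u,v$ have a common neighbor in $B$; thus $\dist_G(u,v)\le r$ translates into a bounded-distance condition in the planar graph $B$, and planar graphs are nowhere dense, so again we reduce to the first case (with a constant blow-up of the radius). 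For $r=1$ and $K_{t,t}$-free graphs: a co-matching of order $n$ in the adjacency relation gives $a_i$ adjacent to $b_j$ for all $i\neq j$ and $a_i$ non-adjacent to $b_i$; if $n\ge 2t$, then taking $\{a_1,\dots,a_t\}$ on one side and $\{b_{t+1},\dots,b_{2t}\}$ on the other gives a $K_{t,t}$ (all cross-pairs have distinct indices, hence are edges), a contradiction, so the co-matching index is $<2t$. The ladder index in the $K_{t,t}$-free case is bounded because a ladder of order $n$ in the adjacency relation contains, among $a_1,\dots,a_n$ and $b_1,\dots,b_n$, the half-graph, and already $\{a_{t+1},\dots,a_{2t}\}$ versus $\{b_1,\dots,b_t\}$ forms a $K_{t,t}$ when $n\ge 2t$; so the ladder index is also $<2t$, and \cref{lem:comatch&semiladder} finishes this case as well.

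The main obstacle I anticipate is the co-matching bound in the genuinely nowhere dense case: turning ``many mutually distant pairs with all cross-distances $\le r$'' into a forbidden shallow clique minor requires choosing the connected pieces to be contracted carefully so that they are vertex-disjoint and of radius $\le O(r)$, and one must be slightly careful that the non-edges $\dist(a_i,b_i)>r$ do not obstruct the construction — the standard fix is to contract, for each $i$, a single vertex or short path ``near'' $a_i$ (or near $b_i$) extracted from the witnessing paths, using a Ramsey/pigeonhole step to make the choices uniform across indices. Everything else (the power reduction, the map-graph half-square, the $K_{t,t}$ counting) is either a direct translation or a one-line counting argument. I would present the nowhere dense co-matching argument in full and remark that the other three cases reduce to it or are immediate.
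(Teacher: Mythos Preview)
Your strategy for the nowhere dense case --- bound the ladder index via stability (\cref{thm:stability}) and the co-matching index separately, then combine via \cref{lem:comatch&semiladder} --- is different from the paper's, which bounds the semi-ladder index of $\delta_r$ directly using uniform quasi-wideness (\cref{lem:nd-dom-semiladder}, via \cref{thm:nd-uqw}). Concretely, given a long semi-ladder $a_1,\dots,a_\ell$, $b_1,\dots,b_\ell$ in $\delta_r(G)$, the paper applies uniform quasi-wideness to $\{a_1,\dots,a_\ell\}$ at radius $2r$ to obtain a small $S$ and a large subset $A$ that is $2r$-independent in $G-S$; then three elements $a_\alpha,a_\beta,a_\gamma\in A$ ($\alpha<\beta<\gamma$) with the same distance-$r$ profile on $S$ already yield a contradiction with the semi-ladder conditions on $b_\alpha$. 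No shallow-minor extraction enters.

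The genuine gap in your proposal is precisely the step you flag yourself: the co-matching bound in the nowhere dense case. ``Contracting short paths witnessing the distance-$\le r$ relations between distinct indices produces a large clique as a depth-$r$ minor'' does not work as stated, because the witnessing $a_i$--$b_j$ paths share endpoints (each $a_i$ lies on $n-1$ of them), so they do not give disjoint branch sets; and a large clique in $G^r$ does \emph{not} by itself force a shallow clique minor in $G$ (a star has all leaves pairwise at distance $2$ yet contains no $K_3$ minor). The ``standard fix'' you allude to --- midpoints plus a Ramsey/pigeonhole uniformization --- is not actually a standard argument for this situation, and you do not carry it out; I do not see how to complete it without essentially reinventing the uniform quasi-wideness argument the paper uses. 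Your reductions for $\Cc=\Dd^s$ and for map graphs (via half-squares of planar graphs) match the paper's \cref{lem:pow-dom-semiladder} and \cref{lem:map-dom-semiladder}. For $K_{t,t}$-free graphs your approach is essentially correct, but note that $\delta_1(x,y)$ includes the case $x=y$, so in your extracted biclique some $a_i$ may coincide with some $b_j$; discarding these overlaps is exactly why the paper's bound in \cref{lem:ktt-dom-semiladder} is $3t$ rather than your claimed~$2t$.
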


We prove \cref{thm:semi-ladder} in \cref{sec:domination}.
For the case when $\Cc$ is nowhere dense we utilize the well-known characterization of nowhere denseness via {\em{uniform quasi-wideness}}~\cite{nevsetvril2011nowhere}.
In a nutshell, if for some $G\in \Cc$ the graph $\delta_r(G)$ has semi-ladder index $\ell$, then in $G$ we have vertices $a_1,\ldots,a_\ell$ and $b_1,\ldots,b_\ell$
such that $\dist(a_i,b_j)\leq r$ for all $i>j$ and $\dist(a_i,b_i)>r$ for all $i$. Then provided $\ell$ is huge, using uniform quasi-wideness
we can find a large subset $A\subseteq \{a_1,\ldots,a_\ell\}$ of vertices that ``communicate'' with each other only through a set $S$ of constant size --- all paths of length at most $2r$ between
vertices of $A$ pass through~$S$. Now the vertices from $A$ have pairwise different distance-$r$ neighborhoods within $\{b_1,\ldots,b_\ell\}$, but only a limited number of possible
interactions with $S$ (measured up to distance $r$). This quickly leads to a contradiction if~$A$ is large enough.
The cases when $\Cc$ is a power of a nowhere dense class and when $\Cc$ is the class of map graphs follow as simple corollaries from the result for nowhere dense classes.
The case when $\Cc$ is the class of $K_{t,t}$-free graphs is a simple observation: a large semi-ladder in $\delta_1(G)$ enforces a large biclique in $G$.

\medskip
We remark that the argument used in the proof of \cref{lem:nd-dom-semiladder} is similar to the reasoning that shows that graphs from a fixed nowhere dense class admit small {\em{distance-$r$ domination cores}}: 
subsets of vertices whose distance-$r$ domination forces distance-$r$ domination of the whole graph.
This property was first proved implicitly by Dawar and Kreutzer in their FPT algorithm for \rDomSet on any nowhere dense class~\cite{DawarK09}, also using uniform quasi-wideness.
We refer the reader to~\cite[Chapter~3, Section~5]{notes} for an explicit exposition. We will discuss the existence of small 
cores in the more general setting given in this 
paper in \cref{sec:cores}. 

\medskip
Having established boundedness of the semi-ladder index of $\delta_r(x,y)$ on a class $\Cc$, we can use \cref{cor:dist-bound} to extend this to any positive distance formula.
Therefore, by \cref{thm:semi-ladder}, \cref{cor:dist-bound}, and \cref{lem:hel-com} we immediately obtain the following.

    \begin{corollary}\label{cor:cores}
       Let $\Cc$ and $r$ be as in \cref{thm:semi-ladder} 
       and let $\phi(\tup x;\tup y)$ be a positive distance formula of radius at most $r$.
       Then the class $\phi(\Cc)$ has bounded semi-ladder index, so in particular it has the strong Helly property.
    \end{corollary}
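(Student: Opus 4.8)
The statement to prove, \cref{cor:cores}, is a direct corollary, so the plan is to simply chain together three results already established in the excerpt. Let me lay out the assembly explicitly.

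\medskip

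The plan is to combine \cref{thm:semi-ladder}, \cref{cor:dist-bound}, and \cref{lem:hel-com} in sequence. First, I would fix a graph $G\in\Cc$ and an arbitrary radius $q\le r$. By \cref{thm:semi-ladder}, the formula $\delta_q(x,y)$ has bounded semi-ladder index on $\Cc$ --- crucially, the theorem applies to \emph{every} $q\in\N$, in particular to every $q\le r$. Since there are only finitely many values $q\in\{0,1,\ldots,r\}$, there is a single uniform bound $\ell$ (depending only on $\Cc$ and $r$) such that the semi-ladder index of $\delta_q(G)$ is smaller than $\ell$ for all $q\le r$ and all $G\in\Cc$. This is the hypothesis required by \cref{cor:dist-bound}.

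\medskip

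Second, given that $\phi(\tup x;\tup y)$ is a positive distance formula of radius at most $r$ and some size $s$, I would invoke \cref{cor:dist-bound}: for every $G\in\Cc$, the semi-ladder index of $\phi(G)$ is smaller than $R^s(\ell)$, a quantity depending only on $s$ and $\ell$, hence only on $\phi$, $\Cc$, and $r$. This uniform bound over all $G\in\Cc$ is exactly the statement that the class $\phi(\Cc)$ has bounded semi-ladder index.

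\medskip

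Finally, for the ``in particular'' clause, I would recall that by \cref{lem:comatch&semiladder} a class of finite semi-ladder index has finite co-matching index, and then apply \cref{lem:hel-com}: if every $H\in\phi(\Cc)$ has co-matching index at most some fixed $p$, then every such $H$ has the strong $p$-Helly property, so $\phi(\Cc)$ has the strong Helly property. There is no genuine obstacle here --- the entire content has already been done in \cref{thm:semi-ladder} (the quasi-wideness argument) and in the Ramsey-style \cref{lem:bool-comb}, \cref{lem:disj-comb} underlying \cref{cor:dist-bound}; the only point requiring a word of care is the quantifier order, namely observing that the bound from \cref{thm:semi-ladder} can be taken uniform over the finitely many radii $q\le r$ before feeding it into \cref{cor:dist-bound}.
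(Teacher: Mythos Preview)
Your proposal is correct and matches the paper's own derivation exactly: the paper states the corollary as following ``by \cref{thm:semi-ladder}, \cref{cor:dist-bound}, and \cref{lem:hel-com}'', and you have simply spelled out how these three results chain together, including the (trivial) passage through \cref{lem:comatch&semiladder} to convert bounded semi-ladder index into bounded co-matching index before invoking \cref{lem:hel-com}. The only extra care you take --- uniformizing the bound from \cref{thm:semi-ladder} over the finitely many radii $q\le r$ --- is the right thing to do and is implicit in the paper's one-line justification.
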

    
    In fact, \cref{cor:dist-bound} provides a better control of the semi-ladder index of $\phi(\Cc)$ in terms of the semi-ladder index of $\delta_r(\Cc)$ and the size of $\phi$.
    In the next section we will use these more refined bounds for a precise analysis of the running times.
    
    Note that \cref{cor:cores} does not generalize to arbitrary
    first-order formulas. Indeed, if~$\Cc$ is the class of all edgeless 
    graphs and $\phi(x;y)$ is the formula $x\neq y$, then $\phi(\Cc)$
    is the class of all complements of matchings, which does not even have the weak Helly property. 

\paragraph*{Independence problems.}
We now move to the \rIndSet problem: deciding 
whether a given graph contains a distance-$r$ independent set of size $k$. This property
is most naturally expressed using an existential sentence, and not as a sentence of the form $\exists_{\bar x}\forall_{\bar y}\phi(\tup x;\tup y)$.
However, in \cref{ex:problems} we gave a suitable formula $\eta^k_r(\tup x;y)$ that expresses the problem:
the trick is to phrase the property that $x_1,\ldots,x_k$ are pairwise at distance more than $r$ by saying that for every vertex $y$, 
for all $1\leq i<j\leq k$ the sum of distances from $y$ to $x_i$ and $x_j$ is larger than $r$.
Thus, a vertex $y$ that does not satisfy this condition may serve as a witness that a given tuple $\tup x$ does not form a distance-$r$ independent set.

In \cref{sec:independence} we prove the following.
\begin{theorem}\label{thm:ind-weak}
    Let $\Cc$ be a nowhere-dense class of graphs and let $k,r\in\N$.
    Then the class $\eta^k_r(\Cc)$ has 
    the weak $p$-Helly property, for some $p\in\N$ depending on $k$, $r$, and $\Cc$.
\end{theorem}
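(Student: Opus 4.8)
The plan is to reduce the statement to the existence of a bounded-size ``core'' for the associated \textsc{Coverage} instance, and then construct such a core by a progressive argument driven by uniform quasi-wideness, in close analogy with the proof of \cref{thm:semi-ladder} and with the theory of distance-$r$ domination cores.

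First I would unwind the definitions. For $G$ with vertex set $V$, the bipartite graph $\eta^k_r(G)$ has left part $V^k$ and right part $V$, and a tuple $\bar a=(a_1,\dots,a_k)$ is adjacent to $b$ exactly when $\dist(a_i,b)+\dist(a_j,b)>r$ for all $i<j$. Taking $b=a_i$ and using the triangle inequality, $\bar a$ is adjacent to \emph{every} vertex of $V$ if and only if $\{a_1,\dots,a_k\}$ is a distance-$r$ independent set of size $k$; hence the right part of $\eta^k_r(G)$ is covered by the left part precisely when $G$ contains a distance-$r$ independent set of size $k$. In that case the weak $p$-Helly property holds trivially, so from now on assume $G$ has no distance-$r$ independent set of size $k$. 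Equivalently, a maximal distance-$r$ independent set $I$ of $G$ satisfies $|I|<k$, and, being maximal, $I$ is a distance-$r$ dominating set of $G$.

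Second I would isolate what must be produced: a set $B\subseteq V$ with $|B|\le p$ that is not covered by the left part, i.e.\ such that no $\bar a\in V^k$ is adjacent to all of $B$. For this it suffices that $B$ ``connects every short pair'': for all $u,v\in V$ with $\dist(u,v)\le r$ there is $b\in B$ with $\dist(u,b)+\dist(v,b)\le r$ (taking $u=v$ this forces $B$ to be a distance-$\lfloor r/2\rfloor$ dominating set). Indeed, for any $\bar a\in V^k$, since $G$ has no distance-$r$ independent set of size $k$, some two coordinates $a_p,a_q$ with $p<q$ satisfy $\dist(a_p,a_q)\le r$; a vertex $b\in B$ connecting $a_p,a_q$ then falsifies the conjunct $\eta_r(x_p,x_q,y)$ of $\eta^k_r$, so $\bar a$ is not adjacent to $b$.

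Third I would build $B$ by progressive augmentation, mirroring the exploration philosophy of the paper. Start with $B_0=I$. As long as the current set $B_t$ fails to connect some short pair, pick such a pair $(u_{t+1},v_{t+1})$ and augment $B_t$ by a \emph{bounded} set of vertices taken from its connector region $\{w:\dist(u_{t+1},w)+\dist(v_{t+1},w)\le r\}$ and its distance-$r$ neighbourhood, chosen to dominate that region in an appropriate sense (a maximal distance-$r$ independent subset works here, and has size $<k$ by the standing assumption); the new set $B_{t+1}$ then connects $(u_{t+1},v_{t+1})$, and no vertex placed into $B$ before round $t{+}1$ connects any pair chosen at a later round. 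If this process ran for $N$ rounds with $N$ enormous in terms of $k$, $r$ and $\Cc$, I would apply uniform quasi-wideness of $\Cc$ to a suitable vertex set extracted from the configurations $(u_t,v_t)_{t\le N}$, obtaining a bounded-size separator $S$ and a large subfamily of rounds whose extracted vertices are pairwise far apart in $G-S$; grouping the remaining configurations by their truncated distance profiles towards $S$ and iterating, one is left with $k$ rounds $t_1<\dots<t_k$ whose pairs are so well separated — both in $G-S$ and uniformly with respect to $S$ — that, using the monotone ``no earlier vertex connects a later pair'' property, the vertices $u_{t_1},\dots,u_{t_k}$ (or a $k$-subset surviving a final accounting of the interaction with $S$) are pairwise at distance more than $r$ in $G$, contradicting the standing assumption. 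Hence the process stops after a number of rounds bounded by some $p=p(k,r,\Cc)$, and the final $B=B_p$ is the required witness.

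The hard part will be the third step: turning ``no distance-$r$ independent set of size $k$'' into a genuine bound on the number of augmentation rounds. The subtle point is that the vertices produced by the augmentation can be pairwise close in $G$ even while failing to connect one another's pairs, so one cannot read off a distance-$r$ independent set from raw distances; one must first strip away a bounded separator via uniform quasi-wideness and then argue that the bounded interaction with it cannot ``reconnect'' well-separated pairs — and choosing which bounded set of vertices to add at each augmentation so that this bookkeeping goes through is exactly the delicate technical point. This is the same phenomenon that underlies the existence of small distance-$r$ domination cores and the proof of \cref{thm:semi-ladder}.
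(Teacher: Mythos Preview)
Your first two steps are correct and match the paper's reduction: once $G$ has no distance-$r$ independent set of size $k$, one must exhibit a bounded-size $B\subseteq V(G)$ so that every $k$-tuple fails on some $b\in B$. Note, however, that the sufficient condition you isolate in Step~2 --- that $B$ connect \emph{every} short pair $(u,v)$ --- is strictly stronger than what is needed and than what the paper proves: the paper only asks that for every $k$-set $X$ \emph{some} pair inside $X$ be connected through $B$ (this is the notion of a \emph{dependence core}). Aiming for the stronger target is part of what makes your Step~3 hard, and Step~3 is where the genuine gap lies. You propose progressive augmentation with termination via uniform quasi-wideness, but the argument is only a plan: you do not specify which vertices are fed into quasi-wideness, how the profile-bucketing yields $k$ vertices pairwise at distance $>r$ in $G$ (rather than merely in $G-S$), or what is added to $B$ at each round so that the bookkeeping closes. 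In the proof of \cref{thm:semi-ladder} the semi-ladder provides a strong asymmetry (``$a_i$ is close to $b_j$ for $i>j$ but far from $b_i$'') that drives the profile contradiction; here your only asymmetry is ``nothing earlier connects the later pair'', and you do not show how to turn that into a large distance-$r$ independent set. You yourself call this ``the delicate technical point'' and leave it open.

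The paper's route is substantially different, and the paper explicitly says so. After weakening the target to the dependence-core notion, it introduces an intermediate object, a \emph{dependence pre-core}: a set $Q$ such that for every distance-$r$ dominating set $D$ of size at most $k{-}1$ and every vertex $a$, some $a$--$D$ path of length $\le r$ meets $Q$. A short exchange argument (minimising the number of close pairs in a bad $k$-set) shows that a pre-core is already a core under the no-independent-set hypothesis. Small pre-cores exist \emph{unconditionally} and are constructed by induction on the length of the \emph{Splitter game}, not by uniform quasi-wideness: one produces a bounded localisation set $Z$ via a greedy $2r$-packing inside each profile class on the accumulated Splitter vertices $S$, simulates one round of the game around each $z\in Z$, and recurses in the resulting $3r$-balls with $S$ enlarged by Splitter's reply. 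The reformulation through dominating sets (the pre-core) and the recursion along Splitter's strategy are the two ideas your proposal is missing.
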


It is easy to see that for any $k\ge 2$ and $r\ge 1$, the formula $\eta^k_r(x;y)$ does not have the strong Helly property on the class $\Cc$ of edgeless graphs.
Thus, in general we cannot hope for boundedness of the semi-ladder index of $\eta^k_r(\Cc)$ and use the Semi-Ladder Algorithm.

The proof of \cref{thm:ind-weak} is actually very different from the proof of \cref{thm:semi-ladder}, and presents a novel contribution 
of this work.
Instead of uniform quasi-wideness, we use the characterization of nowhere denseness via the {\em{Splitter game}}~\cite{GroheKS17}.
The idea is that in case a graph $G\in \Cc$ does not have a distance-$r$ independent set of size $k$, there is a small witness of this: 
a set $W$ of size bounded in terms of $k$, $r$, and $\Cc$ such that for every vertex subset $S$ of size $k$, some path of length at most $r$ connecting two vertices of $S$ crosses $W$.
This exactly corresponds to the notion of witnessing expressed by $\eta_r^k$.
Such a witness $W$ is constructed recursively along Splitter's strategy tree in the Splitter game in $G$.
We use the condition that $G$ does not have a distance-$r$ independent set of size $k$ to prove that 
we can find a small (in terms of $k,r,\Cc$) set of ``representative'' moves of the Connector. Trimming the strategy tree to those moves
bounds its size in terms of $k,r,\Cc$, yielding the desired upper bound on the witness size.

We remark that our proof of \cref{thm:ind-weak} can actually be turned 
into an algorithm for the \rIndSet problem on any nowhere dense class $\Cc$ with running time of $2^{\Oh(k\log k)}\cdot \|G\|$.
However, this algorithm is much more complicated than the Ladder algorithm that we explain in the next section, 
and in particular it uses some black-box results from the theory of nowhere dense graph classes.
Details can be found at the end of \cref{sec:independence}.

\newcommand{\formulaGraph}[2]{#1(#2)}

\section{Algorithms}\label{sec:algo}

\paragraph*{Model.}
We consider the following model of an algorithmic search for a solution in a bipartite graph representing the search space.
Consider a bipartite graph $H=(L,R,E)$, where the left side $L$ is the set of {\em{candidates}} and the right side $R$ is the set of {\em{witnesses}}.
An edge between a candidate $a\in L$ and a witness $b\in R$ is interpreted as that $a$ and $b$ {\em{agree}}: $b$ agrees that $a$ is a solution.
Expressed in those terms, {\sc{Coverage}} is the problem of finding a {\em{solution}}: a candidate which agrees with all witnesses.
We will use the terminology of candidates, witnesses, solutions, and agreeing as explained above, as this facilitates the understanding of the algorithms for {\sc{Coverage}} in terms of the original
problems.

As we explained, the considered bipartite graph $H$ will typically be of the form $\varphi(G)$ for some formula~$\varphi(\tup x;\tup y)$ expressing the considered problem.
Thus, $H$ shall represent the whole search space, so we allow our algorithms a restricted access to $H$ via the following {\em{oracles}}.

\bigskip

\begin{minipage}{0.96\textwidth}
\noindent{\bf{Candidate Oracle}}: Given a set of witnesses $B\subseteq R$, the oracle either returns a candidate $a\in L$ that agrees with all witnesses of $B$, or concludes that no such candidate exists.

\medskip

\noindent{\bf{Weak Witness Oracle}}: Given a candidate $a\in L$, the oracle either concludes that $a$ is a solution, or returns a witness $b\in R$ that does not agree with $a$.

\medskip

\noindent{\bf{Strong Witness Oracle}}: Given a set of candidates $A\subseteq L$ and a number $p\in \N$, the oracle either finds a set of witnesses $P\subseteq R$ such that $|P|\leq p$ and 
every candidate of $A$ does not agree with some witness from $P$, or concludes that no such set $P$ exists.
\end{minipage}

\bigskip

\noindent Note that the Weak Witness Oracle can be simulated by the Strong Witness Oracle applied to $A=\{a\}$. We now provide the two algorithms for {\sc{Coverage}} announced in \cref{sec:problems}.

%We now provide two abstract algorithms solving the {\sc{Coverage}} problem in the given bipartite graph $H$, designed using the paradigm of progressive exploration. 
%The first one, called the {\em{Semi-ladder Algorithm}}, assumes bounded semi-ladder index, 
%which is equivalent to having bounded ladder index and strong Helly property, but uses only Weak Witness Oracle calls.
%The second one, the {\em{Ladder Algorithm}}, assumes bounded ladder index and weak Helly property, but utilizes Strong Witness Oracle calls.

\paragraph*{Semi-ladder Algorithm.}
The Semi-ladder Algorithm proceeds in a number of rounds, where each round consists of two steps: first the {\em{Candidate Step}}, and then the {\em{Witness Step}}.
Also, the algorithm maintains a set $B$ of witnesses gathered so far, initially set to be empty. The steps are defined as follows:

\bigskip
\begin{minipage}{0.96\textwidth}

\noindent\textbf{Candidate Step:} Apply the Candidate Oracle to find a candidate $a\in L$ that agrees with all the witnesses in $B$.
If no such candidate exists, terminate the algorithm returning that no solution exists. Otherwise, proceed to the Witness~Step.

\medskip

\noindent\textbf{Witness Step:} Apply the Weak Witness Oracle to find a witness $b\in R$ that does not agree with $a$.
If there is no such witness, terminate the algorithm and return $a$ as the solution. Otherwise, add $b$ to $B$ and proceed to the next round.

\end{minipage}

\bigskip

\noindent The correctness of the algorithm is obvious, while the running time can be bounded by the 
immediate observation that if 
the Semi-ladder Algorithm performs $\ell$ full rounds, then 
the candidates $a_1,\ldots,a_\ell\in L$  discovered in consecutive rounds, together with the witnesses  $b_1,\ldots,b_\ell\in R$  added to $B$ in consecutive rounds, form a semi-ladder in $H$.

% \begin{observation}
% Suppose the Semi-ladder Algorithm applied to a graph $H$ terminated after performing $\ell$ full rounds.
% Let $a_1,\ldots,a_\ell\in L$ be the candidates discovered in consecutive rounds, and let $b_1,\ldots,b_\ell\in R$ be the witnesses added to $B$ in consecutive rounds.
% Then $a_1,\ldots,a_\ell$ and $b_1,\ldots,b_\ell$ form a semi-ladder in $H$.
% \end{observation}

\begin{corollary}\label{cor:semi-ladder-algo}
The Semi-ladder Algorithm applied to a graph $H$ with semi-ladder index~$\ell$ terminates after performing at most $\ell$ full rounds.
Consequently, it uses at most $\ell+1$ Candidate Oracle Calls, each involving a set of witnesses $B$ with $|B|\leq \ell$, and at most $\ell$ Weak Witness Oracle Calls.
\end{corollary}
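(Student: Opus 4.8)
The plan is to read the statement directly off the structure of the algorithm together with the definition of a semi-ladder; the only real content is the observation already announced just above the corollary, and everything else is bookkeeping. I would call a round \emph{full} if it completes both steps and ends by adding a new witness to $B$ (so that it proceeds to a next round), and for such a round $i$ I would write $a_i$ for the candidate produced in its Candidate Step and $b_i$ for the witness produced in its Witness Step. A trivial induction on $i$ gives that at the moment round $i$ begins we have $B=\{b_1,\dots,b_{i-1}\}$.

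The key step is then to exhibit a semi-ladder. Assume the algorithm performs at least $m$ full rounds. By the specification of the Candidate Oracle, for each $i\le m$ the candidate $a_i$ agrees with every witness of $B=\{b_1,\dots,b_{i-1}\}$, that is $(a_i,b_j)\in E$ for all $j<i$; by the specification of the Weak Witness Oracle, the witness $b_i$ returned in the Witness Step does not agree with $a_i$, that is $(a_i,b_i)\notin E$. Since a semi-ladder imposes no condition on the pairs with $i<j$, the sequences $a_1,\dots,a_m$ and $b_1,\dots,b_m$ form a semi-ladder of order $m$ in $H$. As $H$ has semi-ladder index $\ell$, no semi-ladder of order $\ell+1$ exists in $H$, so $m\le\ell$, i.e. the algorithm performs at most $\ell$ full rounds.

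Finally I would bound the total number of rounds and the oracle calls. A full round always proceeds to a next round, so the algorithm can halt only inside a round that is not full, namely when the Candidate Step finds no candidate agreeing with $B$, or when the Witness Step finds no disagreeing witness and returns $a$ as a solution; and such a round must occur no later than round $\ell+1$, for otherwise round $\ell+1$ would itself be full, contradicting the previous paragraph. Hence the algorithm terminates after at most $\ell+1$ rounds. Each round issues exactly one Candidate Oracle call, so at most $\ell+1$ in total, and the call in round $i$ uses a set $B$ of size $i-1\le\ell$; each round issues at most one Weak Witness Oracle call, occurring only in a Witness Step, of which there is at most one per full round, giving the claimed bound of at most $\ell$. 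I do not expect any real obstacle here: the only point that needs a little care is that $(a_i,b_i)\notin E$ holds precisely because $b_i$ was returned by the Weak Witness Oracle as a witness disagreeing with $a_i$ — which is exactly what distinguishes a full round from a terminating one.
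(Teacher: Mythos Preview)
Your proposal is correct and follows exactly the approach the paper takes: the paper states, just above the corollary, that if the algorithm performs $m$ full rounds then the candidates $a_1,\ldots,a_m$ and witnesses $b_1,\ldots,b_m$ form a semi-ladder of order $m$ in $H$ (precisely your key step), and then regards the oracle-call counts as an immediate consequence. Your write-up is simply a careful unpacking of that one-line observation.
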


\paragraph*{Ladder algorithm.}
As before, the Ladder Algorithm maintains the set $B$ of witnesses gathered so far, but also the set $A$ of candidates found so far.
The algorithm is also given a parameter $p\in \N$. 
Again, the algorithm proceeds in rounds, each consisting of the Candidate step and the Witness step, with the following description:

\bigskip

\begin{minipage}{0.96\textwidth}

\noindent\textbf{Candidate Step:} Apply the Candidate Oracle to find a candidate $a\in L$ that agrees with all the witnesses in $B$.
If no such candidate exists, terminate the algorithm returning that no solution exists. Otherwise, add $a$ to $A$ and proceed to the Witness step.

\medskip

\noindent\textbf{Witness Step:} Apply the Strong Witness Oracle to set $A$ and parameter $p$, 
yielding either a set of witnesses $P\subseteq R$ such that $|P|\leq p$ and every candidate from $A$ does not agree with some witness from $P$,
or a conclusion that no such set $P$ exists. In the former case, add $P$ to $B$ and proceed to the next round. In the latter case, terminate the algorithm returning that a solution exists.

\end{minipage}

\bigskip

\noindent Note that the algorithm actually never finds a solution, but only may claim its existence in the Witness Step, and this claim is not substantiated by having a concrete solution in hand. 
However, the observation is that assuming the weak $p$-Helly property, the structure discovered by the algorithm is sufficient to deduce the existence of a solution. 

\begin{lemma}\label{lem:correctness-ladder}
The Ladder Algorithm applied with parameter $p$ in a bipartite graph with the weak $p$-Helly property is always correct.
\end{lemma}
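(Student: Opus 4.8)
The plan is to argue that the Ladder Algorithm never terminates incorrectly, by separately checking the two ways it can return an answer.

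First I would handle the easy direction: if the algorithm terminates in the Candidate Step, it returns that no solution exists. At that moment the Candidate Oracle has reported that no candidate agrees with all witnesses in the current set $B\subseteq R$. Since a solution is by definition a candidate agreeing with \emph{all} witnesses in $R\supseteq B$, a solution would in particular agree with all of $B$, contradicting the oracle's answer. Hence the ``no solution'' output is correct, and this does not even use the weak $p$-Helly property.

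The main content is the other direction: if the algorithm terminates in the Witness Step claiming a solution exists, then one really does exist. Here I would use the weak $p$-Helly property directly. When the algorithm terminates this way, the Strong Witness Oracle, called on the current set $A\subseteq L$ with parameter $p$, reported that there is \emph{no} set $P\subseteq R$ with $|P|\leq p$ such that every candidate in $A$ fails to agree with some witness in $P$. Now consider the whole right part $R$ and ask whether it is covered by $L$ (this is exactly \textsc{Coverage}); by the weak $p$-Helly property of $H$, either $R$ is covered by $L$ --- i.e.\ a solution exists, and we are done --- or there is a subset $Q\subseteq R$ with $|Q|\leq p$ that is not covered by $L$. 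In the latter case, \emph{no} vertex of $L$ is adjacent to all of $Q$; in particular, for every candidate $a\in A\subseteq L$ there is some witness in $Q$ that does not agree with $a$. But then $P:=Q$ is exactly a set of size at most $p$ of the kind the Strong Witness Oracle claimed did not exist --- a contradiction. Therefore the first alternative must hold, a solution exists, and the ``solution exists'' output is correct.

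The step I expect to be the only real subtlety is making sure the logical polarities line up when invoking the weak $p$-Helly property: the property is stated as ``either $R$ is covered by $L$, or $R$ contains a subset of size at most $p$ not covered by $L$,'' and ``$Q$ not covered by $L$'' must be translated into ``every $a\in A$ disagrees with some witness of $Q$,'' which is immediate since $A\subseteq L$ and ``not covered'' means no vertex of $L$ dominates $Q$. Once that translation is in place, the contradiction with the oracle's negative answer is direct, and the lemma follows. (I would also remark in passing that termination of the Ladder Algorithm is a separate matter, governed by the ladder index via a ladder forming among the discovered candidates and witnesses, but correctness as stated is exactly the above case analysis.)
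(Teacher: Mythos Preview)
Your proposal is correct and follows essentially the same argument as the paper's proof: both handle the ``no solution'' case trivially via the Candidate Oracle, and for the ``solution exists'' case both invoke the weak $p$-Helly property to produce a small uncovered set $Q\subseteq R$ (the paper calls it $P_0$) whose existence would contradict the Strong Witness Oracle's negative answer on $A$. Your explicit unpacking of the polarity (``$Q$ not covered by $L$'' implies every $a\in A$ disagrees with some witness of $Q$'') is exactly the step the paper leaves implicit.
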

\begin{proof}
If the algorithm returns that there is no solution, this claim is justified by finding a set of witnesses that cannot simultaneously agree with any candidate, which implies that no solution exists.
If the algorithm returns that there is a solution, then this is because it constructed a set of candidates $A$ such that for every subset of witnesses $P\subseteq R$ with $|P|\leq p$, some candidate within $A$
agrees with all the witnesses from $P$. If there was no solution, then by the $p$-Helly property there would exist a set $P_0$ of at most $p$ witnesses, for which there would be no candidate simultaneously
agreeing with all the witnesses of $P_0$; this would contradict the previous conclusion. Hence, the algorithm's claim that there exists a solution is correct.
\end{proof}

Finally, we show that if $H$ has ladder index bounded by $\ell$, then the algorithm terminates in a number of rounds bounded in terms of $\ell$ and $p$.
For this we observe that during its execution, the algorithm in fact constructs a ladder in an auxiliary bipartite graph $H'$ with candidates $a$ on the left side and sets of witnesses $P$ on the right side,
and the ladder index of $H'$ can be bounded in terms of $p$ and the ladder index of $H$ using a Ramsey~argument.

\begin{lemma}\label{lem:ladder-conj}
The Ladder Algorithm applied with parameter $p$ to a bipartite graph $H$ with ladder index smaller than $\ell$ terminates after performing less than $R^p(2\ell)$ full rounds.
\end{lemma}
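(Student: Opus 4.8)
The plan is to extract from a hypothetical long run of the Ladder Algorithm a large ladder in the original graph $H$, contradicting the assumption that its ladder index is smaller than $\ell$. Suppose the algorithm performs $R^p(2\ell)$ full rounds, and let $a_1, a_2, \ldots$ be the candidates returned by the Candidate Oracle in consecutive rounds, and $P_1, P_2, \ldots$ the sets of at most $p$ witnesses returned by the Strong Witness Oracle in consecutive rounds. The key structural facts are: first, since $a_i$ agrees with all witnesses in $B$ at the time round $i$ starts, and $B$ at that point contains $P_1, \ldots, P_{i-1}$, the candidate $a_i$ agrees with every witness in $P_j$ for all $j < i$; second, since $P_i$ is returned by the Strong Witness Oracle applied to $A = \{a_1, \ldots, a_i\}$, every candidate $a_j$ with $j \le i$ disagrees with some witness in $P_i$ — in particular $a_i$ disagrees with some witness in $P_i$.

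Next I would build an auxiliary bipartite graph $H'$ whose left part is $\{a_1, \ldots, a_N\}$ (where $N = R^p(2\ell)$) and whose right part is $\{P_1, \ldots, P_N\}$, with $a_i$ adjacent to $P_j$ iff $a_i$ agrees with all witnesses in $P_j$. The two facts above say exactly that $a_i$ is adjacent to $P_j$ for all $j < i$, and $a_i$ is non-adjacent to $P_i$ (since $a_i$ disagrees with some witness in $P_i$). So $a_1, \ldots, a_N$ and $P_1, \ldots, P_N$ form a structure in $H'$ that is a ladder except possibly on the pairs $(a_i, P_j)$ with $i < j$, where the adjacency is unconstrained — in other words, a semi-ladder of order $N$ in $H'$. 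By the Ramsey argument, from such a semi-ladder we can pass to a sub-semi-ladder of order $2\ell$ that is homogeneous on the $i<j$ pairs, hence either a ladder or a co-matching of order $2\ell$ in $H'$; and a co-matching of order $2\ell$ contains a ladder of order $\ell$ (take every other index), so in either case $H'$ contains a ladder of order $\ell$.

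Finally I would transfer a ladder in $H'$ back to a ladder in $H$. If $a_{i_1}, \ldots, a_{i_\ell}$ and $P_{j_1}, \ldots, P_{j_\ell}$ form a ladder in $H'$, then $a_{i_s}$ agrees with all of $P_{j_t}$ iff $s > t$; choosing for each $t$ a witness $b_t \in P_{j_t}$ that $a_{i_t}$ disagrees with is not quite enough, so instead, for the pairs with $s \le t$, pick a witness $b^{s,t} \in P_{j_t}$ witnessing the disagreement and use a second Ramsey step (or a direct pigeonhole over the at-most-$p$ elements of each $P_{j_t}$) to find a single $b_t \in P_{j_t}$ that simultaneously witnesses disagreement with a large subset of the relevant $a_{i_s}$'s; this yields a ladder of order roughly $\log_p \ell$ or so in $H$ — which is why the $R^p(2\ell)$ bound, rather than something smaller, appears. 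This last bookkeeping step — getting the quantitative bound to come out as $R^p(2\ell)$ — is the main obstacle; the conceptual content is just the semi-ladder-to-ladder-or-co-matching Ramsey dichotomy combined with the pigeonhole over the bounded-size sets $P_j$. Once a ladder of order $\ell$ in $H$ is produced, it contradicts the hypothesis, so the algorithm cannot perform $R^p(2\ell)$ full rounds.
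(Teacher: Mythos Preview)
Your overall architecture is right --- build the auxiliary bipartite graph $H'$ with the $a_i$'s on the left and the witness sets $P_j$ on the right, show the algorithm's run produces a long ladder-like structure in $H'$, and then extract a ladder in $H$ via Ramsey over the $p$ positions in each $P_j$. But there is a genuine gap in how you use your own ``second key fact'', and it causes the whole quantitative analysis to go off the rails.

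You correctly observe that $P_i$ is returned by the Strong Witness Oracle applied to $A=\{a_1,\ldots,a_i\}$, so \emph{every} $a_j$ with $j\le i$ disagrees with some witness in $P_i$. But then you only use the special case $j=i$. Use the full statement: in $H'$, $a_j$ is non-adjacent to $P_i$ for \emph{all} $j\le i$, not just for $j=i$. Combined with your first fact ($a_i$ adjacent to $P_j$ for all $j<i$), this says $(a_i,P_j)$ is an edge of $H'$ iff $i>j$ --- a genuine \emph{ladder} in $H'$, not merely a semi-ladder. Your entire first Ramsey step (passing from a semi-ladder to a ladder or co-matching in $H'$) is unnecessary.

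This matters for two reasons. First, your claim that ``a co-matching of order $2\ell$ contains a ladder of order $\ell$ (take every other index)'' is simply false: in a co-matching each $a_i$ is non-adjacent to exactly one $b_j$, so no ladder of order $\ge 2$ can be found inside it. Second, your two-stage Ramsey plan would give a bound of roughly $R^2(R^p(2\ell))$, not $R^p(2\ell)$. The paper's proof does exactly what you sketch in your last paragraph, but applied directly to the full ladder of order $q=R^p(2\ell)$ in $H'$: color each pair $(i,j)$ with $i>j$ by an index $s\in\{1,\ldots,p\}$ such that $a_j$ disagrees with $b_i^s\in P_i$; Ramsey yields a monochromatic set $X$ of size $2\ell$; and then alternating indices (to avoid the diagonal, where adjacency is not controlled) gives a ladder of order $\ell$ in $H$. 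So fix the one oversight and your plan becomes the paper's proof, with the stated bound.
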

\begin{proof}
Let $H'$ be the bipartite graph constructed as follows: the left side of $H'$ consists of the candidate set $L$, 
the right side of $H'$ consists of the family of all sets of witnesses $P\subseteq R$ satisfying $|P|\leq p$, and a candidate $a\in L$ is adjacent in $H'$ to a set of witnesses $P\subseteq R$ if and only if
$a$ agrees with all witnesses of~$P$. 

We claim that $H'$ has ladder index smaller than $R^p({2\ell})$. Note that this will conclude the proof for the following reason:
if $a_1,\ldots,a_t$ and $P_1,\ldots,P_t$ are candidates and witness sets discovered by the algorithm during consecutive rounds, where $t$ is the number of full rounds performed,
then $a_1,\ldots,a_t$ and $P_1,\ldots,P_t$ form a ladder in $H'$.

For contradiction, suppose that $a_1,\ldots,a_q$ and $P_1,\ldots,P_q$ form a ladder of order $q=R^p(2\ell)$ in~$H'$.
Arbitrarily enumerate the elements of each set $P_i$ as $b_i^1,\ldots,b_i^p$, repeating elements in case $|P_i|<p$.
Now, color every pair $(i,j)$ satisfying $q\ge i>j\ge 1$ with any number $s\in \{1,\ldots,p\}$ such that $a_j$ does not agree with $b_i^s$; such $s$ exists because $a_j$ and $P_i$ are not adjacent in $H'$.
By Ramsey's theorem, we may find an index subset $X\subseteq \{1,\ldots,q\}$ of size $2\ell$ and a color $s\in \{1,\ldots,p\}$ such that all pairs $(i,j)$ with $i>j$ and $i,j\in X$ have color $s$.
Let $c_1,\ldots,c_{2\ell}$ be the elements of $\{a_i\colon i\in X\}$ ordered as in the sequence $a_1,\ldots,a_q$, and let $d_1,\ldots,d_{2\ell}$ be the elements of $\{b_i^s\colon i\in X\}$ ordered
as in the sequence $b_1^s,\ldots,b_q^s$.
Then it follows that $c_1,c_3,\ldots,c_{2\ell-1}$ and $d_2,d_4,\ldots,d_{2\ell}$ form a ladder of order $\ell$ in $H$, a contradiction.
\end{proof}

\begin{corollary}\label{cor:ladder-algo}
The Ladder Algorithm applied with parameter $p$ to a graph $H$ with ladder index smaller than~$\ell$ and the weak $p$-Helly property, 
always returns the correct answer and terminates after performing at most $q=R^p(\ell)-1$ full rounds.
Consequently, it uses at most $q+1$ Candidate Oracle Calls, each involving a set of witnesses $B$ with $|B|\leq pq$, and at most $q$ Strong Witness Oracle Calls, each involving a set of candidates $A$ with $|A|\leq q$.
\end{corollary}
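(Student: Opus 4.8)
The plan is to obtain the corollary by assembling the two preceding lemmas and then performing a routine accounting of oracle calls; there is essentially no new mathematical content. Correctness will come from \cref{lem:correctness-ladder}, the bound on the number of full rounds from \cref{lem:ladder-conj}, and everything else is bookkeeping on how the sets $A$ and $B$ grow and how often each oracle is queried during a run.

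Concretely, I would first invoke \cref{lem:correctness-ladder}: since $H$ has the weak $p$-Helly property, the Ladder Algorithm run with parameter $p$ returns the correct answer. Next, since $H$ has ladder index smaller than $\ell$, \cref{lem:ladder-conj} bounds the number of full rounds performed; call this bound $q$ (the Ramsey number appearing in the corollary statement, minus one). So it only remains to read off the oracle-call bounds from the structure of a single round.

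Each full round consists of one Candidate Step, which issues a single Candidate Oracle call and, on success, appends one element to $A$, followed by one Witness Step, which issues a single Strong Witness Oracle call on the current set $A$ and appends the returned set $P$ to $B$; here $|P|\le p$ by the specification of the Strong Witness Oracle. Hence after $q$ full rounds the algorithm has made $q$ Strong Witness Oracle calls, each involving a set $A$ with $|A|\le q$, while the set $B$ never exceeds size $pq$ and every Candidate Oracle call is made on such a $B$. The algorithm may then perform one additional Candidate Step that reports no valid candidate and terminates; together with the $q$ successful Candidate Steps this gives at most $q+1$ Candidate Oracle calls in total, each with $|B|\le pq$. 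This is exactly the claimed collection of bounds.

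The only point requiring any care — and the closest thing to an obstacle — is the off-by-one accounting: one must cleanly distinguish the $q$ ``full'' rounds, each of which performs both a Candidate Step and a Witness Step, from the single possible final truncated round that performs only a failing Candidate Step, so that the Candidate Oracle is charged $q+1$ times but the Strong Witness Oracle only $q$ times. Beyond this, the argument is a direct consequence of \cref{lem:correctness-ladder} and \cref{lem:ladder-conj}.
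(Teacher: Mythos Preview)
Your proposal is correct and matches the paper's approach exactly: the corollary is stated without a separate proof, being an immediate consequence of \cref{lem:correctness-ladder} and \cref{lem:ladder-conj} together with the straightforward per-round accounting you describe.
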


\paragraph*{Implementing the oracles.}
The last missing ingredient for obtaining our algorithmic results is an efficient implementation of the oracles for bipartite graphs of the form $\varphi(G)$, where
$G$ is the input graph and $\varphi(\tup x,\tup y)$ is a formula expressing the considered problem. We describe such an implementation whenever~$\varphi$ is a distance formula.

We use the concept of {\em{distance profiles}} and {\em{distance profile complexity}}.
Let $G$ be a graph and let $S$ be a set of 
its vertices. For a vertex $v$ of $G$, the \emph{distance-$r$ profile} 
of $v$ on $S$, denoted $\profile^{G,S}_r(v)$, is the function mapping $S$ 
to $\{0,1,\ldots,r,\infty\}$ such that for $s\in S$, 
$$\profile^{G,S}_r(v)(s)=\begin{cases}\dist_G(v,s) & \quad \textrm{if }\dist_G(v,s)\leq r,\\ \infty & \quad \textrm{otherwise.}\end{cases} $$
The {\em{distance-$r$ profile complexity}} of $G$ is the function from $\N$ to $\N$ defined as
$$\numprofiles{r}{G}{m} = \max_{S\subseteq V,\, |S|\leq m} |\{\profile^{G,S}_r(u)\colon v\in V(G)\}|.$$
That is, this is the maximum possible number of different functions from $S$ to $\{0,1,\ldots,r,\infty\}$ realized as distance-$r$ profiles on $S$ of vertices of $G$, over all vertex subsets $S$ of size at most~$m$.
For a graph class $\Cc$, we denote $\numprofiles{r}{\Cc}{m}=\sup_{G\in \Cc} \numprofiles{r}{G}{m}$.

Note that for any graph $G$ and $r,m\in \N$ we have $\numprofiles{r}{G}{m}\leq (r+2)^{m}$, as this is the total number of functions from a set of size $m$ to $\{0,1,\ldots,r,\infty\}$.
This bound is exponential in $m$, however it is known that on nowhere dense classes an almost linear bound holds.

\begin{lemma}[\cite{eickmeyer2016neighborhood}]\label{lem:num-profiles}
Let $\Cc$ be a nowhere dense class of graphs. Then for every $r\in \N$ and $\eps>0$ there exists a constant $c_{r,\eps}$ such that
$\numprofiles{r}{\Cc}{m}\leq c_{r,\eps}\cdot m^{1+\eps}$ for all $m\in \N$.
\end{lemma}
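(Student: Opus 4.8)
The statement is a result from the structure theory of sparse graph classes, and the proof I would follow is that of~\cite{eickmeyer2016neighborhood}; it rests on two pillars, the characterisation of nowhere denseness through \emph{generalised coloring numbers} and a \emph{closure} construction. Recall that $\Cc$ is nowhere dense if and only if for every $p\in\N$ and every $\eps'>0$ there is a constant $c'$ such that every graph in $\Cc$, and more generally every subgraph of such a graph on $n$ vertices, admits a linear vertex order $\sigma$ with $|\WReach_p(G,\sigma,v)|\le c'\cdot n^{\eps'}$ for every vertex $v$; here $\WReach_p(G,\sigma,v)$ collects the vertices $w\le_\sigma v$ reachable from $v$ by a path of length at most $p$ all of whose internal vertices are $\ge_\sigma v$ (see~\cite{GroheKS17}). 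The bridge to distances is the elementary fact that if $\dist_G(u,v)\le 2r$, then the $\sigma$-minimal vertex $w$ on a shortest $u$--$v$ path lies in $\WReach_r(G,\sigma,u)\cap\WReach_r(G,\sigma,v)$, and moreover $\dist_G(u,v)=\dist_G(u,w)+\dist_G(w,v)$ since $w$ sits on that path.

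\textbf{Main line of argument.} Fix $G\in\Cc$, $r$, $\eps$, and $S\subseteq V(G)$ with $|S|=m$. The closure lemma --- proved by iteratively adjoining such ``midpoints'' while controlling the size increase through the coloring-number bound --- produces a set $B\supseteq S$ with $|B|\le c\cdot m^{1+\eps}$, together with a suitable vertex order, such that every vertex $v\in V(G)$ has a \emph{projection} $\pi(v)\subseteq B$ of size bounded by a constant depending only on $r,\eps,\Cc$, with the property that $\profile^{G,S}_r(v)$ is determined by $\pi(v)$ and the (fixed, $v$-independent) pairwise distances inside $B$. The reason is the midpoint fact: to realise $\dist_G(v,s)\le r$ for some $s\in S\subseteq B$, a shortest $v$--$s$ path must pass through a $\sigma$-small vertex $w\in B$ recorded by $\pi(v)$, and then $\dist_G(v,s)=\dist_G(v,w)+\dist_G(w,s)$ with the second summand internal to $B$; conversely the triangle inequality shows each recorded $w$ overestimates $\dist_G(v,s)$, so the $r$-truncated distance function of $v$ on $S$ is obtained by minimising over $\pi(v)$. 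Hence $\numprofiles{r}{G}{m}\le|\{\pi(v):v\in V(G)\}|$, and since each $\pi(v)$ is a subset of $B$ of at most constant size, there are at most $|B|^{d}\le (c\,m^{1+\eps})^{d}$ of them for a constant $d=d(r,\eps,\Cc)$. Choosing the $\eps$ used inside the closure small enough in terms of the target exponent and of $d$, and taking the supremum over $G\in\Cc$, gives $\numprofiles{r}{\Cc}{m}\le c_{r,\eps}\cdot m^{1+\eps}$.

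\textbf{Where the difficulty sits.} The non-trivial ingredient, and the step I expect to be the main obstacle, is the closure lemma: producing $B$ of size \emph{almost linear in the parameter $m=|S|$}, crucially independent of $|V(G)|$, while keeping all projections of constant size. A naive ``close under all midpoints'' would enlarge $B$ by a factor $\wcol_{2r}(G)$, which may be polynomial in $|V(G)|$ and is therefore useless; the actual construction iterates the midpoint step only $r$ times and bounds the blow-up at each iteration by an $m^{\eps'}$ factor for a suitably small $\eps'$, applying the coloring-number characterisation to the graph induced on the current set together with short connecting paths. This is precisely the delicate part of~\cite{eickmeyer2016neighborhood}, building on the closures used for first-order model checking on nowhere dense classes~\cite{GroheKS17}; on a class of \emph{bounded expansion} the weak coloring numbers are uniformly bounded, the closure becomes straightforward, and one even obtains the clean linear bound $\numprofiles{r}{G}{m}=\Oh(m)$. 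Alternatively one may first reduce the profile count to the count of distance-$r$ \emph{balls} $N_r[v]\cap A$: attach to each $s\in S$ a pendant path $s=s^{(1)}-\cdots-s^{(r+1)}$ on fresh vertices and let $A$ be the set of all these vertices; since $\dist(v,s^{(i)})=\dist_G(v,s)+i-1$, the set $N_r[v]\cap A$ records $\min(r+1,\dist_G(v,s))$ for every $s\in S$, hence the profile, and attaching pendant paths of length at most $r$ keeps us inside a nowhere dense class (short pendant paths barely affect shallow-minor densities), so the $r$-neighborhood-complexity theorem of~\cite{eickmeyer2016neighborhood} can be quoted directly.
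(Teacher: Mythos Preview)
The paper does not prove this lemma at all: it is quoted directly from~\cite{eickmeyer2016neighborhood} and used as a black box, so there is no ``paper's own proof'' to compare against. Your proposal is therefore not a comparison target but rather a sketch of the external result, and as such it is a faithful outline of the argument in~\cite{eickmeyer2016neighborhood}: the weak-coloring-number characterisation of nowhere denseness, the closure construction producing a set $B\supseteq S$ of size $\Oh(m^{1+\eps})$ with bounded-size projections, and the counting of projection classes. Your identification of the closure lemma as the heart of the matter is accurate, as is the remark that on bounded-expansion classes the argument simplifies to a genuinely linear bound.

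One small caveat on the alternative route: the pendant-path reduction is clean, but you should be explicit that the resulting class is still nowhere dense. This is true --- attaching paths of length at most $r$ to a graph $G$ cannot create a $K_t$ as a depth-$d$ minor that was not already (essentially) present as a depth-$(d+r)$ minor in $G$ --- but it does not follow from $\Cc$ being subgraph-closed, since you are \emph{adding} vertices. Other than that, the sketch is sound.
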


We remark that the conclusion of \cref{lem:num-profiles} still holds when $\Cc$ is any fixed power of a nowhere dense class, and when $\Cc$ is the class of map graphs.
Moreover, when $\Cc$ is the class of $K_{t,t}$-free graphs for some $t\in \N$, then $\numprofiles{1}{\Cc}{m}\leq \Oh(m^t)$.
We prove these statements in \cref{lem:prof-comp-powers} and~\ref{lem:prof-comp-ktt} in \cref{sec:omitted}.

\smallskip
We are ready to give implementations for the oracles. The main idea is that because we are working with a distance formula, when looking for, say, a candidate that agrees with all witnesses in a set $B$, the
only information that is relevant about any vertex is its distance-$r$ profile on the set $S$ consisting of all vertices appearing in the tuples of $B$. Hence, there are only $\numprofiles{r}{G}{|S|}$
different ``types'' of vertices, and instead of checking all $k$-tuples of vertices in the graph, we can check all $k$-tuples of types.

\begin{lemma}\label{lem:oracles}
Fix a distance formula $\varphi(\tup x;\tup y)$ of radius $r$ and with $|\tup x|=c$ and $|\tup y|=d$.
Then for an input graph $G=(V,E)$, there are implementations of oracle calls in $\formulaGraph{\varphi}{G}$ that achieve the following running times:
\begin{itemize}
\item Candidate Oracle: time $\Oh(|B|\cdot \|G\|+|B|\cdot\numprofiles{r}{G}{d|B|}^{c})$ for a call to $B\subseteq V^{\tup y}$;
\item Weak Witness Oracle: time $\Oh(\|G\|+\numprofiles{r}{G}{c}^{d})$ for a call to $\tup a\in V^{\tup x}$;
\item Strong Witness Oracle: time $\Oh(|A|\cdot \|G\|+|A|\cdot \numprofiles{r}{G}{c|A|}^{pd})$ for a call to $A\subseteq V^{\tup x}$ and~$p\in \N$.
\end{itemize}
\end{lemma}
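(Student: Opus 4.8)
The plan is to reduce all three oracle calls to computations on \emph{distance profiles}. The key observation is that, since $\varphi(\tup x;\tup y)$ is a distance formula of radius $r$, the truth value of $\varphi(\tup a;\tup b)$ for $\tup a\in V^{\tup x}$, $\tup b\in V^{\tup y}$ depends only on the $r$-capped distances $\min(\dist_G(a_i,b_j),\,r{+}1)$ between coordinates of $\tup a$ and coordinates of $\tup b$: every atom $\delta_q(x_i,y_j)$ of $\varphi$ has $q\le r$ and is satisfied precisely when this capped distance is at most $q$, and $\varphi$ is a Boolean combination of such atoms. Hence, if $S$ is a vertex set containing all coordinates of all tuples in a witness set $B$, then whether a candidate $\tup a$ agrees with every witness of $B$ is determined solely by $(\profile^{G,S}_r(a_1),\dots,\profile^{G,S}_r(a_c))$; symmetrically, if $S'$ contains all coordinates of all tuples in a candidate set $A$, then the set of candidates of $A$ that a witness $\tup b$ agrees with is determined solely by $(\profile^{G,S'}_r(b_1),\dots,\profile^{G,S'}_r(b_d))$. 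This is the only place the distance-formula hypothesis is used.

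All three oracles share the same preprocessing. Let $S$ be the relevant ``pivot set'' --- the coordinate set of $B$ for the Candidate Oracle, of $\tup a$ for the Weak Witness Oracle, and of $A$ for the Strong Witness Oracle; since $|\tup x|=c$ and $|\tup y|=d$ are fixed, $|S|$ is $\Oh(|B|)$, $\Oh(1)$, and $\Oh(|A|)$ respectively. Run a breadth-first search from each $s\in S$, truncated at depth $r$; as $r$ is fixed each such search costs $\Oh(\|G\|)$, so the total is $\Oh(|S|\cdot\|G\|)$, matching the first summand of each bound. Now $\profile^{G,S}_r(v)$ is available for every $v\in V$, so we partition $V$ into at most $\numprofiles{r}{G}{|S|}$ profile classes and fix a representative of each. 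For the Candidate Oracle, by the key observation a candidate agreeing with all of $B$ exists iff some $c$-tuple of representatives does, so we enumerate all $\numprofiles{r}{G}{d|B|}^{c}$ such tuples and test each against every witness of $B$; a single pairwise test is $\Oh(1)$ via lookups into the profiles, for a total of $\Oh(|B|\cdot\numprofiles{r}{G}{d|B|}^{c})$; we return a successful tuple, or report non-existence. The Weak Witness Oracle is the mirror image: with pivot set the $\le c$ coordinates of $\tup a$ and at most $\numprofiles{r}{G}{c}$ profile classes, we enumerate all $\numprofiles{r}{G}{c}^{d}$ representative $d$-tuples, return the first one $\tup b$ with $\neg\varphi(\tup a;\tup b)$ (a genuine disagreeing witness), and otherwise declare $\tup a$ a solution --- correct, since disagreement with $\tup a$ depends only on the profile classes of the coordinates of the witness.

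For the Strong Witness Oracle, the key observation shows that the subset of $A$ that a witness $\tup b$ disagrees with is a function of the tuple of profile classes of $\tup b$'s coordinates on $S'$, so there are at most $\numprofiles{r}{G}{c|A|}^{d}$ relevant ``witness types'', each realized by a representative $d$-tuple. Thus it suffices to search for a valid $P$ among families of at most $p$ representative tuples: if some $P\subseteq V^{\tup y}$ with $|P|\le p$ works, then replacing each witness of $P$ by the representative of its type yields an equally good family of size $\le p$. We enumerate all subsets of size $\le p$ of the $\le\numprofiles{r}{G}{c|A|}^{d}$ witness types --- at most $\numprofiles{r}{G}{c|A|}^{pd}$ of them, up to a factor depending only on $p$ --- and for each check whether every tuple in $A$ disagrees with one of the chosen representatives, taking $\Oh(p\cdot|A|)$ time per subset; we output the first valid family as actual vertex tuples, or conclude none exists. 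Treating $p$ as a constant (it is the fixed Helly number in all our applications), the total is $\Oh(|A|\cdot\|G\|+|A|\cdot\numprofiles{r}{G}{c|A|}^{pd})$. I do not expect a genuine obstacle here: this is the routine ``compute with distance types rather than with vertices'' reduction, and the only points that need care are the correctness reductions --- that passing to profile-class representatives preserves the relevant agreement pattern --- together with, for the Strong Witness Oracle, the counting of witness-type families.
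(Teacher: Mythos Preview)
Your proposal is correct and follows essentially the same approach as the paper: compute distance-$r$ profiles on the pivot set via BFS from each pivot vertex, then enumerate tuples of profile classes rather than tuples of vertices. The only cosmetic difference is that for the Strong Witness Oracle the paper enumerates ordered $pd$-tuples of profiles directly (giving exactly $\numprofiles{r}{G}{c|A|}^{pd}$ tuples), whereas you enumerate size-$\le p$ subsets of witness types and absorb the resulting combinatorial factor into the constant; both routes yield the stated bound.
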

\begin{proof}
Consider first the Candidate Oracle.
Let $S\subseteq V$ be the set of all vertices contained in tuples from~$B$; then $|S|\leq d|B|=\Oh(|B|)$.
Apply a BFS from every vertex of $S$ to compute, for every $u\in V$, a vector of distances from $u$ to all the vertices of $S$.
This takes time $\Oh(|S|\cdot \|G\|)=\Oh(|B|\cdot \|G\|)$.
Given this information, for every vertex $u\in V$ we may compute its distance-$r$ profile on $S$, 
and in particular we may compute the set $\Pp$ of all distance-$r$ profiles on $S$ realized in $G$.
This can be done in time $\Oh(|S|\cdot |V|)$ by adding profiles of all the vertices to a trie.
Note that $|\Pp|\leq \numprofiles{r}{G}{d|B|}$.

Next, we look for a candidate $\tup a\in V^{\tup x}$ that agrees with all elements of $B$ by considering all $c$-tuples of profiles in $\Pp$.
Here, the crucial observation is that since $\varphi$ is a distance formula, whether a candidate $\tup a$ agrees with a fixed witness $\tup b\in B$ depends 
only on the distance-$r$ profiles of the entries of $\tup a$ on $S$. Hence, if suffices to consider only $c$-tuples of profiles in $\Pp$, and not all $c$-tuples of vertices of $G$.
Checking a given $c$-tuple of profiles in $\Pp$ against a fixed witness $\tup b\in B$ can be done in constant time, 
so the whole search for a candidate can be executed in time $\Oh(|B|\cdot |\Pp|^c)\leq \Oh(|B|\cdot \numprofiles{r}{G}{d|B|}^c)$.
Note that a suitable candidate can be retrieved by storing together every profile of $\Pp$ any candidate $\tup a\in V^{\tup x}$ realizing this profile.

The implementation of the Strong Witness Oracle follows from applying exactly the same idea: $S$ is the set of all vertices involved in tuples in $A$, there are at most $\numprofiles{r}{G}{c|A|}$
relevant distance-$r$ profiles on $S$, and we need to search through $pd$-tuples of such profile.
The Weak Witness Oracle can be implemented by simply running the Strong Witness Oracle for a singleton set $A$ and $p=1$.
\end{proof}

\paragraph*{Algorithmic consequences.}
We are ready to present our algorithmic corollaries, promised in \cref{sec:intro}.
Throughout this section, when stating parameterized running times we use $k$ to denote the target size of a solution (i.e., distance-$r$ dominating or independent set).
We start with the domination problems.

\begin{theorem}\label{thm:semi-ladder-algo}
Fix $r\in \N$ and let $\Cc$ be a class of graphs such that for each $q\le r$, the class $\delta_q(\Cc)$ has finite semi-ladder index.
Then, for any positive distance formula $\phi(\tup x;\tup y)$
of radius at most $r$ and size $k$, 
the domination-type problem corresponding to $\phi$ can be solved 
 on $\Cc$ in time $f(k)\cdot \|G\|$, for some function $f$.
\end{theorem}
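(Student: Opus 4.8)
The plan is to instantiate the Semi-ladder Algorithm on the bipartite graph $H=\phi(G)$, implement its two oracles on $G$ via \cref{lem:oracles}, and bound the number of rounds using \cref{cor:dist-bound} together with the hypothesis. First I would fix the combinatorial constant: since for every $q\le r$ the class $\delta_q(\Cc)$ has finite semi-ladder index, there is $\ell\in\N$, depending only on $r$ and $\Cc$, such that $\delta_q(G)$ has semi-ladder index smaller than $\ell$ for all $G\in\Cc$ and all $q\le r$. Applying \cref{cor:dist-bound}, the bipartite graph $\phi(G)$ then has semi-ladder index smaller than a bound $\ell'$ that depends only on $\ell$ and the size $k$ of $\phi$ (one may take $\ell'=R^k(\ell)$, and $\ell'=k^{\ell-1}$ in the special case $\phi=\delta^k_r$). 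In particular $\ell'$ depends only on $k$, $r$, and $\Cc$, and is independent of $G$.

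Next, recall that $G$ satisfies the domination-type property $\exists_{\tup x}\forall_{\tup y}\,\phi(\tup x;\tup y)$ precisely when the right part of $H:=\phi(G)$ is covered by its left part; that is, this is an instance of \textsc{Coverage} on $H$. I would run the Semi-ladder Algorithm on $H$, accessing $H$ only through the Candidate Oracle and the Weak Witness Oracle (the latter being a special case of the Strong Witness Oracle), both implemented on the input graph $G$ by \cref{lem:oracles}. By \cref{cor:semi-ladder-algo}, the algorithm is correct and terminates after $\Oh(\ell')$ full rounds; it therefore issues $\Oh(\ell')$ Candidate Oracle calls, each with a witness set $B$ of size at most $\ell'$, and $\Oh(\ell')$ Weak Witness Oracle calls.

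It remains to bound the cost of each oracle call. Writing $c=|\tup x|\le k$ and $d=|\tup y|\le k$ for the constants determined by $\phi$, \cref{lem:oracles} runs a Candidate Oracle call in time $\Oh(|B|\cdot\|G\| + |B|\cdot\numprofiles{r}{G}{d|B|}^c)$ and a Weak Witness Oracle call in time $\Oh(\|G\| + \numprofiles{r}{G}{c}^d)$. The key observation is that at this stage no sparsity-based bound on the distance-profile complexity is needed: the trivial estimate $\numprofiles{r}{G}{m}\le (r+2)^m$ already suffices, because in every call the argument $m$ is bounded by $\max(d\ell',c)$, which depends only on $k$, $r$, and $\Cc$. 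Hence the additive terms $\numprofiles{r}{G}{d|B|}^c$ and $\numprofiles{r}{G}{c}^d$ are bounded by a function of $k$ alone (with $r$, $\Cc$ fixed), and since $\|G\|\ge 1$ they are absorbed into the linear term. Thus each oracle call runs in time $f_0(k)\cdot\|G\|$ for a suitable function $f_0$, and summing over the $\Oh(\ell')$ rounds yields total running time $f(k)\cdot\|G\|$ with $f(k)=\Oh(\ell'\cdot f_0(k))$, as claimed.

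I do not expect a genuine obstacle here; the only point requiring care is the bookkeeping that every quantity occurring in an exponent in \cref{lem:oracles} — the size of $B$, and $c$ and $d$ — is bounded purely in terms of the parameter and independent of $G$, so that the crude $(r+2)^m$ bound on profile complexity is enough and no result about nowhere-dense (or related) classes is invoked. Those structural results are needed only to \emph{verify the hypothesis} of this theorem, namely that $\delta_q(\Cc)$ has finite semi-ladder index, which is exactly the content of \cref{thm:semi-ladder} for the concrete classes of interest.
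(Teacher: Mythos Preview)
Your proposal is correct and follows essentially the same route as the paper: fix $\ell$ from the hypothesis, bound the semi-ladder index of $\phi(G)$ via \cref{cor:dist-bound} (the paper cites \cref{lem:bool-comb} directly, which amounts to the same thing), run the Semi-ladder Algorithm with the oracle implementations of \cref{lem:oracles}, and combine \cref{cor:semi-ladder-algo} with the trivial bound $\numprofiles{r}{G}{m}\le (r+2)^m$. Your write-up is slightly more explicit about the running-time bookkeeping than the paper's proof (which defers that to \cref{rem:runtime}), but the argument is the same.
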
    
\begin{proof}
W.l.o.g. we can assume that $|\tup x|,|\tup y|\leq k$.
Let $\ell\in \N$ be such that $\delta_q(\Cc)$ has semi-ladder index smaller than $\ell$, for all $q\le r$.
Given a graph $G$, we apply the Semi-Ladder Algorithm for the {\sc{Coverage}} problem in the graph $\formulaGraph{\phi}{G}$ with implementations of oracles provided by \cref{lem:oracles}.
By \cref{lem:bool-comb} we conclude the semi-ladder index of $\formulaGraph{\phi}{\Cc}$ is bounded by~$R^k(\ell)$.
Now the claimed running time follows immediately from \cref{cor:semi-ladder-algo} and \cref{lem:oracles}.
\end{proof}

\begin{remark}\label{rem:runtime}
% Let us analyze more closely the upper bounds on function $f$ provided by the proof of Theorem~\ref{thm:semi-ladder-algo}.
By \cref{cor:semi-ladder-algo} and \cref{lem:oracles}, the running time is actually
$\Oh(p\cdot \numprofiles{r}{\Cc}{p}^k\cdot \|G\|)$, where~$p$ is the semi-ladder index of $\formulaGraph{\phi}{G}$.
By \cref{lem:bool-comb}, we have that $p\leq R^k(\ell)$, which is upper-bounded by~$k^{k\ell-1}$ for $k\geq 2$ (see \cref{lem:ramsey-precise}).
Combining this with the trivial upper bound $\numprofiles{r}{\Cc}{p}\leq (r+2)^{p}$ yields $f(k)\leq 2^{2^{\Oh(k\log k)}}$, where $r$ and $\ell$ are considered fixed constants.
However, if a priori we know for the graph class $\Cc$ that $\numprofiles{r}{\Cc}{m}$ is polynomial in $m$, instead of exponential,
then by the analysis above we obtain $f(k)\leq 2^{\Oh(k^2\log k)}$.
Finally, by \cref{lem:disj-comb}, for $\phi=\delta^k_r$ --- the formula corresponding to the \rDomSet problem --- we can use a sharper bound of $p\leq k^{\ell-1}$.
Thus, for this case we obtain an upper bound of $f(k)\leq 2^{\poly(k)}$ in the general setting, and $f(k)\leq 2^{\Oh(k\log k)}$ when $\numprofiles{r}{\Cc}{m}$ is polynomial in $m$.
\end{remark}

%Note that the term $k^{2\ell}\cdot \numprofiles{r}{\Cc}{k^{2\ell}}^k$ is always bounded by $2^{\poly(k)}$ for fixed $r$ and $\ell$, because $\numprofiles{r}{\Cc}{k^{2\ell}}\leq (r+2)^{k^{2\ell}}$.
%However, whenever $\numprofiles{r}{\Cc}{m}$ is polynomial in $m$, then this term is bounded by $2^{\Oh(k\log k)}$.
Now, using \cref{thm:semi-ladder-algo} together with combinatorial results stated in \cref{sec:helly} we immediately obtain the algorithmic results promised in \cref{sec:intro}.
Note that the results hold not only for \rDomSet, but even for every domination-type problem of fixed radius $r$ and size $k$ that is considered the parameter.

\pagebreak
   \begin{theorem}\label{thm:ds-nd}
        Fix $r\in \N$. Then any
        domination-type problem 
        defined by a positive distance formula of size~$k$ and radius at most $r$ can be solved in time $2^{\Oh(k^2\log k)}\cdot \|G\|$ on any graph class $\Cc$
        such that either $\Cc$ is nowhere dense, or $\Cc=\Dd^s$ for a nowhere dense class $\Dd$ and some $s\in \N$, or $\Cc$ is the class of map graphs, or $r=1$ and $\Cc$ is the class of $K_{t,t}$-free graphs
        for some fixed $t\in \N$.
        Moreover, if this domination-type problem is \rDomSet for parameter $k$, then the running time can be improved to $2^{\Oh(k\log k)}\cdot \|G\|$.
    \end{theorem}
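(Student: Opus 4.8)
The statement is obtained by assembling the pieces established so far, together with a careful accounting of the exponents. First I would invoke \cref{thm:semi-ladder}: for the fixed radius $r$ and for every $q\le r$, the formula $\delta_q(x,y)$ has bounded semi-ladder index on $\Cc$ in each of the four cases considered ($\Cc$ nowhere dense, $\Cc=\Dd^s$ for a nowhere dense $\Dd$, $\Cc$ the class of map graphs, and — when $r=1$ — $\Cc$ the class of $K_{t,t}$-free graphs). Hence there is a single $\ell\in\N$ with $\delta_q(\Cc)$ of semi-ladder index smaller than $\ell$ for all $q\le r$, and \cref{thm:semi-ladder-algo} already yields an algorithm solving the domination-type problem defined by any positive distance formula $\phi$ of radius at most $r$ and size $k$ in time $f(k)\cdot\|G\|$. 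It remains to pin down $f(k)$ in these classes.

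To do so I would use \cref{rem:runtime}, which records that running the Semi-ladder Algorithm with the oracle implementations of \cref{lem:oracles} costs $\Oh(p\cdot\numprofiles{r}{\Cc}{p}^{k}\cdot\|G\|)$, where $p$ is the semi-ladder index of $\phi(G)$; and by \cref{cor:dist-bound} (via \cref{lem:bool-comb}) we have $p\le R^k(\ell)\le k^{k\ell-1}$. The only additional fact needed is that $\numprofiles{r}{\Cc}{m}$ grows polynomially in $m$. For nowhere dense $\Cc$ this is \cref{lem:num-profiles} (taking $\eps=1$, say, $\numprofiles{r}{\Cc}{m}=\Oh(m^2)$); for powers of nowhere dense classes and for map graphs it is \cref{lem:prof-comp-powers}, and for $K_{t,t}$-free graphs with $r=1$ it is \cref{lem:prof-comp-ktt}, giving $\numprofiles{1}{\Cc}{m}=\Oh(m^t)$. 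Writing $\numprofiles{r}{\Cc}{m}\le C m^d$ with $C,d$ constants depending only on $r$ and $\Cc$, we get $\numprofiles{r}{\Cc}{p}^{k}\le C^k p^{dk}\le C^k k^{dk(k\ell-1)}=2^{\Oh(k^2\log k)}$, treating $\ell$, $C$, $d$, $r$ as constants; since also $p\le k^{k\ell-1}=2^{\Oh(k\log k)}$, the total running time is $2^{\Oh(k^2\log k)}\cdot\|G\|$, i.e.\ $f(k)\le 2^{\Oh(k^2\log k)}$.

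For the improved bound in the case of \rDomSet, note that $\phi=\delta_r^k$ is exactly of the special syntactic shape covered by \cref{lem:disj-comb} (equivalently, the ``moreover'' part of \cref{cor:dist-bound}), so the semi-ladder index of $\phi(G)$ obeys the sharper estimate $p\le k^{\ell-1}$. Substituting this for $p\le k^{k\ell-1}$ in the computation above yields $\numprofiles{r}{\Cc}{p}^{k}\le C^k p^{dk}\le C^k k^{dk(\ell-1)}=2^{\Oh(k\log k)}$, and hence $f(k)\le 2^{\Oh(k\log k)}$.

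The bulk of the proof is bookkeeping rather than a genuine obstacle: one must match each of the four graph classes to the appropriate polynomial profile-complexity bound (with its degree $d$), and make sure that $\ell$, $d$ and $r$ are genuinely constants once $r$ and the class are fixed, so that the only surviving $k$-dependence in the exponent is the $k^2\log k$ term (resp.\ $k\log k$ for \rDomSet). The single place where an ingredient beyond the present section is used is the extension of \cref{lem:num-profiles} to powers of nowhere dense classes, map graphs, and $K_{t,t}$-free graphs, which is handled in \cref{sec:omitted} via \cref{lem:prof-comp-powers} and \cref{lem:prof-comp-ktt}.
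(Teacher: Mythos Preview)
Your proposal is correct and follows essentially the same route as the paper: invoke \cref{thm:semi-ladder} for bounded semi-ladder index, invoke \cref{lem:num-profiles} together with \cref{lem:prof-comp-powers} and \cref{lem:prof-comp-ktt} for polynomial profile complexity, and then apply \cref{thm:semi-ladder-algo} with the quantitative analysis of \cref{rem:runtime}. Your write-up simply spells out the bookkeeping that the paper's three-line proof leaves implicit in its reference to \cref{rem:runtime}.
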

\begin{proof}%[Proof of Theorem~\ref{thm:ds-nd}]
By \cref{thm:semi-ladder}, the class $\delta_r(\Cc)$ has finite semi-ladder index.
By \cref{lem:num-profiles} and its strengthenings, \cref{lem:prof-comp-powers} and \cref{lem:prof-comp-ktt}, $\numprofiles{r}{\Cc}{m}$ is bounded by a polynomial in~$m$.
Hence, we may apply \cref{thm:semi-ladder-algo}; the claimed running times follow from the remark following it.
\end{proof}
        
 %   \begin{theorem}\label{thm:ds-ktt}
 %           For any fixed $t\in \N$, any dominating-type problem of radius $1$ and size $k$ can be solved
%            on $K_{t,t}$-free graphs in time $2^{\Oh(k\log k)}\cdot \|G\|$.
%    \end{theorem}
%\begin{proof}%[Proof of Theorem~\ref{thm:ds-ktt}]
%Let $\Cc$ be the class of $K_{t,t}$-free graphs.
%By Theorem~\ref{thm:semi-ladder}, the class $\delta_1(\Cc)$ has finite semi-ladder index.
%By Lemma~\ref{lem:prof-comp-ktt} (see Appendix~\ref{sec:omitted}), $\numprofiles{1}{\Cc}{m}$ is polynomial in $m$.
%Again, we may hence apply Theorem~\ref{thm:semi-ladder-algo}.
%\end{proof}

We now move to the independence problems, for which we apply the Ladder algorithm.

\begin{theorem}\label{thm:ladder-thm}
    Let $r\in \N$ and let $\Cc$ be a class of graphs such that for any $k\in \N$, the class $\eta^k_r(\Cc)$ has ladder index smaller than $\ell(k)$ and has the weak $p(k)$-Helly property, 
    for some functions $\ell,p\colon \N\to \N$. Then the \rIndSet problem on $\Cc$ can be solved in time $f(k)\cdot \|G\|$, for some function $f$.
    \end{theorem}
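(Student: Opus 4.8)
The plan is to combine the Ladder Algorithm for {\sc Coverage} (Corollary \ref{cor:ladder-algo}) with the oracle implementations of Lemma \ref{lem:oracles}, applied to the bipartite graph $H = \eta^k_r(G)$. First I would observe that $\eta^k_r(\tup x;y)$ is a distance formula of radius $r$, with $|\tup x| = k$ and $|\tup y| = 1$, so Lemma \ref{lem:oracles} applies verbatim: the Candidate, Weak Witness, and Strong Witness Oracles can all be implemented in time $\Oh(m \cdot \|G\| + m \cdot \numprofiles{r}{G}{\Oh(m)}^{\Oh(k)})$ for appropriate $m$ bounding the number of tuples passed in, where the hidden constants depend only on $k$, $r$, and $p(k)$. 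Crucially, these running times are linear in $\|G\|$ with a multiplicative constant depending only on $k$ (and the fixed $r$), using the trivial bound $\numprofiles{r}{G}{m} \leq (r+2)^m$; no assumption on profile complexity of $\Cc$ is needed for the weaker statement with an unspecified $f(k)$.

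Next I would invoke the hypotheses: the class $\eta^k_r(\Cc)$ has ladder index smaller than $\ell(k)$ and the weak $p(k)$-Helly property. By Lemma \ref{lem:correctness-ladder}, the Ladder Algorithm run with parameter $p = p(k)$ on $H = \eta^k_r(G)$ is correct, since $H$ has the weak $p(k)$-Helly property. By Corollary \ref{cor:ladder-algo}, the algorithm terminates after at most $q = R^{p(k)}(\ell(k)) - 1$ full rounds, making at most $q+1$ Candidate Oracle calls (each with a witness set of size at most $p(k) q$), and at most $q$ Strong Witness Oracle calls (each with a candidate set of size at most $q$). Since $q$ depends only on $k$ (through $\ell$ and $p$) and $r$, each oracle call runs in time $g(k) \cdot \|G\|$ for some function $g$, and hence the total running time is $f(k) \cdot \|G\|$ for $f(k) = q \cdot g(k) \cdot \poly(q, p(k))$.

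The final step is to connect the answer of the {\sc Coverage} problem on $H$ to the answer of the \rIndSet problem on $G$. Here I would recall, from the discussion following Example \ref{ex:problems} and in the paragraph on independence problems, that $G$ contains a distance-$r$ independent set of size $k$ if and only if $G \models \exists_{\tup x}\forall_y\, \eta^k_r(\tup x;y)$, which holds if and only if the right part of $\eta^k_r(G)$ is covered by its left part --- precisely the {\sc Coverage} problem. So the Ladder Algorithm's output directly gives the answer. I do not expect a genuine obstacle here: the theorem is essentially an assembly of the algorithmic machinery already set up. The one point requiring a little care is bookkeeping the parameter dependencies --- verifying that $m$ in each oracle call (the size of the witness or candidate set times the number of variables) stays bounded by a function of $k$ alone, so that the $\numprofiles{r}{G}{\Oh(m)}^{\Oh(k)}$ factor is a constant depending only on $k$ and $r$ rather than growing with $\|G\|$; this follows from Corollary \ref{cor:ladder-algo} together with $|\tup y| = 1$.
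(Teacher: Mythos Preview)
Your proposal is correct and follows essentially the same approach as the paper: apply the Ladder Algorithm to $H=\eta^k_r(G)$ using the oracle implementations of \cref{lem:oracles} (noting that $\eta^k_r$ is a distance formula with $c=k$, $d=1$), and conclude correctness and the round bound from \cref{cor:ladder-algo}. Your write-up is in fact more detailed than the paper's two-line proof, and your observation that the trivial bound $\numprofiles{r}{G}{m}\le (r+2)^m$ suffices (so no extra hypothesis on $\Cc$ is needed) is exactly right.
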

\begin{proof}%[Proof of Theorem~\ref{thm:ladder-thm}]
Given a graph $G$, we apply the Ladder Algorithm in the graph $\eta^k_r(\Cc)$ with implementations of oracles provided by \cref{lem:oracles}.
The correctness of the algorithm and the running time bound follow directly from \cref{cor:ladder-algo} and \cref{lem:oracles}, where we may set 
$f(k)=\Oh(R^{p(k)}(2\ell(k))\cdot \numprofiles{r}{\Cc}{p(k)\cdot R^{p(k)}(2\ell(k)))}^{p(k)\cdot k})$.
\end{proof}

\begin{theorem}\label{thm:is-nd}
For any $r\in \N$ and nowhere dense class $\Cc$, the \rIndSet problem on $\Cc$ can be solved in time $f(k)\cdot \|G\|$, for some function $f$.
\end{theorem}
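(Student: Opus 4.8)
The plan is to reduce the statement directly to the combinatorial inputs already established in the excerpt. By \cref{thm:ladder-thm}, it suffices to verify that for a fixed nowhere dense class $\Cc$ and fixed $r\in\N$, for every $k\in\N$ the class $\eta^k_r(\Cc)$ has bounded ladder index and has the weak $p$-Helly property for some $p=p(k)$. The second of these is exactly \cref{thm:ind-weak}. For the first, I would invoke \cref{thm:stability}: since $\Cc$ is nowhere dense and $\eta^k_r(\tup x;y)$ is a single first-order formula (with free variables partitioned as $\tup x$ on the left and $y$ on the right), \cref{thm:stability} gives that $\eta^k_r$ has bounded ladder index on $\Cc$, i.e., $\eta^k_r(\Cc)$ has ladder index at most some $\ell(k)$. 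Feeding $\ell(k)$ and $p(k)$ into \cref{thm:ladder-thm} yields an algorithm running in time $f(k)\cdot\|G\|$ for a suitable function $f$, which is the claimed bound.

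More concretely, the two ingredients plug in as follows. First, note that $\eta^k_r(\tup x;y)$ has $|\tup x|=k$ and $|\tup y|=1$, and it is a boolean combination of distance atoms, so in particular it is a first-order formula in the vocabulary of graphs; hence \cref{thm:stability} applies verbatim and bounds its ladder index on $\Cc$ by some $\ell(k)\in\N$ depending on $k$, $r$, and $\Cc$. Second, \cref{thm:ind-weak} provides $p(k)\in\N$, again depending on $k$, $r$, and $\Cc$, such that $\eta^k_r(\Cc)$ has the weak $p(k)$-Helly property. With both functions in hand, \cref{thm:ladder-thm} is directly applicable and gives the running time $f(k)\cdot\|G\|$ with, explicitly, $f(k)=\Oh(R^{p(k)}(2\ell(k))\cdot\numprofiles{r}{\Cc}{p(k)\cdot R^{p(k)}(2\ell(k))}^{p(k)\cdot k})$, which is finite for each fixed $k$ since $\Cc$ is nowhere dense (so $\numprofiles{r}{\Cc}{m}<\infty$ for every $m$ by \cref{lem:num-profiles}).

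The only genuinely substantial step is \cref{thm:ind-weak} — establishing the weak Helly property for $\eta^k_r(\Cc)$ — but that theorem is stated (and its proof deferred to \cref{sec:independence}) in the excerpt, so here it may be used as a black box. Everything else is bookkeeping: checking that $\eta^k_r$ is a legitimate first-order formula to which \cref{thm:stability} applies, and that the hypotheses of \cref{thm:ladder-thm} are met. Thus the proof of \cref{thm:is-nd} is essentially a one-line assembly: combine \cref{thm:stability}, \cref{thm:ind-weak}, and \cref{thm:ladder-thm}. I would also add a sentence pointing out that, in contrast to the domination case, one cannot use the Semi-ladder Algorithm here, since $\eta^k_r(\Cc)$ need not have the strong Helly property (as already observed in the excerpt for the class of edgeless graphs), which is exactly why the Ladder Algorithm — and hence the weak Helly property of \cref{thm:ind-weak} — is the right tool.
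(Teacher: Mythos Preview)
Your proposal is correct and follows exactly the same approach as the paper: invoke \cref{thm:stability} to bound the ladder index of $\eta^k_r(\Cc)$, invoke \cref{thm:ind-weak} for the weak $p$-Helly property, and then apply \cref{thm:ladder-thm}. The paper's own proof is in fact even terser than yours---it is essentially the one-line assembly you describe---so your additional remarks (the explicit form of $f(k)$ and the observation about why the Semi-ladder Algorithm does not apply) are supplementary but consistent.
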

\begin{proof}%[Proof of Theorem~\ref{thm:is-nd}]
By \cref{thm:stability} and~\ref{thm:ind-weak}, for every $k\in \N$ there are constants $\ell,p\in \N$, 
depending on $k$, such that the class $\eta^k_r(\Cc)$ has ladder index bounded by $\ell$ and has the weak $p$-Helly property.
This allows us to apply \cref{thm:ladder-thm}.
\end{proof}

\paragraph*{Discussion of related results.}
% We now put the algorithmic results of Theorems~\ref{thm:ds-nd} and~\ref{thm:is-nd} in the context of the existing literature.
Fixed-parameter tractability of both \rDomSet and \rIndSet on any nowhere dense class follows from the general model-checking result for first-order logic of Kreutzer et al.~\cite{GroheKS17}.
The algorithms derived in this manner have running time $f(k)\cdot n^{1+\eps}$ for any fixed $\eps>0$ and some function $f$, where $n$ is the number of vertices of the input graph.
In fact, an algorithm with running time $f(k)\cdot n^{1+\eps}$ for the \rIndSet problem is one of the intermediate results used in~\cite{GroheKS17}.
A close inspection of this algorithm reveals that the polynomial factor is in fact $\|G\|$, improving the claimed $n^{1+\eps}$, however this is not explicit in~\cite{GroheKS17}.
For the \rDomSet problem, its fixed-parameter tractability on any nowhere dense class was established earlier by Dawar and Kreutzer~\cite{DawarK09},
but their algorithm had at least a quadratic polynomial factor in the running time bound.

As far as {\sc{Distance-$r$ Dominating Set}} on powers of nowhere dense classes is concerned, 
we remark that the result provided in \cref{thm:ds-nd} would {\em{not}} follow immediately from applying the algorithm on the graph before taking the power, for radius $rs$ instead of $r$.
The reason is that the input consists only of the graph $G^s$, and it is completely unclear how to algorithmically find the preimage $G$ if we are dealing with an arbitrary nowhere dense class~$\Dd$.
To the best of our knowledge, this result is a completely new contribution.

Regarding map graphs,
the fixed-parameter tractability of the {\sc{Distance-$r$ Dominating Set}} problem on this class of graphs was established by Demaine et al.~\cite{DemaineFHT05}.
However, they use the recognition algorithm for map graphs of Thorup~\cite{Thorup98} to draw a map model of the graph; this algorithm has an estimated running time of at least $\Oh(n^{120})$~\cite{Chen01a} and 
not all technical details have been published. Another way of obtaining a fixed-parameter algorithm would be to use the fact that map graphs have {\em{locally bounded rank width}}; 
%which follows from the
%observation of Chen et al.~\cite{ChenGP02} that map graphs are half-squares of planar graphs, and the fact that planar graphs have locally bounded treewidth. 
however, again achieving linear running time would be difficult due to the need of computing
branch decompositions with approximately optimum rankwidth, for which the best known algorithms have cubic running time. In contrast, as we have shown, 
the Semi-ladder Algorithm solves the problem in linear fixed-parameter time without the need of having a map model provided. 
%Recent characterizations of powers of bounded expansion classes~\cite{GajarskyKNMPST18,kwon2017low} fall short of giving efficient algorithms.

Finally, the fixed-parameter tractability of {\sc{Dominating Set}} on $K_{t,t}$-free graphs, where both $k$ and $t$ are considered parameters, was established by Telle and Villanger~\cite{TelleV12}.
Thus, \cref{thm:ds-nd} reproves this result and also improves upon the running time: from $2^{\Oh(k^{t+2})}\cdot \|G\|$ of~\cite{TelleV12} to $2^{\Oh(k\log k)}\cdot \|G\|$.

\section{Domination cores}\label{sec:cores}

As we remarked earlier, the argument used in the proof of \cref{lem:nd-dom-semiladder} is similar to the reasoning that shows that graphs from a fixed nowhere dense class admit small {\em{distance-$r$ domination cores}}: 
subsets of vertices whose distance-$r$ domination forces 
distance-$r$ domination of the whole graph; this concept was introduced by Dawar and Kreutzer~\cite{DawarK09}.
In fact, we can lift this result to the more general setting, for
domination-type properties with bounded semi-ladder indices, 
as follows.

Let $H=(L,R,E)$ be a bipartite graph.
As before, we consider the left side $L$ to be the set of 
candidates and the right side $R$ to be the set of witnesses.
An edge between a candidate $a\in L$ and a witness $b\in R$ is interpreted as that $a$ and $b$ agree, that is, $b$ agrees 
that $a$ is a solution. 
Here, the considered bipartite graph~$H$ will be of the form 
$\varphi(G)$ a some formula~$\varphi(\tup x;\tup y)$ expressing a domination-type problem.

A \emph{coverage core} for a set of witnesses $S\subset R$
is a subset $C\subset S$ such that 
every candidate $a\in L$ which agrees with all the witnesses from $C$,
actually agrees with all the witnesses from $S$. A coverage core for $H$ is a 
coverage core for $R$.

Our goal is to give an algorithm that for domination-type problems computes small coverage cores.
As before, we allow our algorithms a restricted access to $H$ 
via an oracle.

\bigskip

\begin{minipage}{0.96\textwidth}
\noindent{\bf{Semi-ladder Extension Oracle}}:
Given a set of witnesses $B\subseteq R$, the oracle either 
returns a candidate $a\in L$ and a witness 
$b\in R\setminus B$ such that
\begin{itemize}
\item $a$ and $b$ do not agree, that is, $(a,b)\not\in E$, and
\item $a$ agrees with all witnesses in $B$, 
\end{itemize}
or concludes that no such pair $a,b$ exists.
\end{minipage}

\bigskip

Before we show that Semi-ladder Extension Oracles for domination-type
properties can be efficiently implemented, we present an algorithm 
that computes a coverage core for $H$, 
given that $H$ has bounded semi-ladder index.

\medskip
\noindent\textbf{Coverage Core Algorithm.}
The Coverage Core Algorithm maintains 
two sequences $a_1,\ldots,a_n\in L$,
$b_1,\ldots,b_n\in R$ which form a semi-ladder in $H$.
Initially, both sequences are empty and $n=0$. 
 The algorithm repeats the following step.

 \bigskip

\begin{minipage}{0.96\textwidth} \noindent
\noindent{\bf{Extension step:}}
Let $B=\set{b_1,\ldots,b_n}$.
Apply the Semi-ladder Extension Oracle to the set $B$.
If the oracle does not return a pair,  
terminate and return~$B$ as a coverage core for $H$. 
Otherwise, let $(a,b)$ be the pair returned by the oracle.
Extend the semi-ladder by inserting $a$ as 
the element $a_{n+1}$ to the sequence  $a_1,\ldots,a_n$
and $b$ as the element $b_{n+1}$ to the sequence 
$b_1,\ldots,b_n$. Note that the sequences again 
form a semi-ladder in $H$. Proceed to the next round. 
\end{minipage}

\bigskip

\noindent The following lemma follows immediately from the fact 
that the Coverage Core Algorithm constructs in $n$ steps a semi-ladder
of length $n$. 

\begin{lemma}\label{lem:core-running-time}
The Coverage Core Algorithm applied to a bipartite 
graph $H$ with 
semi-ladder index
$\ell$ terminates after performing at most $\ell+1$ rounds. 
Consequently, it uses at most $\ell+1$ Semi-ladder Extension Oracle
calls, each involving a set of candidates $A$ with $|A|\leq \ell$ and a set of witnesses $B$ with $|B|\leq \ell$. 
\end{lemma}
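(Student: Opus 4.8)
The plan is to prove, by induction on the number of completed extension steps, that the two sequences maintained by the algorithm always form a semi-ladder in $H$, and then to read off the bound on the number of rounds directly from the definition of the semi-ladder index.

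For the invariant, before the first step both sequences are empty, which is vacuously a semi-ladder of order $0$. Suppose that after $n$ steps the sequences $a_1,\ldots,a_n$ and $b_1,\ldots,b_n$ form a semi-ladder of order $n$ in $H$, and consider the next extension step, in which the Semi-ladder Extension Oracle is applied to $B=\{b_1,\ldots,b_n\}$. If it returns no pair, nothing changes and the invariant is trivially preserved (and the algorithm terminates). Otherwise it returns a pair $(a,b)$ with $(a,b)\notin E$, with $a$ adjacent to every vertex of $B$, and with $b\in R\setminus B$. Setting $a_{n+1}=a$ and $b_{n+1}=b$, I would verify the two defining clauses of a semi-ladder of order $n+1$: the relations among indices $\leq n$ are untouched and hence still correct by the inductive hypothesis; the new diagonal relation $(a_{n+1},b_{n+1})=(a,b)$ is a non-edge by the oracle's guarantee; and the new below-diagonal relations $(a_{n+1},b_j)=(a,b_j)$ for $j\leq n$ are edges because $a$ agrees with all witnesses of $B$. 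Since the definition of a semi-ladder imposes no condition when $i<j$, there is nothing more to check, and the extended sequences again form a semi-ladder, now of order $n+1$.

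With the invariant established, the running-time bound is immediate. If the algorithm were to complete $\ell+1$ extension steps each returning a pair, then after these steps it would hold a semi-ladder of order $\ell+1$ in $H$, contradicting that the semi-ladder index of $H$ equals $\ell$. Hence at most $\ell$ steps return a pair, so the algorithm performs at most $\ell+1$ rounds in total; the final round is one in which the oracle reports that no extending pair exists, whereupon $B$ is returned as a coverage core (or an earlier such round). For the ``consequently'' part, in the round executed while the current sequences form a semi-ladder of order $n$ (so $n\leq\ell$) the oracle is called with the witness set $B=\{b_1,\ldots,b_n\}$ of size $n\leq\ell$, while the maintained candidate set $\{a_1,\ldots,a_n\}$ also has size $n\leq\ell$; together with the bound of $\ell+1$ rounds this yields the stated counts.

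I do not expect a real obstacle here; the proof is a routine induction followed by an appeal to the definition. The one point that deserves a moment's care is checking that appending $(a,b)$ preserves the semi-ladder property: the only relations one must verify for the new row and column are exactly the ones supplied by the oracle, namely $(a,b)\notin E$ and $a$ adjacent to every earlier witness, and the ``no condition for $i<j$'' clause means the older candidates $a_1,\ldots,a_n$ need not relate to the new witness $b$ in any prescribed way. It is also worth noting that $b\notin B$, as guaranteed by the oracle, so the witness sequence genuinely grows (in fact this is forced, since $a$ is adjacent to every element of $B$ but not to $b$).
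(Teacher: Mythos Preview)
Your proof is correct and follows essentially the same approach as the paper, which simply states that the lemma ``follows immediately from the fact that the Coverage Core Algorithm constructs in $n$ steps a semi-ladder of length $n$'' (a fact already noted inline in the description of the Extension step). You have carefully spelled out the inductive verification that the paper leaves implicit, but the underlying idea is identical.
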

%\begin{proof}
%Observe that in each round, one of the numbers
%$i$ or $n$ is incremented. Moreover,
%since $H$ has semi-ladder index $\ell$,
%the number $n$ remains bounded by $\ell$.
%Hence at some point, $i>n$ and the algorithm terminates,
%with the sizes of both sets $A$ and $B$ bounded by 
%$\ell$ throughout the algorithm. 
%\end{proof}

\begin{lemma}
The Coverage Core Algorithm applied to a bipartite graph $H$ 
returns a coverage core for $H$. 
\end{lemma}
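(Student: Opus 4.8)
The plan is to unwind the definition of a coverage core against the termination condition of the Extension step. First I would note that, by \cref{lem:core-running-time}, when $H$ has bounded semi-ladder index the algorithm halts (and in any case the assertion is naturally read as partial correctness, i.e.\ about the output upon termination). So consider the moment the algorithm terminates: it does so inside an Extension step in which the Semi-ladder Extension Oracle, applied to the current set $B=\set{b_1,\ldots,b_n}$, reports that no suitable pair exists, and it is exactly this $B$ that is returned. Since every $b_i$ lies in $R$, we have $B\subseteq R$, so $B$ is at least a legitimate candidate for a coverage core for $H$ (that is, for $R$).

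Next I would fix an arbitrary candidate $a\in L$ that agrees with all witnesses in $B$, and show that $a$ agrees with all witnesses in $R$. Suppose towards a contradiction that some $b\in R$ satisfies $(a,b)\notin E$. Because $a$ agrees with every witness of $B$, necessarily $b\notin B$, hence $b\in R\setminus B$. But then the pair $(a,b)$ meets precisely the three requirements the Semi-ladder Extension Oracle was asked to satisfy on input $B$: $a$ and $b$ do not agree, $a$ agrees with all witnesses in $B$, and $b\in R\setminus B$. This contradicts the oracle's conclusion that no such pair exists. Therefore $a$ agrees with all of $R$.

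Since $a$ was an arbitrary candidate agreeing with all of $B$, this is exactly the statement that $B$ is a coverage core for $R$, hence a coverage core for $H$. I do not anticipate a genuine obstacle: the proof is a direct unfolding of the definitions of the Extension step and of a coverage core. The only points that merit a word of care are (i) checking that $B\subseteq R$, so that $B$ qualifies as a coverage core at all, and (ii) that the correctness claim is a partial-correctness statement whose termination hypothesis is supplied by \cref{lem:core-running-time} in the bounded semi-ladder index regime.
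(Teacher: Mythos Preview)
Your proposal is correct and follows essentially the same approach as the paper: both arguments unfold the termination condition of the Semi-ladder Extension Oracle to conclude that any candidate agreeing with all of $B$ must agree with all of $R$. Your version is slightly more explicit (phrasing the key step as a contradiction, checking $B\subseteq R$, and remarking on partial correctness), but the underlying argument is the same.
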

\begin{proof}
The algorithm terminates if the Semi-ladder Extension 
Oracle does not return a candidate $a\in L$ and a 
witness $b\in R\setminus B$ for the set
$B=\set{b_1,\ldots,b_n}$. Unravelling the definition, 
this means that every candidate $a\in L$ which agrees with all witnesses from~$B$ also agrees with all witnesses from~$R\setminus B$. This means that 
$B$ is a coverage core for $H$. 
\end{proof}

Finally, we give an implementation of the Semi-ladder Extension
Oracle. 

\begin{lemma}\label{lem:le-extension-oracle}
Fix a distance formula $\phi(\bar x;\bar y)$ of radius $r$ with 
$|\bar x|=c$ and $|\bar y|=d$. Then for an input graph 
$G=(V,E)$, there is an implementation of Semi-ladder Extension 
Oracle calls in $\phi(G)$ with parameters
$B\subseteq V^{\bar y}$ running in time
\[\Oh\big(|G|^d\cdot (|B|\cdot \|G\|+|B|\cdot\numprofiles{r}{G}{d|B|+d}^{c})\big).\]
\end{lemma}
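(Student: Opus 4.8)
The plan is to mirror the implementation of the other distance-formula oracles from \cref{lem:oracles}, but adapted to the combined task of finding a candidate that agrees with all of $B$ together with a fresh witness $b\in R\setminus B$ that disagrees with it. The key observation, exactly as before, is that whether a candidate $\tup a\in V^{\tup x}$ agrees with a witness $\tup b\in V^{\tup y}$ depends only on the distance-$r$ profiles of the entries of $\tup a$ and $\tup b$ on the relevant small set of vertices. So the strategy is: fix a candidate witness $\tup b$ to play the role of the returned $b$, and then search for a candidate $\tup a$ that agrees with $B$ but disagrees with this particular $\tup b$; iterate over all choices of $\tup b\in V^{\tup y}$, of which there are $|G|^d$.

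First I would handle the preprocessing. Let $S\subseteq V$ be the set of all vertices occurring in tuples of $B$; then $|S|\le d|B|$. Running a BFS from every vertex of $S$ computes, for every $u\in V$, its vector of distances (capped at $r$) to $S$, in total time $\Oh(|S|\cdot\|G\|)=\Oh(|B|\cdot\|G\|)$; from this we extract in time $\Oh(|S|\cdot|V|)$ the set $\Pp_S$ of all distance-$r$ profiles on $S$ realized in $G$, with $|\Pp_S|\le\numprofiles{r}{G}{d|B|}$, storing with each profile a witnessing vertex. Next, for each fixed choice of the prospective returned witness $\tup b\in V^{\tup y}$ (there are at most $|G|^d$ of them), I would form the enlarged set $S_{\tup b}=S\cup\{\text{entries of }\tup b\}$, of size at most $d|B|+d$, run BFS again from its at most $d$ new vertices — costing $\Oh(d\cdot\|G\|)=\Oh(\|G\|)$ per choice — and recompute the set $\Pp_{S_{\tup b}}$ of distance-$r$ profiles on $S_{\tup b}$, of size at most $\numprofiles{r}{G}{d|B|+d}$. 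Then, exactly as in the proof of \cref{lem:oracles}, I would iterate over all $c$-tuples of profiles from $\Pp_{S_{\tup b}}$: for each such tuple, check in constant time whether the corresponding candidate agrees with every $\tup b'\in B$ and disagrees with the fixed $\tup b$; since $\varphi$ is a distance formula, both checks are determined by the profiles on $S_{\tup b}\supseteq S\cup\{\text{entries of }\tup b\}$. This inner search costs $\Oh(|B|\cdot\numprofiles{r}{G}{d|B|+d}^{c})$ per choice of $\tup b$. Summing over all $|G|^d$ choices of $\tup b$ and folding in the preprocessing gives the claimed bound $\Oh\big(|G|^d\cdot(|B|\cdot\|G\|+|B|\cdot\numprofiles{r}{G}{d|B|+d}^{c})\big)$. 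If no choice of $\tup b$ yields a suitable candidate, the oracle correctly reports that no pair exists.

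For correctness, one has to verify the two directions. If the procedure outputs a pair $(\tup a,\tup b)$, then by construction $\tup a$ agrees with all of $B$, disagrees with $\tup b$, and $\tup b\notin B$ (this last point needs a small remark: when enumerating $\tup b$ we may simply skip tuples already in $B$, or equivalently observe that any $\tup b\in B$ is automatically agreed with and hence never selected); so the pair is valid. Conversely, if some valid pair $(\tup a^\star,\tup b^\star)$ exists, then when the outer loop reaches $\tup b^\star$, the candidate $\tup a^\star$ has some $c$-tuple of profiles in $\Pp_{S_{\tup b^\star}}$, and since agreement is profile-determined, the stored representative of that profile-tuple is also a valid candidate for $\tup b^\star$, so the procedure succeeds. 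The main obstacle I anticipate is purely bookkeeping: being careful that the profile set recomputed for each $\tup b$ is taken on $S_{\tup b}$ (of size $d|B|+d$, which is why that quantity and not $d|B|$ appears inside $\numprofiles{r}{G}{\cdot}$), and that the per-$\tup b$ BFS cost of $\Oh(\|G\|)$ is absorbed into the $|G|^d\cdot|B|\cdot\|G\|$ term; there is no genuine combinatorial difficulty beyond what \cref{lem:oracles} already handles.
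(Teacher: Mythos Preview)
The proposal is correct and takes essentially the same approach as the paper: iterate over all potential witnesses $\tup b\in V^{\tup y}$ (skipping those in $B$) and, for each fixed $\tup b$, run a slight modification of the Candidate Oracle on $B\cup\{\tup b\}$ that additionally filters out profile-tuples agreeing with $\tup b$. Your write-up is more detailed than the paper's (in particular the split of the BFS work into a one-time part and a per-$\tup b$ part, and the explicit correctness argument), but the method and the running-time analysis are the same.
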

\begin{proof}
We iterate over all valuations $\bar b\in V^{\bar y}\setminus B$, 
giving a factor $|G|^d$ in the running time. Fix such 
$\bar b\in V^{\bar y}\setminus B$. Our task is to find 
$\bar a\in V^{\bar x}$ which
agrees with all witnesses from $B$, but which does not agree with $\bar b$. This can be done
by a slight modification of the Candidate Oracle (applied to $B\cup \{\bar b\}$), as presented in
\cref{lem:oracles}, running in time $\Oh(|B|\cdot \|G\|+|B|\cdot\numprofiles{r}{G}{d|B|+d}^{c})$. 
The modification boils down to ignoring $c$-tuples of profiles on the set of vertices involved in $B\cup \{\bar b\}$ that imply agreeing with $\tup b$. 
In total, we get the claimed  running time.
\end{proof}

\begin{corollary}
Fix $r\in \N$ and let $\Cc$ be a class of graphs such that for
each $q\leq r$, the class $\delta_q(\Cc)$ has
finite semi-ladder index. Then, for any positive distance formula
$\phi(\bar x, \bar y)$ of radius at most $r$, size at most $k$ 
and $|\bar y|=d$, 
the domination-type problem corresponding to $\phi$ on $\Cc$
admits a coverage core of size $f(k)$ and such a core can be computed 
in time $f(k)\cdot |G|^d\cdot \|G\|$, for some
function $f$. 
\end{corollary}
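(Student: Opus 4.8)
The plan is to combine the Coverage Core Algorithm with the implementation of the Semi-ladder Extension Oracle from \cref{lem:le-extension-oracle}, controlling the number of rounds via a semi-ladder index bound obtained from \cref{cor:dist-bound}. First I would observe that since $\phi$ is a positive distance formula of radius at most $r$ and size at most $k$, and since each $\delta_q(\Cc)$ for $q\le r$ has finite semi-ladder index, \cref{cor:dist-bound} gives a uniform bound $\ell = \ell(k)$ on the semi-ladder index of $\phi(G)$ for all $G\in\Cc$. (In fact, applying \cref{lem:bool-comb}, if $\delta_q(\Cc)$ has semi-ladder index below some fixed $\ell_0$ for all $q\le r$, then $\phi(\Cc)$ has semi-ladder index below $R^k(\ell_0)$, a function of $k$ only once $r$ and $\Cc$ are fixed.)

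Next I would invoke the two lemmas just proved about the Coverage Core Algorithm: by the correctness lemma, the algorithm returns a coverage core $B$ for $\phi(G)$, and by \cref{lem:core-running-time}, since $\phi(G)$ has semi-ladder index at most $\ell(k)$, the algorithm terminates after at most $\ell(k)+1$ rounds, each issuing one Semi-ladder Extension Oracle call with $|A|,|B|\le \ell(k)$. In particular, the returned core $B$ has size at most $\ell(k)$, so I can set $f(k) \ge \ell(k)$ to get the size bound on the core; note the core is a subset of witnesses, i.e.\ of $V^{\bar y}$, but its ``size'' as a set of tuples is what is bounded, which matches the phrasing of the statement.

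For the running time, I would plug the bound $|B|\le \ell(k)$ into the per-call cost from \cref{lem:le-extension-oracle}, namely $\Oh(|G|^d\cdot(|B|\cdot\|G\| + |B|\cdot\numprofiles{r}{G}{d|B|+d}^c))$ with $c=|\bar x|\le k$. Since $|B|\le\ell(k)$ and the number of calls is at most $\ell(k)+1$, the total is $\Oh(\ell(k)^2\cdot |G|^d\cdot(\|G\| + \numprofiles{r}{G}{d\ell(k)+d}^k))$. The profile-complexity term $\numprofiles{r}{G}{d\ell(k)+d}^k$ is bounded by $(r+2)^{k(d\ell(k)+d)}$, which depends only on $k$ (with $r,d,\Cc$ fixed), so absorbing everything except $|G|^d\cdot\|G\|$ into a function $f(k)$ yields the claimed bound $f(k)\cdot|G|^d\cdot\|G\|$.

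The only subtle point — and the step I would be most careful about — is making sure the semi-ladder index bound used in \cref{lem:core-running-time} is genuinely a function of $k$ alone (for fixed $r$ and $\Cc$), rather than depending on $|G|$; this is exactly what \cref{cor:dist-bound} guarantees, using that $\phi$ is positive and has size at most $k$, together with the hypothesis that $\delta_q(\Cc)$ has \emph{finite} (hence uniformly bounded over $G\in\Cc$) semi-ladder index for each $q\le r$. Everything else is a routine substitution of the oracle running time from \cref{lem:le-extension-oracle} into the round count from \cref{lem:core-running-time}, so I do not anticipate any real obstacle beyond bookkeeping the dependence on $d=|\bar y|$, which appears both as the exponent $|G|^d$ and inside the profile-complexity argument.
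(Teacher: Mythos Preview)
Your proposal is correct and follows essentially the same approach as the paper: bound the semi-ladder index of $\phi(\Cc)$ via \cref{lem:bool-comb}/\cref{cor:dist-bound}, run the Coverage Core Algorithm with the oracle of \cref{lem:le-extension-oracle}, and read off the size and running-time bounds from \cref{lem:core-running-time}. The paper's proof is terser (it simply says the running time ``follows immediately''), while you spell out the profile-complexity bookkeeping, but the argument is the same.
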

\begin{proof}
W.l.o.g. we can assume that $|\tup x|\leq k$.
Let $\ell\in \N$ be such that $\delta_q(\Cc)$ has semi-ladder index smaller than $\ell$, for all $q\le r$.
Given a graph $G$, we apply the Coverage Core Algorithm in the graph $H=\formulaGraph{\phi}{G}$ with implementation of the Semi-ladder Extension Oracle provided by \cref{lem:le-extension-oracle}.
By \cref{lem:bool-comb} we conclude the semi-ladder index of $\formulaGraph{\phi}{\Cc}$ is bounded by~$R^k(\ell)$.
Now the claimed running time follows immediately from \cref{lem:core-running-time} and \cref{lem:le-extension-oracle}.
\end{proof}

\section{Helly property for domination}\label{sec:domination}

In this section are going to present the proof of \cref{thm:semi-ladder}.
We start with the case when the considered class $\Cc$ is nowhere dense, and for this we use the well-known characterization of nowhere denseness in terms of {\em{uniform quasi-wideness}}.

\begin{definition}
  Let $s\colon \nats\to \nats$ and
  $N\colon \nats \times \nats \to \nats$ be functions. We say that a
  graph class~$\Cc$ is {\em{uniformly quasi-wide}} with
  {\em{margins}} $s$ and $N$ if for all $r,k\in \nats$, every graph
  $G\in \Cc$, and every vertex subset $W\subseteq V(G)$ of
  size larger than $N(r,k)$, there exist disjoint vertex subsets
  $S\subseteq V(G)$ and $A\subseteq W$ such that $|S|\leq s(r)$,
  $|A|>k$, and $A$ is distance-$r$ independent in $G-S$. A~class $\Cc$ is
  \emph{uniformly quasi-wide} if it is {uniformly quasi-wide} with
  some margins.
\end{definition}

\begin{theorem}[\cite{nevsetvril2010first,sparsity}]\label{thm:nd-uqw}
  A graph class $\Cc$ is nowhere dense if and only if it is uniformly quasi-wide.
\end{theorem}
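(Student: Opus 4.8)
The statement is the classical characterisation of nowhere denseness due to Ne\v{s}et\v{r}il and Ossona de Mendez, and the plan is to establish the two implications separately, with essentially all of the work lying in the ``only if'' direction.

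\emph{Uniform quasi-wideness implies nowhere denseness.} I would argue by contraposition. If $\Cc$ is somewhere dense, then by the standard subdivision characterisation there is a fixed $p\in\N$ such that $\Cc$ contains, as a subgraph, a $\le p$-subdivision of $K_t$ for every $t$; fix such a graph $G_t\in\Cc$ with branch vertices $v_1,\dots,v_t$, so that every two $v_i,v_j$ are joined by an internally disjoint path of length $\le p+1$. Put $r':=2(p+1)$ and $W:=\{v_1,\dots,v_t\}$. For any $S\subseteq V(G_t)$ avoiding two branch vertices $v_i,v_j$, the $t-1$ paths obtained from the direct $v_i$--$v_j$ path together with the detours $v_i$--$v_k$--$v_j$ ($k\neq i,j$) all have length $\le r'$ and pairwise meet only in $\{v_i,v_j\}$, so $S$ must contain an internal vertex of each in order to raise $\dist_{G_t-S}(v_i,v_j)$ above $r'$; hence $|S|\ge t-1$. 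Therefore, once $t$ exceeds both $s(r')+1$ and $N(r',1)$, there is no $S$ with $|S|\le s(r')$ and no $A\subseteq W$ with $|A|\ge 2$ such that $A$ is distance-$r'$ independent in $G_t-S$, so $\Cc$ is not uniformly quasi-wide with margins $s,N$. As $s,N$ were arbitrary, $\Cc$ is not uniformly quasi-wide.

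\emph{Nowhere denseness implies uniform quasi-wideness.} Here I would use the characterisation of nowhere denseness through \emph{generalised colouring numbers}: for every $r$ and $\eps>0$ there is $n_0$ with $\wcol_r(G)\le |V(G)|^{\eps}$ for all $G\in\Cc$ with $|V(G)|\ge n_0$. Fix $r$ and the target size $k$. Given $G\in\Cc$ and $W\subseteq V(G)$, fix a vertex order $L$ witnessing the $\wcol_r$ bound, and for $v\in W$ let $X_v$ be the set of vertices weakly $r$-reachable from $v$ with respect to $L$, so $|X_v|\le t$ for a controlled quantity $t$. The key elementary observation is that every $u$--$v$ path of length $\le r$ in $G$ contains its $L$-smallest vertex, which lies in $X_u\cap X_v$; consequently, if $S\supseteq X_u\cap X_v$ and $u,v\notin S$ then $\dist_{G-S}(u,v)>r$. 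Now apply the Erd\H{o}s--Rado sunflower lemma to the family $\{X_v\suchthat v\in W\}$, whose members have size $\le t$: once $|W|$ exceeds roughly $t!\cdot(k+t)^{t}$, this family contains a sunflower with $k+t+1$ petals, i.e.\ vertices $v_0,\dots,v_{k+t}\in W$ with $X_{v_i}\cap X_{v_j}=C$ for all $i\neq j$. Setting $S:=C$ and $A:=\{v_0,\dots,v_{k+t}\}\setminus C$, we get $|S|\le t$, $|A|>k$, $A\cap S=\emptyset$, and by the observation $A$ is distance-$r$ independent in $G-S$. Choosing $\eps$ small enough (and $N(r,k)$, $n_0$ large enough) keeps $t$, and hence $|S|$, bounded by a function $s(r)$ of $r$ alone, which is exactly the required conclusion.

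\emph{Main obstacle.} The real content is the second implication, and it is chiefly a matter of careful quantification. Two points need attention: (i) the colouring-number bound is stated in terms of $|V(G)|$, whereas the margins of uniform quasi-wideness must not refer to the ambient graph, so one must first restrict the analysis to a ``relevant'' bounded-size part of $G$ --- for instance the vertices within distance $r$ from $W$ --- and verify that this preserves membership in (the subgraph-closure of) $\Cc$ and leaves the pairwise distances of $W$ below $r$ unchanged, so that $t$ becomes a function of $|W|$; and (ii) one must check that the few petal centres that happen to lie in the core $C$ may be discarded without spoiling the petal count, and that $|C|$ is bounded by the $\wcol_r$-value and hence by $s(r)$ independently of $k$. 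One should also keep in mind the alternative, ``original'' route: induct on $r$ and show that a failure of quasi-wideness at radius $r$ allows one to iteratively build a dense shallow (topological) minor, contradicting nowhere denseness; this avoids the colouring-number machinery at the cost of a more intricate extremal construction.
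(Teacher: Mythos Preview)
The paper does not prove \cref{thm:nd-uqw}; it is quoted from the cited references and used as a black box in the proof of \cref{lem:nd-dom-semiladder}. So your proposal has to stand on its own merits.

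Your contrapositive argument for the easy direction (uniformly quasi-wide $\Rightarrow$ nowhere dense) is essentially correct.

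The hard direction has a genuine gap. Your sunflower argument bounds the deleted set by $|S|\le |C|\le t$ with $t=\wcol_r(G)$, and then asserts that ``choosing $\eps$ small enough \ldots\ keeps $t$ \ldots\ bounded by a function $s(r)$ of $r$ alone''. That is false: for a nowhere dense class that does not have bounded expansion, $\wcol_r(G)$ is \emph{unbounded} over $G\in\Cc$ --- the characterisation only gives $\wcol_r(G)\le |V(G)|^{\eps}$, and $|V(G)|^{\eps}\to\infty$ for every fixed $\eps>0$. Your fix of passing to $G':=G[N_r^G(W)]$ does preserve the relevant distances, but $|N_r^G(W)|$ is in general not bounded by any function of $|W|$ and $r$, so $\wcol_r(G')$ remains uncontrolled. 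Worse, once $t$ grows with $|W|$ --- even as slowly as $\log|W|$ --- the sunflower threshold $t!\,(k+t)^{t}$ outpaces $|W|$, so the sunflower step itself breaks down. In short, the argument as written proves uniform quasi-wideness only for classes of \emph{bounded expansion}, not for all nowhere dense classes. The ``original route'' you allude to at the end (induction on $r$, extracting a dense shallow minor from a failure of quasi-wideness) is indeed how the cited references proceed and does work in full generality, but you have not carried it out; a one-line mention does not fill the gap.
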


We proceed to the nowhere dense case, which is encapsulated in the following lemma.

\begin{lemma}\label{lem:nd-dom-semiladder}
For every $r\in \N$ and nowhere dense class $\Cc$, the class $\delta_r(\Cc)$ has a finite semi-ladder index.
\end{lemma}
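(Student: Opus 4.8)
The plan is to argue by contradiction: assume that for some $G \in \Cc$ the bipartite graph $\delta_r(G)$ contains a semi-ladder of order $\ell$, where $\ell$ will be chosen enormously large as a function of $r$ and the uniform-quasi-wideness margins of $\Cc$, and derive a contradiction once $\ell$ exceeds some explicit threshold. Unwinding definitions, a semi-ladder of order $\ell$ in $\delta_r(G)$ gives vertices $a_1,\dots,a_\ell$ and $b_1,\dots,b_\ell$ of $G$ with $\dist_G(a_i,b_j)\le r$ whenever $i>j$, and $\dist_G(a_i,b_i)>r$ for all $i$. The key point I would extract is that the vertices $a_i$ have pairwise distinct ``distance-$r$ behaviour'' towards the set $\{b_1,\dots,b_\ell\}$: for $i<i'$ the pair $(b_i)$ separates them, since $\dist(a_{i'},b_i)\le r$ but $\dist(a_i,b_i)>r$. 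So on the set $B=\{b_1,\dots,b_\ell\}$ all the distance-$r$ profiles $\profile^{G,B}_r(a_i)$ are distinct.

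First I would apply uniform quasi-wideness (\cref{thm:nd-uqw}) to the set $W=\{a_1,\dots,a_\ell\}$, with the independence radius set to $2r$ (so that short paths between chosen $a_i$'s cannot sneak around the separator). Provided $\ell > N(2r,m)$ for a suitably large target size $m$, this yields a set $S\subseteq V(G)$ with $|S|\le s(2r)$ and a subset $A\subseteq W$ with $|A|>m$ such that $A$ is distance-$2r$ independent in $G-S$; I would also throw away all but the $a_i$'s lying in $A$, renaming them $a_{i_1},\dots,a_{i_m}$ with $i_1<\dots<i_m$, and correspondingly keep only the $b_{i_1},\dots,b_{i_{m-1}}$ that sit ``between'' consecutive surviving indices — actually it is cleaner to keep $b_{i_1},\dots,b_{i_m}$ and use that $\dist(a_{i_p},b_{i_q})\le r$ for $p>q$. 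The crucial structural consequence of distance-$2r$ independence in $G-S$ is that any two surviving $a$'s at distance $\le 2r$ in $G$ must have their connecting short path pass through $S$, hence each surviving $a$'s ability to distinguish $b$'s is funneled through $S$.

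Second, I would fix for each surviving $b = b_{i_q}$ and each surviving $a = a_{i_p}$ with $p>q$ a shortest $a$–$b$ path of length $\le r$; among surviving vertices, most such paths must intersect $S$ (a path avoiding $S$ of length $\le r$ between two surviving $a$'s is impossible by distance-$2r$ independence, but a path between $a$ and a $b$ is not directly forbidden — so here I would instead reason about the distance profiles directly). The cleaner route: for each surviving $a = a_{i_p}$ define its ``type relative to $S$'' as the function $s\mapsto \min(\dist_G(a,s),r+1)$ for $s\in S$; there are at most $(r+2)^{|S|}\le (r+2)^{s(2r)}$ such types, a constant. By pigeonhole, if $m$ is large enough, some $\ge m/(r+2)^{s(2r)}$ surviving $a$'s share the same type relative to $S$. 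Call this set $A'$ and note it is still a long interval-respecting subsequence $a_{j_1},\dots,a_{j_t}$ with $t$ huge.

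The hard part — and the crux of the argument — is then the following: I want to show that two surviving $a$'s with the same type relative to $S$ cannot be distinguished by a surviving $b$, contradicting the fact established earlier that all the $a$'s have distinct profiles on $B$. Concretely, take $j_p < j_{p'}$ in $A'$ and the witness $b = b_{j_p}$; then $\dist(a_{j_{p'}},b)\le r$. I claim $\dist(a_{j_p},b)\le r$ as well, contradicting $\dist(a_{j_p},b_{j_p})>r$. To see this, take a shortest path $P$ from $a_{j_{p'}}$ to $b$. If $P$ avoids $S$, then ... this is where I must be careful, because $P$ could genuinely avoid $S$. The correct move is: among the surviving $a$'s, consider only those that are ``far'' in the sequence, and use that distance-$2r$ independence in $G-S$ forces any path of length $\le 2r$ between two of them to hit $S$; so if $\dist_{G-S}(a_{j_{p'}},b)\le r$ and also $\dist_{G-S}(a_{j_p},b)\le r$ then $\dist_{G-S}(a_{j_p},a_{j_{p'}})\le 2r$, impossible. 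Hence at most one surviving $a$ can be within distance $r$ of $b$ in $G-S$; all the others reach $b$ only through $S$, so their distance to $b$ is determined (up to the value $r$) by the pair (type relative to $S$, the vector of distances from $S$ to $b$). Thus two surviving $a$'s with the same $S$-type have the same clamped distance to $b$ in $G$ — so the same profile entry — except for at most one exceptional $a$ per $b$. Since there are $\le t$ surviving $b$'s and $|A'|=t$, choosing $t$ large relative to the number of $S$-types forces two surviving $a$'s to have identical profiles on the whole surviving $b$-set, and (by re-indexing back through the semi-ladder order, using a $b$ that lies strictly between them) this contradicts the semi-ladder's defining separation property. Tracking the quantifiers backwards gives an explicit bound on $\ell$ of the form $N(2r, (r+2)^{s(2r)} \cdot c)$ for a small constant $c$, completing the proof; I expect the bookkeeping of ``exceptional'' $a$'s and the re-indexing to line up the contradiction with the original semi-ladder order to be the only genuinely delicate point.
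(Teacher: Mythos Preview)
Your overall approach is the paper's: apply uniform quasi-wideness at radius $2r$ to $\{a_1,\ldots,a_\ell\}$, obtain a small $S$ and a large $A\subseteq\{a_i\}$ that is distance-$2r$ independent in $G-S$, then pigeonhole on distance-$r$ profiles on $S$ to get a set $A'$ of $a$'s sharing a common $S$-profile. Your key observation --- that at most one element of $A'$ can be within distance $r$ of a fixed $b$ in $G-S$ --- is correct and is exactly the mechanism the paper exploits.

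The final paragraph, however, does not close the argument. The assertion that large $t$ ``forces two surviving $a$'s to have identical profiles on the whole surviving $b$-set'' does not follow from ``at most one exceptional $a$ per $b$'': abstractly each $a_{j_p}$ could be the unique exceptional vertex for $b_{j_p}$, giving pairwise distinct profiles no matter how large $t$ is. You also do not need anything so strong. Having three indices $\alpha<\beta<\gamma$ with $a_\alpha,a_\beta,a_\gamma\in A'$, look at the single witness $w=b_\alpha$. Both $a_\beta$ and $a_\gamma$ are within distance $r$ of $w$ in $G$; since $a_\beta,a_\gamma$ are more than $2r$ apart in $G-S$, the concatenated $a_\beta$--$w$--$a_\gamma$ walk of length at most $2r$ must meet some $v\in S$, say on the $a_\beta$--$w$ segment. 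Now reroute: $a_\alpha$ has the same (truncated) distance to $v$ as $a_\beta$, so $a_\alpha$ reaches $w$ via $v$ in length at most $r$, contradicting $\dist(a_\alpha,b_\alpha)>r$. This is precisely the paper's endgame, and it needs only $|A'|\ge 3$, hence $\ell\le N\bigl(2r,\,2(r+2)^{s(2r)}\bigr)$ --- no further bookkeeping with ``exceptional'' vertices is required.
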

\begin{proof}
  Fix $r\in \nats$ and a graph $G\in \Cc$.
  Suppose vertices $a_1,\ldots,a_\ell$ and $b_1,\ldots,b_\ell$ form a semi-ladder of order $n$ in $\delta_r(G)$, that is, we have that 
  \begin{itemize}
  \item $\dist_G(a_i,b_i)>r$ for each $i\in \{1,\ldots,\ell\}$; and
  \item $\dist_G(a_i,b_j)\leq r$ for all $i,j\in \{1,\ldots,\ell\}$ with $i>j$.
  \end{itemize}
  We need to give a universal upper bound on $\ell$, expressed only in terms of $r$ and $\Cc$. 
  
  By \cref{thm:nd-uqw}, $\Cc$ is uniformly
  quasi-wide, say with margins $s\colon \nats\to \nats$ and
  $N\colon \nats \times \nats \to \nats$.
  Let $s=s(2r)$ and $t=2\cdot (r+2)^{s}$.  
  Suppose from now on that $\ell>N(2r,t)$. 
  
  Let $W=\set{a_1,\ldots,a_\ell}$; then $|W|=\ell>N(2r,t)$, because vertices $a_i$ are pairwise different. By
  uniform quasi-wideness, we can find disjoint vertex subsets
  $S\subseteq V(G)$ and $A\subseteq W$ such that $|S|\leq s$, $|A|>t$,
  and~$A$ is distance-$2r$ independent in $G-S$.

  As $|S|\leq s$, there are at most $(r+2)^s$ possible distance-$r$ profiles on $S$ (called further just {\em{profiles}} for brevity).
  Since $|A|>t=2\cdot (r+2)^s$, we can find 
  three indices $1\le \alpha<\beta<\gamma\leq \ell$ such that the vertices
  $x:=a_\alpha,y:=a_\beta,z:=a_\gamma$ belong to $A$ and have equal profiles.
  Denote $w:=b_\alpha$. In particular, we have the following:
  \begin{itemize}
  \item the distance between $y$ and $z$ in $G-S$ is larger than $2r$;
  \item the vertices $x,y,z$ have the same profiles; and
  \item $\dist_G(w,x)>r$,  $\dist_G(w,y)\le r$, and $\dist_G(w,z)\le r$.  
\end{itemize}
\begin{figure}[ht]
	\centering
                \centering
                \def\svgwidth{0.4\columnwidth}
                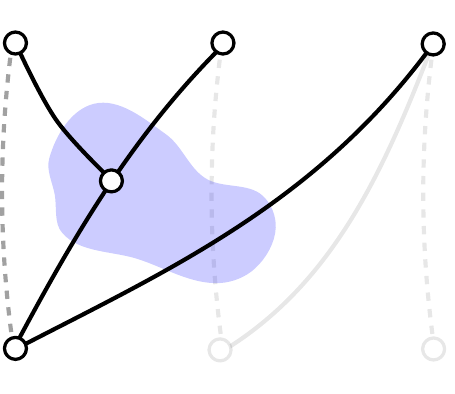
	\caption{Proof of \cref{lem:nd-dom-semiladder}: contradiction looms.}
	\label{fig:uqw}
\end{figure}

We show that this leads to a contradiction (see \cref{fig:uqw}).
Let $P_{yw}$ be a path of length at most $r$ connecting~$y$ and $w$, and
let $P_{wz}$ be a path of length at most $r$ connecting $w$ and~$z$. In
particular, the concatenation of $P_{yw}$ and $P_{wz}$ has length at most $2r$ and connects
$y$ and~$z$. Since the distance between $y$ and $z$ in $G-S$ is larger than $2r$,
at least one of the paths $P_{yw},P_{wz}$ must contain a vertex $v\in S$.
Suppose that it is $P_{yw}$, the other case being analogous. Then $P_{yw}$
is split by $v$ into two subpaths: $P_{yv}$ and $P_{vw}$.  Since~$x$ and $y$
have the same profiles, we may find a path $P_{xv}$ connecting $x$ and $v$
whose length is not larger than the length of $P_{yv}$. Now, the concatenation of paths $P_{xv}$ and $P_{vw}$ has
length at most $r$ and connects $x$ with $w$, a~contradiction with $\dist_G(w,x)>r$.

Therefore, we must have that $\ell\leq N(2r,2\cdot (r+2)^{s(2r)})$, which finishes the proof.
\end{proof}

The case when $\Cc$ is a power of a nowhere dense class now follows almost immediately.

\begin{lemma}\label{lem:pow-dom-semiladder}
For all $r,s\in \N$ and nowhere dense class $\Dd$, the class $\delta_r(\Dd^s)$ has a finite semi-ladder index.
\end{lemma}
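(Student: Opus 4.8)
The plan is to reduce the statement for $\delta_r(\Dd^s)$ to the already-proved case of \cref{lem:nd-dom-semiladder} applied to the nowhere dense class $\Dd$ with a suitably scaled-up radius. The key observation is that distances in the power graph $G^s$ are governed by distances in $G$: for any two vertices $u,v$, we have $\dist_{G^s}(u,v)\leq r$ if and only if $\dist_G(u,v)\leq rs$. Indeed, a walk of length at most $r$ in $G^s$ corresponds to a concatenation of at most $r$ shortest paths in $G$, each of length at most $s$, giving a walk of length at most $rs$ in $G$; conversely, a path of length at most $rs$ in $G$ can be chopped into at most $r$ segments each of length at most $s$, each becoming a single edge in $G^s$. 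Hence the bipartite graph $\delta_r(G^s)$ is literally equal to $\delta_{rs}(G)$.

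With this identity in hand, the proof is essentially a one-liner. First I would state and justify the equality $\delta_r(G^s)=\delta_{rs}(G)$ for every graph $G$, which establishes that $\delta_r(\Dd^s)=\delta_{rs}(\Dd)$ as classes of bipartite graphs. Then, since $\Dd$ is nowhere dense and $rs\in\N$, \cref{lem:nd-dom-semiladder} (applied with radius $rs$ in place of $r$) tells us that $\delta_{rs}(\Dd)$ has finite semi-ladder index. Combining the two facts, $\delta_r(\Dd^s)$ has finite semi-ladder index, which is exactly the claim.

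The only mild subtlety to handle carefully is the walk-versus-path bookkeeping in the distance identity: one should note that a shortest path in $G$ between vertices at distance at most $rs$ can always be split into at most $r$ subpaths each of length at most $s$ (pad with the trivial extension if the length is not a multiple of $s$), so that consecutive endpoints of the subpaths are pairwise at distance at most $s$ in $G$, hence adjacent (or equal) in $G^s$; this yields a walk of length at most $r$ in $G^s$. The reverse direction is immediate from the definition of $G^s$. I do not expect any genuine obstacle here — the content is entirely in \cref{lem:nd-dom-semiladder}, and this lemma is a routine corollary. The analogous reduction will presumably also be what is used for the map-graph case, via a comparison of map graphs with a nowhere dense class, although that comparison is less transparent and is where the remaining work lies.
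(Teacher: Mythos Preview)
Your proposal is correct and matches the paper's proof essentially verbatim: the paper simply observes that $\delta_r(\Dd^s)=\delta_{rs}(\Dd)$ and invokes \cref{lem:nd-dom-semiladder}. Your added justification of the distance identity $\dist_{G^s}(u,v)\le r \iff \dist_G(u,v)\le rs$ is fine and merely spells out what the paper leaves implicit.
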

\begin{proof}
It suffices to observe that $\delta_r(\Dd^s)=\delta_{rs}(\Dd)$ and use \cref{lem:nd-dom-semiladder}.
\end{proof}

For the case of map graphs, we use the following characterization of map graphs as {\em{half-squares of planar graphs}}, due to Chen et al.~\cite{ChenGP02}.

\begin{theorem}[\cite{ChenGP02}]\label{thm:half-squares}
If $G=(V,E)$ is a map graph, then there exists a bipartite planar graph $H$ with $V$ being one side of its bipartition such that $G$ is the subgraph induced by $V$ in $H^2$.
\end{theorem}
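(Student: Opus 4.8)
The plan is to build $H$ explicitly from a map model of $G$. First I would fix such a model, assigning to each $u\in V$ a closed arc-connected region $R_u$ with pairwise disjoint interiors; it is standard (and loses no generality) to assume that the union of the boundaries $D=\bigcup_{u}\partial R_u$ is a \emph{finite} plane graph, so that the regions form a subdivision of the plane (or sphere). Choose a finite set $P$ of points of $D$ that includes every point lying on the boundary of at least three regions, chosen large enough that after removing the points of $P$ the boundary $D$ falls into finitely many ``edges'' $\gamma$, each a simple arc or simple closed curve bordered by at most two of the regions. I would then let $U$ consist of one hub vertex $w_p$ for each $p\in P$ and one hub vertex $w_\gamma$ for each such edge $\gamma$, and define $H$ to be the bipartite graph on $V\cup U$ in which $u$ is adjacent to $w_p$ iff $p\in\partial R_u$, and $u$ is adjacent to $w_\gamma$ iff $R_u$ borders $\gamma$.

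Next I would check that $G=H^2[V]$. Since $H$ is bipartite with parts $V$ and $U$, two vertices $u,v\in V$ are adjacent in $H^2$ exactly when they have a common neighbour in $U$, so it suffices to show that $uv\in E(G)$ if and only if $R_u$ and $R_v$ share a hub. If $uv\in E(G)$ then $R_u\cap R_v\neq\emptyset$, and as the interiors are disjoint this intersection lies in $\partial R_u\cap\partial R_v\subseteq D$; hence it contains either an edge $\gamma$ bordered by both regions (yielding the common hub $w_\gamma$) or at least one point $p\in P$ on the boundary of both (yielding the common hub $w_p$). Conversely, a common hub $w_p$ forces $p\in\partial R_u\cap\partial R_v$, and a common hub $w_\gamma$ forces $\gamma\subseteq\partial R_u\cap\partial R_v$; in either case $R_u\cap R_v\neq\emptyset$, so $uv\in E(G)$. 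Thus the edges of $H^2[V]$ are exactly the edges of $G$.

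Finally I would prove that $H$ is planar by exhibiting a drawing. Place $u$ at an interior point $c_u$ of $R_u$, place $w_p$ at $p$, and place $w_\gamma$ at an interior point $q_\gamma$ of $\gamma$. Every edge of $H$ incident to $u$ joins $c_u$ to a point of $\partial R_u$ (either $p$ with $p\in\partial R_u$, or $q_\gamma$ with $\gamma\subseteq\partial R_u$). Since $R_u$ is arc-connected, these finitely many curves can be drawn inside $\overline{R_u}$ so as to be pairwise disjoint except at $c_u$ and to meet $\partial R_u$ only at their prescribed endpoints — essentially a ``star'' from an interior point to a finite boundary set. Curves drawn inside $\overline{R_u}$ and inside $\overline{R_v}$ for $u\neq v$ can meet only in $\overline{R_u}\cap\overline{R_v}\subseteq D$, where the only shared objects are hub vertices, which are endpoints rather than interior crossings. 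Hence the drawing is crossing-free, so $H$ is planar, which completes the proof.

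The combinatorial core — the equivalence $G=H^2[V]$ — is essentially immediate once the subdivision is in place; the step I expect to require the most care is the planarity argument, namely making precise that the curves inside a single region can be routed without crossings (in particular when a region is not simply connected) and that curves from distinct regions interact only at shared hubs, together with the preliminary reduction to a map model whose boundary is a finite plane graph.
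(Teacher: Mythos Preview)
The paper does not prove \cref{thm:half-squares}; it is quoted without proof from Chen, Grigni, and Papadimitriou~\cite{ChenGP02}, so there is no in-paper argument to compare against.

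Your construction is essentially the one in that reference: reduce to a map model whose boundary union is (the point set of) a finite plane graph, place a hub at each vertex of that plane graph, connect each region-vertex $u$ to the hubs lying on $\partial R_u$, and read planarity of the resulting bipartite graph off from the obvious drawing. Your variant also places a hub on every boundary edge $\gamma$, which is harmless but redundant --- once the boundaries form a plane graph, any two touching regions already share a vertex of that graph, so the point-hubs alone suffice and the case split in your verification of $G=H^2[V]$ collapses. As you correctly flag, the only substantive step is the preliminary reduction to a plane-graph map model (handling non--simply-connected regions, tangential contacts, and so on); that is where the cited paper does the actual work, and your sketch is a reasonable outline of it.
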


\begin{lemma}\label{lem:map-dom-semiladder}
Let $\Cc$ be the class of map graphs. Then for every $r\in \N$, the class $\delta_r(\Cc)$ has a finite semi-ladder index.
\end{lemma}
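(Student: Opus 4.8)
The plan is to reduce the statement to the already-established nowhere dense case, \cref{lem:nd-dom-semiladder}, using the half-square characterization of map graphs from \cref{thm:half-squares}. First I would fix a map graph $G=(V,E)$ and invoke \cref{thm:half-squares} to obtain a bipartite planar graph $H$ with $V$ as one side of its bipartition, such that $G$ is the subgraph induced by $V$ in $H^2$. The crucial combinatorial observation is that distances in $G$ are exactly half the distances in $H$ on the relevant pairs: for all $u,v\in V$ we have $\dist_G(u,v)=\dist_H(u,v)/2$. Indeed, since $H$ is bipartite and $u,v$ lie on the same side, every $u$--$v$ path in $H$ has even length; a shortest one of length $2m$ yields, by keeping every second vertex (all of which lie in $V$), a walk of length $m$ in $H^2[V]=G$, while conversely any walk of length $m$ in $H^2[V]$ lifts each of its edges to a path of length at most $2$ in $H$, showing $\dist_H(u,v)\le 2m$. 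In particular, for $u,v\in V$ we obtain the equivalence $\dist_G(u,v)\le r \iff \dist_H(u,v)\le 2r$.

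Next I would note that this equivalence transfers semi-ladders verbatim. If $a_1,\dots,a_\ell\in V$ and $b_1,\dots,b_\ell\in V$ form a semi-ladder of order $\ell$ in $\delta_r(G)$ --- that is, $\dist_G(a_i,b_i)>r$ for all $i$ and $\dist_G(a_i,b_j)\le r$ for all $i>j$ --- then, viewing these vertices inside $H$, we get $\dist_H(a_i,b_i)>2r$ for all $i$ and $\dist_H(a_i,b_j)\le 2r$ for all $i>j$, i.e.\ exactly a semi-ladder of order $\ell$ in $\delta_{2r}(H)$. Since $H$ is planar, it belongs to a fixed nowhere dense class (planar graphs exclude $K_5$ as a minor, hence are nowhere dense, and this passes to the subclass of bipartite planar graphs). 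Applying \cref{lem:nd-dom-semiladder} to this fixed nowhere dense class and radius $2r$ gives a bound $\ell\le L(r)$ depending only on $r$ --- crucially, independent of the choice of $G$ and of the associated $H$. Hence the semi-ladder index of $\delta_r(\Cc)$ for $\Cc$ the class of map graphs is at most the semi-ladder index of $\delta_{2r}$ of the class of bipartite planar graphs, which is finite.

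I expect the only point requiring care to be the verification of the distance identity $\dist_G(u,v)=\dist_H(u,v)/2$ on $V$: one must check the parity argument (consecutive vertices of $V$ along a geodesic in $H^2$ are at $H$-distance exactly $0$ or $2$) and that the minimum-length walk realizing $\dist_G$ can indeed be taken to pass only through $V$, which it must by definition of $G=H^2[V]$. Beyond this bookkeeping there is no genuine obstacle, as the substantive work is delegated to \cref{thm:half-squares} and \cref{lem:nd-dom-semiladder}.
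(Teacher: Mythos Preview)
Your proposal is correct and takes essentially the same approach as the paper: reduce to the nowhere dense case via \cref{thm:half-squares} and the distance identity $\dist_G(u,v)=\dist_H(u,v)/2$ on $V$, then bound semi-ladders in $\delta_r(G)$ by semi-ladders in $\delta_{2r}(H)$ for $H$ planar. The only cosmetic difference is that the paper routes through \cref{lem:pow-dom-semiladder} (noting $\delta_r(\Pc^2)=\delta_{2r}(\Pc)$ and that $\delta_r(G)$ is an induced subgraph of $\delta_r(H^2)$), whereas you unpack that step directly; your explicit verification of the distance identity is in fact what justifies the paper's induced-subgraph claim.
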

\begin{proof}
Let $\Pc$ be the class of planar graphs. Since $\Pc$ is nowhere dense, by \cref{lem:pow-dom-semiladder} we have that the class $\delta_r(\Pc^2)$ has finite semi-ladder index.
It now suffices to observe that by \cref{thm:half-squares}, every graph in $\delta_r(\Cc)$ is an induced subgraph of a graph in $\delta_r(\Pc^2)$, and semi-ladder index is monotone under taking induced subgraphs.
\end{proof}

We are left with the case of $r=1$ and classes excluding a fixed complete bipartite graph.

\begin{lemma}\label{lem:ktt-dom-semiladder}
  Let $t\in \N$ and let $\Cc$ be the class of $K_{t,t}$-free graphs.
  Then the semi-ladder index of the class~$\delta_1(\Cc)$ is smaller than $3t$.
\end{lemma}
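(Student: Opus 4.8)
The plan is to show that a semi-ladder of order $3t$ in $\delta_1(G)$ would force a copy of $K_{t,t}$ as a subgraph of $G$, which is impossible since $G\in\Cc$. So I would start by assuming that $a_1,\ldots,a_{3t}$ and $b_1,\ldots,b_{3t}$, all vertices of $G$, form a semi-ladder of order $3t$ in $\delta_1(G)$; unravelling the definition of $\delta_1$, this says $\dist_G(a_i,b_j)\le 1$ whenever $i>j$, and $\dist_G(a_i,b_i)>1$ for every $i$. First I would record two easy consequences of these conditions: the vertices $a_1,\ldots,a_{3t}$ are pairwise distinct, and so are $b_1,\ldots,b_{3t}$. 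Indeed, if $a_i=a_{i'}$ with $i<i'$, then $\dist_G(a_{i'},b_i)\le 1$ would give $\dist_G(a_i,b_i)\le 1$, contradicting the semi-ladder condition; the argument for the $b$'s is symmetric, using $a_{j'}$ and $b_j=b_{j'}$ with $j<j'$.

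The point where care is needed --- and where the factor $3$ rather than $2$ is spent --- is that both sides of $\delta_1(G)$ are the set $V(G)$, so some $a_i$ may equal some $b_j$, and such a coincidence is neither an edge of $G$ nor admissible in a biclique. To handle this, set $X=\{2t+1,\ldots,3t\}$. Since the $b_j$ are pairwise distinct, each vertex $a_i$ with $i\in X$ equals $b_j$ for at most one index $j$; hence at most $|X|=t$ indices $j\in\{1,\ldots,2t\}$ satisfy $b_j=a_i$ for some $i\in X$. I would therefore pick a set $Y$ of $t$ indices inside $\{1,\ldots,2t\}$ with $b_j\neq a_i$ for all $j\in Y$ and $i\in X$.

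Finally I would verify that $\{a_i : i\in X\}$ and $\{b_j : j\in Y\}$ span a $K_{t,t}$ in $G$: each set has exactly $t$ elements by the distinctness just established, the two sets are disjoint by the choice of $Y$, and for any $i\in X$ and $j\in Y$ we have $i\ge 2t+1>2t\ge j$, so $i>j$ and hence $\dist_G(a_i,b_j)\le 1$, which together with $a_i\neq b_j$ yields $a_ib_j\in E(G)$. This contradicts $K_{t,t}$-freeness, so $\delta_1(G)$ admits no semi-ladder of order $3t$; as $G\in\Cc$ was arbitrary, the semi-ladder index of $\delta_1(\Cc)$ is smaller than $3t$. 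The only genuine obstacle is the one already flagged --- the left and right parts of $\delta_1(G)$ being the same vertex set --- and the remedy is exactly the bookkeeping above: keep a pool of $2t$ candidate bottom indices and discard the at most $t$ of them that clash with the chosen top vertices.
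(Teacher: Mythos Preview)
Your proof is correct and follows essentially the same approach as the paper's. The only cosmetic difference is a swap of roles: the paper fixes $b_1,\ldots,b_t$ and selects $t$ vertices among $a_{t+1},\ldots,a_{3t}$ that avoid them, whereas you fix $a_{2t+1},\ldots,a_{3t}$ and select $t$ indices from $\{1,\ldots,2t\}$ whose $b_j$'s avoid these; the counting and the conclusion are identical.
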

\begin{proof}
  Suppose  for the sake of contradiction that there exists a graph $G\in \Cc$ and vertices $a_1,\ldots,a_{3t}$ and $b_1,\ldots,b_{3t}$ that form a semi-ladder of order $3t$ in $\delta_1(G)$.
  Note here that $\delta_1(x,y)$ checks whether $x$ and $y$ are equal or adjacent, which means that $a_i$ and $b_j$ are equal or adjacent whenever $i>j$, 
  whereas $a_i$ and $b_i$ are different and non-adjacent for all $i$.
  Observe that $b_i\neq b_j$ for all
  $1\le j<i\leq 3t$, because~$\delta_1(a_i,b_j)$ holds, while $\delta_1(a_i,b_i)$ does not.
  Similarly, $a_i\neq a_j$ for all $1\le j<i\leq 3t$.
  Hence, among vertices $a_{t+1},\ldots, a_{3t}$ there are at 
  least $t$ vertices different from the vertices $b_1,\ldots, b_s$.
  These vertices on one side, and vertices $b_1,\ldots, b_t$ on the other side, form a $K_{t,t}$ subgraph in $G$,
  contradicting the assumption that $G\in \Cc$.
\end{proof}

\cref{lem:nd-dom-semiladder},~\ref{lem:pow-dom-semiladder},~\ref{lem:map-dom-semiladder}, and~\ref{lem:ktt-dom-semiladder} together prove \cref{thm:semi-ladder}.
\section{Helly property for independence}\label{sec:independence}

In this section we show that distance-$r$ independence has the
weak Helly property on every nowhere dense class of graphs.
More precisely, we prove \cref{thm:ind-weak} and derive some auxiliary algorithmic results from it.

Let us start with clarifying the notation. Let $G$ be a graph and let $r\in \N$.
For a graph $G$, by $V(G)$ and $E(G)$ we denote the vertex and the edge set of $G$, respectively.
For vertex $u\in V(G)$, the {\em{distance-$r$ neighborhood}} of $u$ is the set $N^G_r(u)=\{v\colon \dist_G(u,v)\leq r\}$.
This notation is extended to subsets by setting $N^G_r(U)=\bigcup_{u\in U} N^G_r(u)$.
A vertex subset $A\subseteq V(G)$ is {\em{distance-$r$ independent}} if $\dist_G(u,v)>r$ for all distinct $u,v\in A$.
A vertex subset $D\subseteq V(G)$ {\em{distance-$r$ dominates}} $A$ if $A\subseteq N^G_r(D)$.
Then $D$ is a {\em{distance-$r$ dominating set}} in $G$ if $D$ distance-$r$ dominates $V(G)$.

Throughout this section we will use distance-$r$ profiles, which we discussed in \cref{sec:algo}.
We extend the notation for profiles to subsets of vertices as follows:
For $r\in \N$, a graph $G$, and vertex subsets $S,U\subseteq V(G)$ we define 
$$\profile_r^{G,S}(U)(v)=\min_{u\in U}\ \profile_r^{G,S}(u)(v)\qquad\textrm{for all }v\in S.$$
Note that, again, $\profile_r^{G,S}(U)$ is a function from $S$ to $\{0,1,2,\ldots,r,\infty\}$, and there are $(r+2)^{|S|}$ possibilities for such a function.

%\subsection{Proof of Theorem~\ref{thm:ind-weak}}

\paragraph*{Dependence cores and pre-cores.}
We will work with the following combinatorial object that 
we call a \emph{dependence core}. 

\begin{definition}
Let $G$ be a graph and let $r,k\in \N$. A \emph{distance-$r$ 
dependence core} for $G$ and~$k$ is a subset of vertices $Q\subseteq V(G)$
with the following property: for every set $X\subseteq V(G)$ of size
at most $k$ there exists a path of length at most $r$ containing a 
vertex of $Q$ and connecting two elements of $X$. 
\end{definition}

Observe that %in contrast to a domination core which exists in every 
%graph (the whole vertex set is always a domination core), 
a distance-$r$ 
dependence core for $G$ and $k$ can only exist if $G$ does not
contain a distance-$r$ independent set of size $k$. 
By definition of $\eta_r^k$, \cref{thm:ind-weak} is an 
immediate consequence of the following theorem. 

\begin{theorem}\label{thm:dependence-core}
Let $r,k\in \N$ and $\Cc$ be a nowhere dense graph class.
Then there exists $p\in \N$, depending on~$r$,~$k$, and $\Cc$, such 
if $G\in\Cc$ does not contain a distance-$r$ independent set of size
$k$, then $G$ has a distance-$r$ dependence core of size at most $p$. 
\end{theorem}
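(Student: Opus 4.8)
The plan is to follow the strategy sketched in the introduction to Section~\ref{sec:independence}: build the dependence core recursively along Splitter's winning strategy in the Splitter game on $G$, and prune the recursion tree using the hypothesis that $G$ has no distance-$r$ independent set of size $k$. First I would recall the relevant machinery: by the characterization of nowhere denseness via the Splitter game~\cite{GroheKS17}, for every nowhere dense $\Cc$ and every radius parameter there is a number $\lambda$ (the length of the game) such that Splitter wins the corresponding radius-$r$ Splitter game on every $G\in\Cc$ in at most $\lambda$ rounds. I will run the game with an appropriately chosen radius (on the order of $r$, since later we need to control distances up to $r$ inside the ``localized'' subgraphs), fixing a positional winning strategy $\sigma$ for Splitter. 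The recursion tree $T_\sigma$ is then the tree of all plays consistent with $\sigma$: at a node we have a current subgraph $G'$ (an $r$-ball around some vertex, intersected with previously deleted-vertex constraints), Connector picks a vertex $v$, Splitter deletes $\sigma(v)$, and we recurse into the $r$-ball of $v$ in $G'-\sigma(v)$. The tree has depth at most $\lambda$, but unbounded branching since Connector has many choices; the key is to keep only a bounded number of Connector's moves at each node.

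The core construction is: at each node of (the pruned) $T_\sigma$, put the vertex deleted by Splitter into the core $Q$, and also add the centers of the surviving child balls. Correctness — that the resulting $Q$ is a distance-$r$ dependence core — would be argued by a descent along $T_\sigma$: given any $X\subseteq V(G)$ with $|X|\le k$, if no short path through $Q$ connects two vertices of $X$, then all of $X$ must survive into one child ball (anything at distance $\le r$ from the center that is separated from the rest goes through the deleted vertex, which is in $Q$), and we recurse; since the game has finite length $\lambda$ and Splitter wins, we eventually reach a leaf where the surviving graph is too small to contain $X$ as a distance-$r$ independent set of the right size, contradicting $|X|\le k$ together with the no-independent-set hypothesis. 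This is where the hypothesis ``$G$ has no distance-$r$ independent set of size $k$'' is used to terminate the descent.

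The main obstacle — and the genuinely new part — is the pruning step: bounding the branching of $T_\sigma$ by a function of $k$, $r$, and $\Cc$ only. The idea is that at a node with current subgraph $G'$ and deleted vertex set $S$ so far, two Connector moves $v,v'$ that have the same distance-$r$ profile on $S$ lead to ``equivalent'' recursive subproblems, so only one representative per profile need be kept; since $|S|$ is bounded (at most $\lambda$, which depends only on $\Cc$ and $r$) there are at most $(r+2)^{|S|}$ profiles, hence bounded branching at each node. One must be careful that two moves with the same profile really do yield interchangeable subtrees as far as the core property is concerned — roughly, a distance-$r$ independent set hiding in the ball around $v$ can be ``transported'' to the ball around $v'$ using profile equality, in the same spirit as the path-transport argument in the proof of Lemma~\ref{lem:nd-dom-semiladder}. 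Making this transport precise — in particular handling that the two balls may overlap, and that we need the transported set to remain distance-$r$ independent and of the same size — is the technical heart. Once branching $\le B:=(r+2)^\lambda$ and depth $\le\lambda$ are established, the pruned tree has at most $B^\lambda$ nodes, each contributing $O(B)$ vertices to $Q$, giving $p = B^{O(\lambda)}$, which depends only on $r$, $k$, and $\Cc$ (through $\lambda$ and a further dependence on $k$ entering via the choice of when to stop the descent). I would then conclude, and note that Theorem~\ref{thm:ind-weak} follows immediately since a distance-$r$ dependence core for $G$ and $k$ is exactly a set $W$ witnessing, in the sense of $\eta^k_r$, that every $k$-subset fails to be distance-$r$ independent — i.e.\ a vertex of $W$ covers, in $\eta^k_r(G)$, every witness-tuple, which unwinds to the weak $p$-Helly property.
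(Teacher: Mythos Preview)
Your plan has the right high-level shape (recurse along Splitter's strategy, collect deleted vertices into $Q$, bound the tree), but both the pruning step and the descent argument have genuine gaps.

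\textbf{Pruning.} You claim that two Connector moves $v,v'$ with the same distance-$r$ profile on the deleted set $S$ yield ``equivalent'' subproblems, and hence you may keep one representative per profile. This is false: the balls $N_{r}^{G-S}(v)$ and $N_{r}^{G-S}(v')$ can be entirely disjoint and structurally unrelated even when $\profile_r^{G,S}(v)=\profile_r^{G,S}(v')$. There is no ``transport'' of a bad set $X$ from one ball to the other --- profile equality only controls distances to $S$, not the internal structure of the balls. Consequently your branching bound $(r+2)^{|S|}$, which does not involve $k$ at all, is unjustified. The paper's localization is completely different: for each profile $p$ on $S$ it runs a greedy procedure (Lemma~\ref{lem:greedy-is}) on the vertices of that profile, obtaining either a $2r$-independent set of size $>k$ (impossible, since any $k$-element dominating set would miss one of them) or a small set $Z_p$ that $2r$-dominates the class. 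The union $Z=\bigcup_p Z_p$ gives the centers for the recursion, and this is where $k$ enters the branching bound.

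\textbf{Descent.} Your correctness argument tracks the set $X$ itself and asserts that ``all of $X$ must survive into one child ball''. There is no reason for this: the $k$ vertices of $X$ may be scattered across the whole graph, and even if two of them are within distance $r$ of each other, the remaining $k-2$ need not be near any chosen center. The paper avoids this by not tracking $X$ at all during the recursion. Instead it introduces the auxiliary notion of a \emph{pre-core} for a triple $(G,A,k)$: a set $Q$ that captures the pair $(D,a)$ for every $a\in A$ and every $D$ of size $\le k$ that distance-$r$ dominates $A$. Pre-cores always exist, and in the recursive step one only needs to localize each $a$ that is not yet captured by $S$ --- and such $a$ are forced to lie in $N_{2r}^{G-S}(Z)$ precisely by the greedy argument above. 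The hypothesis ``no distance-$r$ independent set of size $k$'' is used only once, and cleanly, in a separate lemma (Lemma~\ref{lem:pre-core}) showing that a pre-core for $(G,V(G),k-1)$ is a core for $k$: this is a short exchange argument on a counterexample $X$ minimizing the number of vertices at distance $\le r$ from another vertex of $X$. The decoupling of ``build a pre-core'' (no independence hypothesis needed) from ``pre-core $\Rightarrow$ core'' (uses the hypothesis) is the missing idea in your plan.
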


Hence, it suffices to prove \cref{thm:dependence-core}. 
Towards this goal, we introduce the following definitions. 

\begin{definition}
Let $G$ be a graph, let $Q,D\subseteq V(G)$, and let $a\in V(G)$.
We shall say that $Q$ \emph{distance-$r$ captures} the 
pair $(D,a)$ if there exists a path of length at most $r$ containing
a vertex of $Q$ and connecting $a$ with a vertex $d$ of $D$.
\end{definition}

In the following we will always talk about distance-$r$ capturing for the number $r$ clear in the context, 
and hence for brevity we will just use the term {\em{capture}}.

\begin{definition}
Let $G$ be a graph, $A\subseteq V(G)$ 
and let $r,k\in \N$. A \emph{distance-$r$ 
dependence pre-core} for $G,A$ and~$k$ is a subset of 
vertices $Q\subseteq V(G)$ with the following property:
for every set $D\subseteq V(G)$ that has size at most $k$ and distance-$r$ dominates $A$,
the set $Q$ captures $(D,a)$ for each $a\in A$.
\end{definition}

While dependence cores can only exist in the absence of independent
sets of size $k$, pre-cores always exist (the whole vertex set is always a 
pre-core). 
In order to prove \cref{thm:dependence-core}, we prove
that in case that a graph $G$ does not contain a distance-$r$
independent set of size~$k$, a distance-$r$ dependence
pre-core for $G, V(G)$ and $k-1$ is in fact a distance-$r$ 
dependence core for $k$. We then proceed
to prove that nowhere dense graph classes admit small pre-cores. 
In fact, the first author proves in his thesis that 
monotone nowhere dense graph classes are characterized by 
the existence of small pre-cores for all vertex subsets $A$ and values of $r$. 

\paragraph*{A pre-core is a core, whenever possible.}
We first prove that every distance-$r$ dependence pre-core for
$G, V(G)$ and $k-1$ in absence of a distance-$r$ independent set 
of size $k$ is in fact
a distance-$r$ dependence core for $k$. 

\begin{lemma}\label{lem:pre-core}
Let $G$ be a graph and let $r,k\in \N$. Let $Q$ be a distance-$r$
dependence pre-core for $G, V(G)$ and~$k-1$. If $G$ does not contain a
distance-$r$ independent set of size $k$, then $Q$ is a distance-$r$
dependence core for $G$ and $k$. 
\end{lemma}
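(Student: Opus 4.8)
The plan is to verify the defining property of a distance-$r$ dependence core directly: for every set $X$ of $k$ vertices of $G$ I will exhibit a path of length at most $r$ that passes through $Q$ and joins two distinct vertices of $X$. The only feature of $X$ the argument uses is that $X$ is not distance-$r$ independent, which is guaranteed here because $G$ contains no distance-$r$ independent set of size $k$; in fact the same argument handles every set $X$ with $|X|\le k$ that fails to be distance-$r$ independent. The engine of the proof is a single type of application of the pre-core property, to a well-chosen distance-$r$ dominating set of $G$.

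The key step, and the one that carries the argument, is the observation that \emph{$Q$ contains every maximal distance-$r$ independent set of $G$}. Let $D$ be such a set. Since $G$ has no distance-$r$ independent set of size $k$, every distance-$r$ independent set in $G$ has at most $k-1$ vertices, so $|D|\le k-1$; and since $D$ is a \emph{maximal} distance-$r$ independent set, it is a distance-$r$ dominating set of $V(G)$. Hence $D$ is a set to which the assumption that $Q$ is a pre-core for $G$, $V(G)$, and $k-1$ applies, so $Q$ captures $(D,d)$ for every $d\in D$. Unravelling this for a fixed $d$: there is a path of length at most $r$ through $Q$ joining $d$ to some $d'\in D$; but $D$ is distance-$r$ independent, so $\dist_G(d,d')\le r$ forces $d'=d$, and the only path joining $d$ to itself is the trivial one-vertex path $\{d\}$, which meets $Q$ only if $d\in Q$. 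Therefore $d\in Q$, proving $D\subseteq Q$.

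Granting this, the remainder is routine. Fix $X$ with $|X|=k$; it is not distance-$r$ independent. Choose a maximal distance-$r$ independent subset $I$ of $X$: then $I\subsetneq X$, and by maximality every vertex of $X$ lies within distance $r$ of some vertex of $I$. Extend $I$ to a maximal distance-$r$ independent set $\hat I$ of $G$; by the key step $\hat I\subseteq Q$, hence $I\subseteq Q$. Pick any $x\in X\setminus I$ and a vertex $d\in I$ with $\dist_G(x,d)\le r$; since $x\notin I$ we have $d\ne x$. A shortest $x$--$d$ path $P$ in $G$ has length at most $r$, its endpoints $x$ and $d$ both lie in $X$ and are distinct, and $P$ passes through the vertex $d\in I\subseteq Q$. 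Thus $P$ is a path of the required kind for $X$, and since $X$ was arbitrary, $Q$ is a distance-$r$ dependence core for $G$ and $k$.

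I expect the one genuine obstacle to be finding the key step: realising that, by applying the pre-core property with the query vertex $a$ chosen inside a distance-$r$ independent dominating set, the capturing path collapses to a single vertex and hence that vertex must lie in $Q$. Once this is seen, only standard facts remain (a maximal distance-$r$ independent set is distance-$r$ dominating; a maximal distance-$r$ independent subset of $X$ distance-$r$ dominates $X$; subsets of distance-$r$ independent sets are distance-$r$ independent, so the absence of one of size $k$ bounds all of them by $k-1$). I will also rely on the standard convention that a ``path'' means a simple path, so that the only path from a vertex to itself is trivial --- this is precisely what makes the key step work.
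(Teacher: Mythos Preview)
Your proof is correct, and it takes a genuinely different route from the paper's. The paper argues by contradiction with an extremal choice: it picks a bad set $X$ of size $k$ minimising the number of vertices of $X$ that are within distance $r$ of some other vertex of $X$, and then performs an exchange step (remove a close vertex $w$; either $X-\{w\}$ already distance-$r$ dominates $V(G)$ and the pre-core property applied to $w$ yields a contradiction, or one can swap $w$ for an undominated vertex and strictly decrease the potential). Your argument instead isolates a clean structural fact --- every maximal distance-$r$ independent set of $G$ is contained in $Q$ --- by applying the pre-core property \emph{at a vertex of the dominating set itself}, where distance-$r$ independence collapses the witnessing path to a single vertex. This observation is stronger than what is needed and makes the conclusion immediate: a maximal distance-$r$ independent subset $I$ of $X$ already sits inside $Q$, and any $x\in X\setminus I$ together with its $I$-dominator gives the required short path through $Q$. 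Your approach is shorter, avoids the potential-function bookkeeping, and yields a reusable byproduct (the containment $D\subseteq Q$ for all maximal distance-$r$ independent $D$). The paper's approach, on the other hand, does not rely on the convention that the only path from a vertex to itself is the trivial one; your argument does, and you rightly flag this. Under the standard simple-path convention the dependence is harmless, so the proof stands.
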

\begin{proof}
Assume towards a contradiction that $Q$ is not a distance-$r$ 
dependence core for $G$ and $k$. 
Hence, there exists $X\subseteq V(G)$ of size
$k$ such that $Q$ does not intersect any path of length at most $r$ between two distinct vertices of $X$.
From all such choices for~$X$, fix one that minimizes the function
\[ f(X) = \left| \{ w \in X \colon \text{ there exists $v\in X$, $v\neq w$,
with $\dist(v,w) \leq r$}\} \right|. \]
By assumption, there does not exist a distance-$r$ independent
set of size $k$ in $G$, so there are $v,w\in X$, $v\neq w$,
with $\dist(v,w)\leq r$. 

If $X-\{w\}$ is a distance-$r$ dominating set of $G$, then, by definition of a pre-core, 
for every vertex $u\in V(G)$ there exists a path of length at most $r$ containing a vertex of 
$Q$ and connecting~$u$ with a vertex of $X-\{w\}$. In particular, 
this holds for the vertex $w$. But then $Q$ crosses some path of length at most $r$ between two distinct vertices of $X$, a contradiction.

Otherwise, $X-\{w\}$ is not a distance-$r$ dominating set of $G$. 
Let $u$ be a vertex with $\dist(u,v)>r$ for all $v\in X-\{w\}$. 
We define $X'\coloneqq (X\cup\{u\})\setminus\{w\}$ and claim that
$X'$ satisfies the condition on $X$: no path of length at most $r$ connecting two distinct elements of $X'$ is intersected by $Q$.
To see this, assume that there exists such a path. As $u$ is not
within distance $r$ to any other vertex of $X'$, we have 
$x_1,x_2\neq u$. Hence, the considered path connects two 
vertices from $X$, contradiction our assumption on $X$. 
Now observe that $f(X')<f(X)$, as by construction of $X'$ we have
$f(X') = f(X - \{w\}) = f(X)-1$, contradicting the minimality of $X$. 
This finishes the proof.
\end{proof}

Now, \cref{thm:dependence-core} follows immediately
from \cref{lem:pre-core} and the following lemma, which we are going to prove next.

\begin{lemma}\label{lem:pre-core-exists}
Let $r,k\in \N$ and $\Cc$ be a fixed nowhere dense graph class. Then there
exists $p\in\N$, depending on $r,k$ and $\Cc$ such that for every 
$G\in\Cc$ and every $A\subseteq V(G)$ 
there exists a distance-$r$ dependence pre-core for $G$, $A$ and
$k$ of size at most $p$.
Moreover, given $G$, $A$, and $k$, such a distance-$r$ dependence pre-core can be computed in time $f(r,k)\cdot \|G\|$,
for some function $f(r,k)$ that is polynomial in $k$ for fixed $r$.
\end{lemma}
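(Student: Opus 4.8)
The plan is to prove Lemma~\ref{lem:pre-core-exists} using the characterization of nowhere denseness via the \emph{Splitter game}, following the high-level description given after Theorem~\ref{thm:ind-weak}. Recall that the Splitter game of radius $r$ and length $m$ on a graph $G$ is played between two players, Connector and Splitter: in each round, Connector picks a vertex $v$, the arena is restricted to the ball $N^G_r(v)$, and Splitter deletes one vertex from this ball; Splitter wins if the arena becomes empty within $m$ rounds. By the theorem of Grohe, Kreutzer and Siebertz, for every nowhere dense class $\Cc$ and every $r$ there is a bound $m = m(r,\Cc)$ such that Splitter wins the radius-$(2r)$ game of length $m$ on every $G \in \Cc$; moreover Splitter has a strategy that is computable in linear time on each such $G$.

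The core of the argument is a recursive construction of the pre-core along Splitter's strategy tree. Concretely, I would define a recursive procedure $\textsc{PreCore}(H, A)$ that, given an induced subgraph $H$ of $G$ (the current arena) and a set $A \subseteq V(H)$, returns a small set $Q \subseteq V(H)$. The base case is when Splitter's strategy says the game is already won (arena empty), where we return $\emptyset$. Otherwise, one has to enumerate the relevant Connector moves: the key insight is that we do not branch on all vertices of $H$, but only on a bounded set of \emph{representative} moves. Here the hypothesis that $G$ has no distance-$r$ independent set of size $k$ enters, combined with the profile machinery from Section~\ref{sec:algo}. The plan is: any witness-violating configuration for a candidate $D$ (a dominating-type set of size $\le k$) lives inside some ball $N^G_{2r}(v)$; two Connector moves $v, v'$ whose distance-$(2r)$ profiles agree on the so-far-chosen separator vertices behave interchangeably for the purposes of capturing pairs $(D,a)$, so one need only keep one representative per profile. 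Since there are at most $(2r+2)^{s}$ profiles over a separator of size $s$, and separators accumulate only along paths of the strategy tree of depth $\le m$, the branching is bounded, the recursion depth is bounded by $m$, and hence the tree — and the output $Q$ (which collects, at each node, the Connector move together with the vertex Splitter deletes, plus the accumulated separator) — has size bounded by a function $p(r,k,\Cc)$. The correctness claim is that this $Q$ captures $(D,a)$ for every $a \in A$ and every size-$\le k$ set $D$ distance-$r$ dominating $A$: given such $D$ and $a$, a shortest witnessing path of length $\le r$ from $a$ to $D$ lies in some ball around a Connector move; descend into the corresponding subtree, using that the profile-representative we kept is equivalent, until the separator we have accumulated must hit that path (this is forced because Splitter eventually empties the arena, so the path cannot survive indefinitely).

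For the algorithmic statement, I would note that Splitter's strategy is computable in linear time, BFS from the $O(m)$ accumulated separator vertices computes the relevant distance-$(2r)$ profiles in time $O(m \cdot \|G\|)$ per node, and the recursion has $f(r,k)$ nodes, so the total running time is $f(r,k)\cdot\|G\|$; tracking the dependence on $k$ (polynomial for fixed $r$) requires that the number of profiles, $(2r+2)^{O(m)}$, is independent of $k$ — the $k$-dependence only enters through how many representatives are needed to cover all dominators, which one can keep polynomial in $k$ by a careful bookkeeping argument similar to the one used for distance-$r$ domination cores in~\cite{DawarK09}.

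\textbf{Main obstacle.} The delicate point is the definition of "representative" Connector moves and the proof that restricting to them loses nothing: one needs that whenever $D$ (of size $\le k$) distance-$r$ dominates $A$ and $a \in A$ is not yet captured by the accumulated separator, there is a representative Connector move $v$ such that the witnessing path from $a$ into $D$ still lies in $N^G_{2r}(v)$ \emph{and} the recursion into the $v$-arena makes genuine progress (the separator will eventually hit the path). Making this simultaneously compatible with a branching bound that depends only on $r$ and $\Cc$ (for the size of $Q$) while keeping the $k$-dependence polynomial is the technical heart of the argument, and is precisely where the hypothesis "no distance-$r$ independent set of size $k$" must be used in full strength — it guarantees that the set of "dominators to worry about" can be summarized by polynomially-in-$k$ many profiles rather than exponentially many.
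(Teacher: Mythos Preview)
Your high-level framework---recursion along Splitter's winning strategy, accumulating deleted vertices into a separator $S$, and bounding the branching at each node---is exactly the paper's route. But the proposal has a genuine gap at the point you yourself flag as the obstacle, and a conceptual misstep that obscures the right mechanism.

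First, Lemma~\ref{lem:pre-core-exists} has \emph{no} hypothesis that $G$ lacks a distance-$r$ independent set of size $k$; pre-cores always exist (the whole vertex set is one). You have conflated this lemma with Theorem~\ref{thm:dependence-core}. The assumption you keep invoking ``in full strength'' is simply not available here, so any argument that relies on it is proving the wrong statement.

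Second, your proposed mechanism for choosing representative Connector moves---keep one vertex per distance-$2r$ profile on $S$---does not work. Two vertices $v,v'$ with the same profile on $S$ can have completely different $2r$-neighborhoods in $G-S$; in particular, a path from $a$ to $D$ lying near $v$ need not lie anywhere near $v'$. Profile equivalence on $S$ says nothing about local structure away from $S$, so the moves are not interchangeable for capturing $(D,a)$.

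What the paper actually does is different and does not use the missing hypothesis. It partitions $A$ (not the Connector moves) by distance-$r$ profile on $S$, and for each profile class $T_p$ runs a greedy procedure: either $T_p$ has a $2r$-cover $Z_p$ of size $\le k$ in $G-S$, or it contains a $2r$-independent set of size $>k$. In the latter case one argues that for \emph{any} $D$ of size $\le k$ that $r$-dominates $A$, every $a\in T_p$ with $(D,a)$ not captured by $S$ would leave some element of $T_p$ undominated---a contradiction---so such pairs do not occur and nothing needs to be done. The Connector moves are then the vertices of $Z=\bigcup_p Z_p$, giving $|Z|\le k\cdot (r+2)^{|S|}$, and the game is played with radius $3r$ (not $2r$: one needs the full path from $a$ to $D$, of length $\le 2r+r$, to stay in the sub-arena). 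This is where the polynomial dependence on $k$ genuinely comes from, and it is a local argument about each particular $D$, not a global independence assumption on $G$.
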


\paragraph*{Splitter game.}
In order to prove \cref{lem:pre-core-exists} we appeal 
to a characterization of nowhere dense graph classes in terms of a 
game, called the splitter game, which was introduced in~\cite{GroheKS17}.
 
\begin{definition}[Splitter game] Let $G$ be a graph and let 
$\ell, r \in \N$. The $\ell$-round radius-$r$ \emph{splitter game} 
on~$G$ is played by two players, \emph{Connector} and \emph{Splitter}, 
as follows. We let $G_0 := G$. In round $i + 1$ of the
game, Connector chooses a vertex $v_{i+1} \in G_i$. 
Then Splitter picks a vertex $w_{i+1} \in N_r^{G_i}(v_{i+1})$. We define 
$G_{i+1} \coloneqq G_i[N_r^{G_i}(v_{i+1}) - \{w_{i+1}\}]$. Splitter 
wins if $G_{i+1} = \emptyset$. Otherwise the game continues
on the graph~$G_{i+1}$. If Splitter has not won after~$\ell$ rounds, then Connector wins.
 \end{definition}
 
A strategy for Splitter is a function $f$ that associates to every 
partial play $(v_1, w_1,\ldots, v_i, w_i)$ with
associated sequence $G_0,\ldots, G_i$ of graphs and move 
$v_{i+1}\in V(G_i)$ by Connector a vertex $w_{i+1}\in N^{G_i}_r(v_{i+1})$.
A strategy $f$ is a winning strategy for Splitter in the $\ell$-round 
radius-$r$ splitter game on $G$ if Splitter wins every play in which he 
follows the strategy $f$. We say that a winning strategy is computable
in time $T$ if for every partial play as above, the vertex $w_{i+1}$ can
be computed in time $T$. 
 
\begin{theorem}[\cite{GroheKS17}]\label{thm:SplitterGame}
Let $\Cc$ be a class of graphs. Then $\Cc$ is nowhere dense if and 
only if for every $r \in \N$ there exists $\ell \in \N$, such that for 
every $G \in \Cc$, Splitter wins the $\ell$-round radius-$r$ splitter 
game on $G$. Furthermore, for fixed $r$, a winning strategy for splitter in the
radius-$r$ splitter game on any graph $G\in\Cc$ is computable in 
time $\mathcal{O}\big(\ell\cdot \|G\|\big)$. 
\end{theorem}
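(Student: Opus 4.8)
The plan is to re‑derive this result of Grohe, Kreutzer and Siebertz from the equivalence of nowhere denseness and uniform quasi‑wideness (\cref{thm:nd-uqw}). It is convenient to rephrase the splitter game through a rank. Set $\mathrm{rk}_r(\emptyset)=0$, and for a nonempty graph $G$ let $\mathrm{rk}_r(G)$ be the least $\ell\ge 1$ for which every $v\in V(G)$ admits some $w\in N^G_r(v)$ with $\mathrm{rk}_r\bigl(G[N^G_r(v)\setminus\{w\}]\bigr)\le\ell-1$ (and $\mathrm{rk}_r(G)=\infty$ if no such $\ell$ exists); since $\abs{N^G_r(v)\setminus\{w\}}<\abs{V(G)}$, this recursion is well founded. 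An easy induction on $\ell$ shows that Splitter wins the $\ell$‑round radius‑$r$ splitter game on $G$ precisely when $\mathrm{rk}_r(G)\le\ell$, so the theorem is equivalent to the assertion that $\Cc$ is nowhere dense if and only if $\sup_{G\in\Cc}\mathrm{rk}_r(G)<\infty$ for every $r\in\N$.

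For the implication ``Splitter wins $\Rightarrow$ nowhere dense'' I would prove the contrapositive. If $\Cc$ is somewhere dense then, by the standard relation between shallow minors and shallow topological minors \cite{sparsity}, there is $r'\in\N$ such that for every $n$ some $G\in\Cc$ contains a subgraph $S$ that is a subdivision of $K_n$ in which every branch path has length at most $r'$; such an $S$ has diameter at most some $D=D(r')$. Consider the radius‑$D$ game. Connector maintains the invariant that $G_i$ contains a subgraph which is a subdivision of $K_{n-i}$ with all branch paths of length at most $r'$, and in round $i+1$ plays a branch vertex $v_{i+1}$ of such a subgraph $S_i$. Since the eccentricity of $v_{i+1}$ in $S_i$ is at most $D$, which is the game radius, we have $N^{G_i}_D(v_{i+1})\supseteq V(S_i)$, hence $G_{i+1}$ contains $S_i\setminus\{w_{i+1}\}$; whether $w_{i+1}$ is a branch vertex or a subdivision vertex of $S_i$, the graph $S_i\setminus\{w_{i+1}\}$ still contains a subdivision of $K_{n-i-1}$ with branch paths of length at most $r'$ (discard the at most one branch vertex whose incident path was damaged), so the invariant is preserved. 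After $\ell$ rounds $G_\ell$ contains a subdivision of $K_{n-\ell}$, which is nonempty once $n>\ell$; as $n$ is arbitrary, Connector wins the $\ell$‑round radius‑$D$ game for every $\ell$, so $\Cc$ fails the splitter condition at radius $D$.

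For the converse, ``nowhere dense $\Rightarrow$ Splitter wins'', I would fix $r$, invoke \cref{thm:nd-uqw} to obtain margins $s$ and $N$ for $\Cc$, and prove that $\mathrm{rk}_r(G)$ is bounded uniformly over $G\in\Cc$ by an induction in which the outer parameter is the separator bound $s(2r)$ and the inner one is the target number of rounds. The inductive step is the crux. Given a sufficiently large game graph $G_i$, apply uniform quasi‑wideness at radius $2r$ to $W=V(G_i)$ to obtain a set $\tilde S$ with $\abs{\tilde S}\le s(2r)$ and a huge set $A\subseteq V(G_i)$ that is distance‑$2r$ independent in $G_i-\tilde S$. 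For any Connector move $v_{i+1}$, the ball $N^{G_i}_r(v_{i+1})$ has diameter at most $2r$ in $G_i$, so it contains at most one vertex of $A$ whose shortest path to $v_{i+1}$ avoids $\tilde S$ — otherwise two such vertices would be at distance at most $2r$ in $G_i-\tilde S$. Consequently $A\cap N^{G_i}_r(v_{i+1})$ is confined to the vicinity of $\tilde S$, and Splitter can, over the next few rounds, delete the vertices of $\tilde S$ relevant to the current ball, after which that ball falls into a subclass governed by a strictly smaller separator bound, to which the induction hypothesis applies. Turning this heuristic into a genuine strictly decreasing potential — identifying exactly which quantity drops at each Splitter move and bookkeeping the nested recursion — is the main obstacle, and it is exactly the technical content of the argument of \cite{GroheKS17}.

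Finally, the algorithmic claim follows because the strategy constructed above is effective: computing Splitter's move in the current graph $G_i$ amounts to one call to the (linear‑time) algorithmic form of uniform quasi‑wideness for nowhere dense classes together with $\Oh(\ell)$ breadth‑first searches, each running in time $\Oh(\|G\|)$ on a graph with at most $\|G\|$ vertices and edges; the bookkeeping of the induced subgraphs $G_i$ and of the nested recursion adds only a further $\Oh(\ell)$ factor, yielding the claimed $\Oh(\ell\cdot\|G\|)$ bound per move.
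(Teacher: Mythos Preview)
The paper does not prove \cref{thm:SplitterGame}; it is quoted from \cite{GroheKS17} and used as a black box, so there is no in-paper proof to compare your attempt against.

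On its own merits, what you have written is an outline rather than a proof, and you acknowledge this. The easy direction (somewhere dense $\Rightarrow$ Connector wins at some fixed radius) is essentially correct. For the hard direction you correctly identify the route of \cite{GroheKS17} --- derive Splitter's strategy from uniform quasi-wideness --- but you then explicitly defer the core step, writing that ``turning this heuristic into a genuine strictly decreasing potential \ldots\ is exactly the technical content of the argument of \cite{GroheKS17}''. That is honest, but it means the inductive argument is not actually set up: you do not specify the potential, you do not show it drops, and the sentence ``$A\cap N^{G_i}_r(v_{i+1})$ is confined to the vicinity of $\tilde S$'' is too vague to do work (those vertices may sit anywhere within distance $r$ of $\tilde S$, which by itself yields no progress in a single Splitter move). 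Two smaller gaps: you apply uniform quasi-wideness to the game graphs $G_i$, which are induced subgraphs of members of $\Cc$ and need not lie in $\Cc$ itself --- this is fine once one passes to the hereditary closure, but it should be said; and the algorithmic claim rests on an unreferenced ``linear-time algorithmic form of uniform quasi-wideness'', which is not supplied here. In short: right blueprint, missing load-bearing walls.
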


Now \cref{lem:pre-core-exists} is implied by \cref{thm:SplitterGame} and the following lemma (applied with $s=0$). 

\begin{lemma}\label{lem:precore-by-splitter}
For all $\ell, k, r, s \in \N$ there exists $p \in \N$, depending
polynomially on $k$ for fixed $\ell,r,s$, with the following property.
For every graph $G$ and $S \subseteq V(G)$ with $|S| \leq s$ and such that Splitter wins
the $\ell$-round radius-$3r$ splitter game on $G-S$, and
for every $A \subseteq G$, there exists a distance-$r$ dependence
pre-core $Q$ for $G,A$ and $k$ of size at most~$p$. Furthermore, 
such a set $Q$ is computable in time $f(\ell,k,r,s)\cdot \|G\|$,
for some computable function $f$ which is polynomial in $k$ for fixed $\ell,r,s$.
\end{lemma}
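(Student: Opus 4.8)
The plan is to prove \cref{lem:precore-by-splitter} by induction on $\ell$, the number of rounds in the splitter game on $G-S$. The base case $\ell=0$ is trivial: Splitter winning in zero rounds means $G-S=\emptyset$, so $V(G)=S$ has size at most $s$, and $Q=V(G)$ is a pre-core of size at most $s\le p$. For the inductive step, the key idea is to capture all the "communication" of candidate dominating sets $D$ through a bounded-size set of vertices, using the distance-$3r$ structure. Specifically, suppose Splitter wins the $\ell$-round radius-$3r$ game on $G-S$, and fix $A\subseteq V(G)$. Let $f$ be Splitter's winning strategy. Consider the possible first moves $v$ of Connector. For each such $v$, Splitter answers with $w=f(v)$, and the game continues on $G'_v := (G-S)[N^{G-S}_{3r}(v)-\{w\}]$, on which Splitter wins in $\ell-1$ rounds. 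The plan is to restrict attention to a \emph{bounded} number of "representative" first moves $v$, build a pre-core recursively inside each resulting graph $G'_v$ (with $S$ enlarged by $\{v,w\}$, so $|S|$ grows by at most $2$), and take the union together with the representative vertices themselves.

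**Next I would** argue why a bounded number of representative Connector moves suffices. Here is where the assumption that $G$ has no distance-$r$ independent set of size $k$ — more precisely, where the bound depends polynomially on $k$ — comes into play. Any candidate $D$ of size at most $k$ that distance-$r$ dominates $A$ can be analyzed vertex-by-vertex: for $a\in A$, pick $d\in D$ with a path $P_a$ of length at most $r$ from $a$ to $d$. If this path is "captured" by $S$ (meaning it passes through $S$, or its endpoints behave specially relative to $S$), we handle it using the profile information on $S$. Otherwise, the path lives entirely in $G-S$, and then its midpoint region sits inside some ball $N^{G-S}_{3r}(v)$ for a representative Connector move $v$ — crucially, if $v$ is chosen as a vertex on or near $P_a$, then the whole path $P_a$ (which has length $\le r \le 3r$) lies within $N^{G-S}_{3r}(v)$, hence inside $G'_v$ once we remove Splitter's response $w$, provided $w$ does not lie on $P_a$. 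The number of distinct "behaviors" of $D$-vertices and $a$-vertices with respect to $S$ is governed by distance-$3r$ profiles on $S$, of which there are at most $(3r+2)^{|S|}\le(3r+2)^s$, a constant; combined with the fact that $|D|\le k$, this lets us choose a set of representative moves whose number is polynomial in $k$ (for fixed $\ell,r,s$). Formally, I would enumerate, over all profile-types $\tau$ realized on $S$, one witness vertex of each type that could serve as a path midpoint, together with the at most $k$ vertices needed to cover $D$ itself; this gives $\poly(k)$ representatives $v_1,\dots,v_m$.

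**Then I would** assemble the pre-core. For each representative $v_j$, let $w_j=f(v_j)$ be Splitter's response, let $S_j := (S\cap V(G))\cup\{v_j,w_j\}$ (so $|S_j|\le s+2$), and let $G_j := G[N^{G-S}_{3r}(v_j)]$; Splitter wins the $(\ell-1)$-round radius-$3r$ game on $G_j - S_j$. By the inductive hypothesis applied to $G_j$, the set $A_j := A\cap V(G_j)$, and parameter $k$, there is a pre-core $Q_j$ for $G_j, A_j, k$ of size at most $p(\ell-1,k,r,s+2)$, computable in time $f(\ell-1,k,r,s+2)\cdot\|G_j\|$. I set
\[
Q := \{v_1,w_1,\dots,v_m,w_m\}\cup S \cup \bigcup_{j=1}^{m} Q_j.
\]
Then $|Q| \le |S| + 2m + m\cdot p(\ell-1,k,r,s+2) = \poly(k)$ for fixed $\ell,r,s$, which defines $p(\ell,k,r,s)$ by the recursion. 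The verification that $Q$ is a pre-core: given $D$ of size $\le k$ distance-$r$ dominating $A$ and given $a\in A$, take $d\in D$ and a shortest path $P$ of length $\le r$ from $a$ to $d$. If $P$ meets $S$ we are done since $S\subseteq Q$. Otherwise $P$ lies in $G-S$; by the choice of representatives there is some $v_j$ lying on $P$ (or adjacent behavior) such that, unless $w_j\in P$, the path $P$ lies inside $G_j$ and connects $a\in A_j$ to $d\in D\cap V(G_j)$, where $D\cap V(G_j)$ still distance-$r$ dominates $a$ within $G_j$; applying the pre-core property of $Q_j$ captures $(D\cap V(G_j), a)$, hence $Q_j\subseteq Q$ captures $(D,a)$. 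If $w_j\in P$, then $w_j\in Q$ lies on $P$ and we are done directly. For the running time, the recursion gives total time $\sum_j f(\ell-1,\ldots)\cdot\|G_j\|$ plus the cost of computing profiles and representatives; since the balls $N^{G-S}_{3r}(v_j)$ need not be disjoint, one is tempted to worry about a blowup, but because $m$ is bounded by a function of $k$ (independent of $n$) and each ball has size $\le\|G\|$, the total is $f(\ell,k,r,s)\cdot\|G\|$ as required — computing all distance-$3r$ profiles on the relevant small sets is done by BFS in $\Oh(\|G\|)$ per source, of which there are boundedly many.

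**The main obstacle** I anticipate is the bookkeeping in the recursive step: ensuring that a candidate $D$ in $G$, when restricted to $V(G_j)$, still behaves as a legitimate distance-$r$ dominating set of $A_j$ \emph{inside} $G_j$ — distances in the induced subgraph $G_j$ can only be larger than in $G$, so a path of length $\le r$ in $G$ that happens to lie inside $N^{G-S}_{3r}(v_j)$ really is a path in $G_j$, but one must be careful that removing Splitter's vertex $w_j$ does not destroy it, which is exactly why the $w_j\in P$ case must be split off and handled by putting $w_j$ itself into $Q$. A second, related subtlety is arranging the representative set of Connector moves so that \emph{every} relevant path $P$ (over all choices of $D$ and $a$) is either hit by $S\cup\{w_j\}$ or contained in some $G_j$; this is the place where the no-large-independent-set hypothesis and the profile-counting argument must be combined carefully to keep $m$ polynomial in $k$ rather than exponential. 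Everything else — the BFS-based profile computations, the size accounting, unwinding the $\ell$-fold recursion — is routine.
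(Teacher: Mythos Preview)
Your inductive scaffold (base case, recurse on subgraphs obtained from Splitter's strategy with $|S|$ growing by a constant, take the union) matches the paper. But the heart of the argument---how to select a bounded set of representative Connector moves---is missing, and what you sketch does not work.

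\textbf{The representative set.} You write that you would ``enumerate one witness vertex of each profile type \dots\ together with the at most $k$ vertices needed to cover $D$ itself''. But $D$ is universally quantified in the definition of a pre-core; you cannot choose representatives depending on $D$. The paper's key step (Claim~\ref{cl:localLemma}) is this: partition $A$ into classes $T_p$ by distance-$r$ profile on $S$, and apply the greedy procedure of \cref{lem:greedy-is} to each $T_p$ in $G-S$ with parameters $r,k$. This yields either a set $Z_p\subseteq T_p$ of size $\le k$ that distance-$2r$ dominates $T_p$ in $G-S$, or a distance-$2r$ independent set $Y_p\subseteq T_p$ of size $>k$. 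The punchline is that the second alternative is impossible: since $|D|\le k$, some $y_0\in Y_p$ has $N_r^{G-S}(y_0)\cap D=\emptyset$, and since $y_0$ shares the profile $p$ with $a$ while $(D,a)$ is not captured by $S$, also no path through $S$ connects $y_0$ to $D$ within distance $r$, contradicting that $D$ dominates $A$. So $Z=\bigcup_p Z_p$ has size at most $k\cdot(r+2)^s$ and every relevant $a$ lies in $N_{2r}^{G-S}(Z)$. This is the missing idea; note it uses only $|D|\le k$, not any no-independent-set assumption on $G$ (the lemma is unconditional).

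\textbf{Why ``$v_j$ lying on $P$'' is the wrong target.} You want a representative on the path $P$, but paths vary with $D$ and $a$. What the argument actually secures is that $a$ is within distance $2r$ of some $z\in Z$ in $G-S$; then the length-$\le r$ path from $a$ to its dominator $d$ stays inside $N_{3r}^{G-S}(z)$, so $d\in V(G_z)$.

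\textbf{The sets $A_j$ must be refined.} Setting $A_j=A\cap V(G_j)$ does not guarantee that $D\cap V(G_j)$ distance-$r$ dominates $A_j$ inside $G_j$, which you need before invoking the pre-core property of $Q_j$. A vertex $b\in A_j$ might be dominated in $G$ only via a path through $S$, and its dominator need not lie in $G_j$. The paper fixes this by indexing also over the profile $p=\profile_r^{G,S}(D)$ (so $T=Z\times\Pp$) and setting $A_{z,p}$ to be those $a\in A\cap N_{2r}^{G-S}(z)$ for which no vertex with profile $p$ reaches $a$ through $S$ within distance $r$. Then for $b\in A_{z,p}$ the dominating path avoids $S$, hence lies in $G_z$.

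Everything else you wrote (handling $w_j\in P$ by putting $w_j$ into $Q$, the linear-time BFS bookkeeping, the size recursion) is fine and matches the paper.
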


\paragraph*{Induction on the splitter's strategy.} 
Before we prove \cref{lem:precore-by-splitter}, we need one more lemma, 
which expresses the outcome of applying a naive greedy strategy for finding a small distance-$r$ dominating set.

\begin{lemma}\label{lem:greedy-is}
Let $G$ be a graph, $X\subseteq V(G)$ and $r, k \in \N$. Then either
\begin{enumerate}
 \item there is $Y \subseteq X$ such that $|Y| > k$ and $Y$ is distance-$2r$ independent in $G$, or
 \item there is $Z \subseteq X$ such that $|Z| \leq k$ and $X \subseteq N_{2r}^{G}(Z)$.
\end{enumerate}
 Moreover, there is an algorithm which for given $G, X, r, k$ computes $Y$ or $Z$ as above in time $\mathcal{O}(k\cdot \|G\|)$. 
\end{lemma}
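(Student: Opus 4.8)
The plan is to prove this by the obvious greedy procedure: repeatedly pick a vertex of $X$ that is far from everything picked so far. Formally, I would maintain a set $Z$, initially empty, and process $X$ as follows: as long as there exists a vertex $x\in X$ with $\dist_G(x,z)>2r$ for every $z\in Z$, add such an $x$ to $Z$. This loop terminates since $X$ is finite and $Z$ grows. When it stops, by construction every vertex of $X$ is within distance $2r$ of some vertex of $Z$, so $X\subseteq N^G_{2r}(Z)$; moreover any two distinct vertices of $Z$ are at distance more than $2r$ in $G$ (when the second one was added, it was far from the first), so $Z$ is distance-$2r$ independent in $G$. Now dichotomize on $|Z|$: if $|Z|\le k$ we are in case (2) with this $Z$; if $|Z|>k$, then any subset $Y\subseteq Z$ of size $k+1$ witnesses case (1), being a distance-$2r$ independent subset of $X$ of size larger than $k$. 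In fact we need not even run the loop to completion: we can stop as soon as $|Z|$ reaches $k+1$ and immediately output $Y=Z$ for case (1).

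For the running time, the naive reading of the greedy loop would require, in each of up to $|X|$ iterations, a distance-$2r$ query from each remaining vertex of $X$ against the current $Z$, which is too slow. The fix is the standard one: after each time we add a new vertex $z$ to $Z$, we run a single breadth-first search from $z$ in $G$ truncated at depth $2r$, and mark every vertex it reaches as ``covered''. A vertex $x\in X$ is eligible to be added precisely when it is not yet covered, so the next vertex to add can be found by scanning $X$ for an uncovered vertex. Each truncated BFS costs $\Oh(\|G\|)$, and we perform at most $k+1$ of them before we either exhaust $X$ (case (2) with $|Z|\le k$) or hit $|Z|=k+1$ (case (1)); the scans of $X$ for uncovered vertices can be amortized by keeping a pointer that only moves forward, or simply bounded by $\Oh((k{+}1)\cdot|X|)\le \Oh(k\cdot\|G\|)$. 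This yields the claimed $\Oh(k\cdot\|G\|)$ bound.

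I do not anticipate any real obstacle here; the only point requiring a little care is making the running time analysis honest, namely observing that the algorithm must be phrased so that it runs at most $k+1$ truncated BFS's rather than one per vertex of $X$, and that the distance-$2r$ independence of the output $Z$ (hence of any $(k{+}1)$-subset $Y$) is genuinely about distances in $G$ and not in some subgraph. Both are immediate from the construction above.
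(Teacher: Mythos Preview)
Your proposal is correct and is essentially the same argument as the paper's: both greedily build a distance-$2r$ independent subset of $X$ one vertex at a time, stopping either when its size exceeds $k$ (yielding case~(1)) or when all of $X$ is within distance $2r$ of it (yielding case~(2)), with each iteration implemented by a single linear-time BFS. Your write-up is in fact more explicit about the running-time details than the paper's own proof.
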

\begin{proof}
We use a simple greedy algorithm, which aims to construct the set 
$Y$. Let us assume we have already constructed $Y \subset X$ such that $|Y| \leq k$ and $Y$ is distance-$2r$ independent. There are two possibilities:
\begin{itemize}
\item If $X \subset N_{2r}(Y)$, then we set $Z\coloneqq Y$ satisfying the second condition and terminate.
\item Otherwise, there is $v \in X$ such that $v \not\in N_{2r}(Y)$, so $Y \cup \{v\} $ is distance-$2r$ independent and larger, hence we continue with $Y \cup \{v\}$.
\end{itemize}
  After $k+1$ repetition of this procedure, we constructed a set $Y$ satisfying the first condition. Testing whether there exists $v \not\in N_{2r}(Y)$ can be easily done in linear time, so the final running time is $\mathcal{O}(k\cdot (n+m))$.
\end{proof}

\begin{proof}[of \cref{lem:precore-by-splitter}]
We prove the lemma by induction on the length $\ell$ of the splitter game. 
Furthermore, throughout the induction we maintain the following additional invariant: $Q\supseteq S$.
If $\ell=0$ we have $V(G)=S$ and we can set $p=s$ and $Q\coloneqq V(G)$. 

Denote by $\Pp$ the set of all distance-$r$ profiles on $S$; then $|\Pp|\leq (r+2)^s$.
In the induction step, our goal will be to find a family
$$\Gg = \{(G_t, S_t, A_t)\}_{t \in T}\qquad \textrm{for some index set }T,$$
where $G_t$ is a subgraph of $G$ and $S_t,A_t\subseteq V(G_t)$ satisfying the following properties:
  \begin{enumerate}
  \item The size $|\Gg|$ can be bounded by a function of $\ell,k,r,s$,
  which is polynomial in $k$ for fixed $\ell,r,s$.
  \item We have that $S_t\supseteq S$ and $|S_t|\leq |S|+1$ for all $t\in T$. 
  \item For each $t\in T$, Splitter wins the $\ell-1$-round
  radius-$3r$ splitter game on $G_t-S_t$. So by the induction hypothesis, 
  there exists a small distance-$r$ dependence pre-core $Q_t\supseteq S_t$ for $G_t$, $A_t$, and $k$. 
  \item For every $a \in A$ and for every set $D\subseteq V(G)$ that has size at most $k$ and distance-$r$ dominates $A$, 
  if the pair $(D,a)$ is not captured by $S$, then $(D, a)$ is captured by $Q_t$ in some graph~$G_t$. 
\end{enumerate}

As we will see, once we have constructed the family $\Gg$ we can output 
$Q\coloneqq \bigcup_{t \in T} Q_t$. Note that then indeed $Q\supseteq S$, as $Q_t\supseteq S_t\supseteq S$ for all $t\in T$.

\medskip
Each of the graphs $G_t$ is constructed by simulating one step
of the splitter game. That is, we consider some vertex $v_t\in G-S$
as the choice of Connector in the splitter game. By assumption, 
we can find an answer $w_t\in N_{3r}^{G-S}(v_t)$ of Splitter, so 
that Splitter wins the $\ell-1$-round radius-$3r$ splitter game on 
$G[N_{3r}^{G-S}(v_t) - \{w_t\}]$. We let $G_t\coloneqq 
G[N_{3r}^{G-S}(v_t) \cup S]$ and $S_t \coloneqq S \cup \{w_t\}$. 
Hence, the second and third of the above properties are satisfied for all 
constructed graphs $G_t$. Our task will be to identify a small set of 
moves $v_t$ of Connector and sets $A_t$, so that also the first and 
fourth property are satisfied. 

To construct such a set of moves of Connector we use a localization 
property for any pair $(D, a)$ to be captured --- we want to find a small 
set $Z$ such that every pair $(D, a)$ that 
needs to be captured will be close to some vertex of $Z$. This 
is made precise by the following statement, whose proof relies on \cref{lem:greedy-is}.

\begin{claim} \label{cl:localLemma}
There exists a set $Z\subseteq V(G)$ with $|Z|\leq k\cdot (r+2)^s$, 
such that for every set $D$ such that $|D|\leq k$ and $D$ distance-$r$ dominates of $A$, and every $a\in A$, 
if $(D,a)$ is not captured by $S$, then $a\in N_{2r}^{G-S}(Z)$.
\end{claim}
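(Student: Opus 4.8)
The goal is to find a small set $Z$ so that every pair $(D,a)$ that needs capturing has $a$ within distance $2r$ of $Z$ in $G-S$. The key idea is that if $(D,a)$ is \emph{not} captured by $S$, then the path of length at most $r$ from $a$ to some $d\in D$ witnessing that $D$ distance-$r$ dominates $a$ cannot pass through $S$ (otherwise $S$ would capture the pair); hence $a$ is within distance $r$ of $d$ in $G-S$. So it suffices to find a small $Z$ such that every vertex $d$ of every relevant dominating set $D$ is within distance $r$ of $Z$ in $G-S$, which would give $a\in N^{G-S}_{2r}(Z)$ by the triangle inequality. The subtlety is that there can be many candidate sets $D$, so we cannot simply take the union of all of them; instead we must exploit the profile structure on $S$.

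First I would group the vertices $d\in V(G)$ according to their distance-$r$ profile on $S$: there are at most $(r+2)^s=|\Pp|$ profile classes. For each profile class, I would apply \cref{lem:greedy-is} (with the class $X$ being the set of vertices realizing that profile, and with the number $k$) to obtain either a distance-$2r$ independent subset of size $>k$ in $G$, or a set $Z_{\pi}\subseteq X$ of size at most $k$ with $X\subseteq N^G_{2r}(Z_\pi)$. The first case must be excluded: if some profile class contained a distance-$2r$ independent set of size $>k$, it would in particular be distance-$r$ independent of size $k$, contradicting the standing hypothesis that $G$ has no distance-$r$ independent set of size $k$. Hence every profile class $\pi$ yields a dominator $Z_\pi$ of size at most $k$, and I set $Z=\bigcup_{\pi\in\Pp} Z_\pi$, so $|Z|\le k\cdot(r+2)^s$ as required.

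It remains to check the capturing property. Fix $D$ with $|D|\le k$ distance-$r$ dominating $A$, and $a\in A$ with $(D,a)$ not captured by $S$. Pick $d\in D$ and a path $P$ of length $\le r$ from $a$ to $d$ in $G$. If $P$ met $S$ we would have $(D,a)$ captured by $S$ — contradiction — so $P$ lies in $G-S$, giving $\dist_{G-S}(a,d)\le r$. Let $\pi$ be the profile of $d$ on $S$. Since $d$ lies in profile class $\pi$, there is $z\in Z_\pi\subseteq Z$ with $\dist_G(d,z)\le 2r$. Here one must be slightly careful: \cref{lem:greedy-is} gives distance in $G$, not in $G-S$. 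However, because $d$ and $z$ have the \emph{same} profile on $S$ (both in class $\pi$), any short path from $d$ to $z$ through $S$ can be rerouted: if the witnessing path from $d$ to $z$ enters $S$ at a vertex $v$, then from $z$ one can reach $v$ within the same distance (equal profiles), so one obtains a path from $d$ to $z$ avoiding $S$ of length at most $2r$ — i.e.\ $\dist_{G-S}(d,z)\le 2r$. The anticipated main obstacle is exactly this rerouting argument: making precise that equal profiles on $S$ let us replace a path through $S$ by one avoiding $S$ without increasing length, analogously to the rerouting already used in the proof of \cref{lem:nd-dom-semiladder}. Granting this, the triangle inequality in $G-S$ gives $\dist_{G-S}(a,z)\le \dist_{G-S}(a,d)+\dist_{G-S}(d,z)\le r+2r$; to land the clean bound $a\in N^{G-S}_{2r}(Z)$ stated in the claim one should run \cref{lem:greedy-is} with a halved radius (apply it at radius $r/2$-scale, i.e.\ replace the ``$2r$'' there by ``$r$'' via rescaling so that the dominator is within distance $r$), after which $\dist_{G-S}(a,z)\le 2r$ follows. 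The computational statement is immediate: computing all profiles on $S$ costs $\Oh(s\cdot\|G\|)$ via BFS from $S$, and each of the $\le(r+2)^s$ calls to \cref{lem:greedy-is} costs $\Oh(k\cdot\|G\|)$, for a total of $\Oh((r+2)^s\cdot k\cdot\|G\|)$, polynomial in $k$ for fixed $r,s$.
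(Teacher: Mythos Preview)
There is a genuine gap. You invoke a ``standing hypothesis that $G$ has no distance-$r$ independent set of size $k$'' to exclude the first alternative of \cref{lem:greedy-is}, but no such hypothesis is in force: \cref{cl:localLemma} sits inside the proof of \cref{lem:precore-by-splitter}, which constructs a dependence \emph{pre-core} unconditionally (the paper explicitly notes that pre-cores always exist, regardless of whether $G$ has a large distance-$r$ independent set). So your exclusion of the $Y_\pi$ case is unjustified, and without it the construction of $Z$ collapses.

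The paper's argument avoids this by grouping differently and by using the hypotheses on $D$ that \emph{are} available. It partitions $A$ (not all of $V(G)$) into profile classes $T_p=\{x\in A:\profile_r^{G,S}(x)=p\}$ and applies \cref{lem:greedy-is} to each $T_p$ in the graph $G-S$ (not in $G$). If the output is $Z_p$, then directly $a\in T_p\subseteq N_{2r}^{G-S}(Z_p)$, with no rerouting and no radius-halving needed. If the output is a distance-$2r$ independent set $Y_p\subseteq T_p$ in $G-S$ of size greater than $k$, then since $|D|\le k<|Y_p|$ some $y_0\in Y_p$ has $N_r^{G-S}(y_0)\cap D=\emptyset$; because $y_0$ and $a$ share the same profile on $S$ and $(D,a)$ is not captured by $S$, also no length-$\le r$ path from $y_0$ to $D$ through $S$ exists, hence $\dist_G(y_0,D)>r$ --- contradicting that $D$ distance-$r$ dominates $A\ni y_0$. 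Thus the bad case is ruled out by the domination property of the specific $D$, not by any global independence assumption. Your secondary difficulties (rerouting a length-$2r$ path using only distance-$r$ profiles, and the $3r$ versus $2r$ mismatch requiring a ``halved radius'' fix) stem from applying \cref{lem:greedy-is} in $G$ rather than $G-S$ and from grouping the dominators $d$ rather than the dominated vertices $a$; both disappear once the argument is set up as above.
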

\begin{proof}
Fix $D$ and $a$ as in the statement of the claim. Let $p\coloneqq
\profile^{G,S}_r(a)$ be the distance-$r$ profile of $a$ on $S$. Let 
 \[T_p \coloneqq \{ x \in A : \profile_r^{G,S}(x) = p\}. \]
We apply \cref{lem:greedy-is} with parameters $G-S, T_p, r$ and $k$. 
The output is either a set $Y_p\subseteq T_a$ of size greater than $k$
which is distance-$2r$ independent in $G-S$, or a set $Z_p$ of size at most $k$ that distance-$2r$ dominates~$T_p$. 
In case the output is a set $Y_p$, we let $Z_p\coloneqq \emptyset$. 
We define 
 \[ Z \coloneqq \bigcup_{p \in \Pp} Z_p.\]
Then $|Z|\leq k\cdot |\Pp|\leq k\cdot (r+2)^s$, as required
It remains to prove that $a \in N^{G-S}_{2r}(Z)$.

Assume first that $Z_p\neq \emptyset$. Then $Z_p$ is distance-$2r$
dominates $T_p$ in $G-S$, in particular, the vertex $a$ (with
profile $p$) lies
in the distance-$2r$ neighborhood of $Z_p$. 

To finish the proof we show that the case $Z_p=\emptyset$ cannot
occur. Assume otherwise that there is $Y_p\subseteq T_p$ with
$|Y_p|>k$ and which is distance-$2r$ independent, i.e. the 
distance-$r$ neighborhoods $N_r^{G-S}(y)$, for all $y \in Y_p$, are pairwise disjoint.
By assumption, $|D| \leq k$, so for at least one $y_0 \in Y_p$, the set
$N_r^{G-S}(y_0)$ is disjoint with $X$. We claim that $\dist_G(y_0, D) > r$. 
We already know that this fact holds in $G-S$. The only remaining 
thing to show is the impossibility of connecting $y_0$ with~$D$ by a 
path of length at most $r$ through $S$. But this is impossible by the 
assumption that $S$ does not capture $(D, a)$, because $y$ and $a$ 
have the same distance-$r$ profiles on~$S$.
\cqed\end{proof}

We now use the set $Z$ provided by \cref{cl:localLemma} to 
generate the family $\Gg$. Indeed, if for each $z\in Z$ we let~$w_z$ be the Splitter's response for Connector's move $z$ in the radius-$3r$ game on $G$, then we may
define
\[G_z \coloneqq G[N^{G-S}_{3r}(z)\cup S]\qquad \textrm{and}\qquad S_z \coloneqq S \cup \{w_z\}.\]
With such definition, $G_z$ and $S_z$ satisfy the assertion of the induction hypothesis.
However, for a single $z\in Z$ we will consider multiple sets $A_t$, as explained next.

Define the index set for the family $\Gg$ as $T=Z\times \Pp$.
For $z\in Z$ and $p\in \Pp$, we set
\[A_{z,p} \coloneqq \{ a\in A \colon \profile_r^{G,S}(a)(v)+p(v)>r \textrm{ for all }v\in S\}\cap N_{2r}^{G-S}(z).\]
In other words, we define $A_{z,p}$ by taking $A\cap N_{2r}^{G-S}(z)$, and removing all vertices that are connected by a path of length at most $r$ passing through $S$ to any vertex with profile $p$ on $S$.
Note that in this definition we use distance $2r$ instead of $3r$.
Now, for each $t=(z,p)\in T$, we put
\[(G_t,S_t,A_t)=(G_z,S_z,A_{z,p}).\]
This defines the family $\Gg=\{(G_t,S_t,A_t)\}_{t\in T}$.

\begin{comment}
Fix any enumeration of $S$ as $(s_1,\ldots, s_s)$. 
Let $X\subseteq V(G_t)$. We let 
$\mathrm{profile}(G_t,X,S)\coloneqq (d_1,\ldots, d_s)$, 
where $d_i=\min_{x\in X}(\dist_{G_t}(s_i, x), r+1)$. 
Observe that there exit at most $(r+2)^s$ many different
profiles, independent of the size of $X$. We write $\Pp_t(S)$
for the set of all different profiles in $G_t$. Furthermore, 
observe that the profile of a set $X$ can be computed
in time $\mathcal{O}(s\cdot (n+m)$ by performing a breadth
first search from every element of $s$, stopping whenever
the first element of $X$ is encountered, or when after $r$ steps
no element of $X$ was encountered. 

For each pair $G_t, S_t$, we compute a family 
$\{B_t^P\}_{P\in \Pp_t}$. 
Fix an element $(d_1,\ldots,d_s)\in \Pp$.
Define $B_t^P$ as follows. Start with $B=A\cap N_{2r}^{G-S}(v_t)$. 
Now, for each $s_i\in S$ such that $d_i=r'\leq r$, remove 
all vertices $v$ of $B$ with 
$\dist_{G-S}(s,v)\leq r-r'$. As observed above, the set of all 
profiles can be computed in linear time. Hence, 
we can also compute each $B_t^P$ in linear time. Furthermore, 
there are at most $(r+2)^{s}$ different profiles, hence
we can compute all of the sets $B_t^P$ efficiently.
By renaming the indices and sets $B_t^P$ we obtain the desired
index set $T$ and triples $(G_t,S_t,A_t)$. 

Now, let~$Q_t$ be a 
distance-$r$ dependence pre-core for $G_t, A_t$ and $k$ of appropriate
size, which exists by induction hypothesis.
\end{comment}

Now, for each $t\in T$ apply the induction hypothesis to the triple $G_t,S_t,A_t$, yielding
a suitably small pre-core $Q_t$ for $G_t$, $A_t$ and $k$. We define 
$Q=\bigcup_{t\in T}Q_t$ and verify that it has all the required properties.

\begin{claim}
The set $Q$ is a distance-$r$ dependence pre-core for $G$, $A$ and $k$. 
\end{claim}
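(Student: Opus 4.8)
The plan is to fix an arbitrary $D\subseteq V(G)$ with $|D|\le k$ that distance-$r$ dominates $A$, together with an arbitrary $a\in A$, and to exhibit a path of length at most $r$ through a vertex of $Q$ connecting $a$ to a vertex of $D$. I would split into two cases according to whether $(D,a)$ is already captured by $S$. If it is, then since the construction maintains the invariant $S\subseteq Q$, the very path witnessing capture by $S$ also witnesses capture by $Q$, and we are done. So the interesting case is when $(D,a)$ is \emph{not} captured by $S$; here the whole point is to recover a witness from one of the inductively obtained pre-cores $Q_t$.

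In that case the idea is to choose the right index $t=(z,p)\in T=Z\times\Pp$. For the first coordinate I would apply \cref{cl:localLemma} to $D$ and $a$: since $(D,a)$ is not captured by $S$, the claim supplies some $z\in Z$ with $a\in N_{2r}^{G-S}(z)$. For the second coordinate I would take $p\coloneqq \profile_r^{G,S}(D)$, which is a profile on $S$ and hence lies in $\Pp$. Then two points need verification. First, $a\in A_t=A_{z,p}$: we have $a\in A\cap N_{2r}^{G-S}(z)$ by the choice of $z$, and the remaining requirement $\profile_r^{G,S}(a)(v)+p(v)>r$ for all $v\in S$ is, after unpacking the choice of $p$, exactly the statement that no vertex of $S$ lies on a path of length at most $r$ from $a$ to $D$, i.e.\ that $(D,a)$ is not captured by $S$, which holds by assumption. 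Second, $D_t\coloneqq D\cap V(G_t)$ has size at most $k$ and distance-$r$ dominates $A_t$ in $G_t$: given $a'\in A_t$, a shortest path $P'$ in $G$ from $a'$ to $D$ has length at most $r$ (as $D$ distance-$r$ dominates $A$), and $P'$ must avoid $S$ — otherwise a vertex $v_0\in S$ on this geodesic would give $\profile_r^{G,S}(a')(v_0)+p(v_0)\le |P'|\le r$, contradicting $a'\in A_{z,p}$; since moreover $a'\in N_{2r}^{G-S}(z)$ and $|P'|\le r$, every vertex of $P'$ lies in $N_{3r}^{G-S}(z)\subseteq V(G_t)$, so its endpoint in $D$ lies in $D_t$ and $\dist_{G_t}(a',D_t)\le r$.

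Having established these two points I would invoke the pre-core property of $Q_t$ for $G_t$, $A_t$, and $k$, which holds by the induction hypothesis since $G_t$ and $S_t$ were constructed so that Splitter wins the $(\ell-1)$-round radius-$3r$ game on $G_t-S_t$: applied to the dominator $D_t$ and the vertex $a\in A_t$, it yields a path of length at most $r$ in $G_t$ through a vertex of $Q_t$ connecting $a$ to a vertex of $D_t$. Because $G_t$ is an induced subgraph of $G$ and $Q_t\subseteq Q$, $D_t\subseteq D$, this path witnesses that $Q$ captures $(D,a)$ in $G$, which closes the case and the claim.

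I expect the main obstacle to be the locality bookkeeping in the second point above: one must be careful that the relevant short path from $A_t$ to $D$ both (i) avoids $S$ — which is precisely what the profile inequality built into $A_{z,p}$ guarantees, once $p$ is taken to be the profile of $D$ on $S$ — and (ii) stays inside the ball $N_{3r}^{G-S}(z)$ that defines $G_t$, so that no distances are distorted when passing from $G$ to the induced subgraph $G_t$. A minor but worthwhile sanity check is that the notion of ``path of length at most $r$'' in capture is read via the triangle inequality $\dist(a,q)+\dist(q,d)\le r$ (matching the witnessing relation encoded by $\eta^k_r$), so that ``$(D,a)$ not captured by $S$'' is genuinely equivalent to the profile inequality used in the argument; and that the invariant $S\subseteq Q$ covers the degenerate situation in which the index set $T$ is empty.
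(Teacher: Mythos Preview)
Your proposal is correct and follows essentially the same approach as the paper's proof: case-split on whether $S$ captures $(D,a)$, and in the nontrivial case choose $t=(z,p)$ with $z$ supplied by \cref{cl:localLemma} and $p=\profile_r^{G,S}(D)$, then verify $a\in A_t$ and that $D_t=D\cap V(G_t)$ distance-$r$ dominates $A_t$ in $G_t$ so the inductive pre-core property of $Q_t$ applies. Your treatment of the locality bookkeeping (that the short path from $a'\in A_t$ to $D$ avoids $S$ by the profile inequality defining $A_{z,p}$, and hence lies entirely inside $N_{3r}^{G-S}(z)\subseteq V(G_t)$) is in fact slightly more explicit than the paper's, which simply asserts $\dist_{G-S}(b,d)\le r$ from $b\in A_t$ without spelling out the contrapositive.
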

\begin{proof}
Let $D$ be a set of size at most $k$ that distance-$r$ dominates $A$, and let $a\in A$. 
Since $S$ is contained in $Q$, it suffices to prove the following: if $(D,a)$ is not
captured by $S$, then there is $t\in T$ such that $(D,a)$ is 
captured by~$Q_t$. By the construction of $Z$ there exists $z\in Z$ with $\dist_{G-S}(z,a)\leq 2r$. 
Let $p=\profile^{G,S}_r(D)$. 
Observe that since $(D,a)$ is not captured by $S$, we have $a\in A_{z,p}$.
We claim that $(D,a)$ is captured by $Q_t$ for $t=(z,p)$.

Let $D_t\coloneqq D\cap V(G_t)$. We verify that $D_t$ distance-$r$ dominates $A_t$ in $G_t$.
For this, take any $b\in A_t$. Since $D$ distance-$r$ dominates $A$ in $G$, 
there is some $d\in D$ such that $\dist_G(b,d)\leq r$.
As $b\in A_t$, in fact we must have $\dist_{G-S}(b,d)\leq r$.
This, together with the assertion $\dist_{G-S}(z,b)\leq 2r$ following from $b\in A_t\subseteq N^{G-S}_{2r}(z)$,
entails that $d\in N_{3r}^{G-S}(z)$, which in turns implies that
$d\in V(G_t)$ and $d$ distance-$r$ dominates $b$ in $G_t$. Therefore $d\in D_t$ and, consequently, $D_t$ distance-$r$ dominates every $b\in A_t$ in $G_t$.

Applying the induction assumption we conclude that $Q_t$ captures $(D_t,a)$ in $G_t$.
This implies that $Q$ captures $(D,a)$ and we are done.
\cqed\end{proof}

It remains to bound the size of $Q$, which we do as follows as follows. Let $c(r,k,d,s)$ be 
the smallest number for which the statement of the lemma holds. 
From the proof we obtain the following recursive bound on the function $c$:
\begin{align*}c(r,k,\ell,s) & \leq |Z|\cdot (r+2)^{s}\cdot c(r,k,\ell-1,s+1) \leq k\cdot (r+2)^{2s}\cdot c(r,k,\ell-1,s+1); \\
 c(r,k,0,s) & \leq s. 
\end{align*}
This gives an upper bound $c(r,k,\ell,s) \leq k^\ell\cdot (r+2)^{2\ell(s+\ell)}\cdot (s+\ell)$, which is indeed polynomial in $k$ for fixed $r,\ell,s$.

Furthermore, the provided proof is not only constructive, but effectively computable by a recursive algorithm.
In the following, by when speaking about linear time we mean time of the form $f(k,\ell,r,s)\cdot \|G\|$ for a function $f$ that is polynomial in $k$ for fixed $\ell,r,s$.
\begin{itemize}
\item Computing $Z$ using \cref{lem:greedy-is} can be done in linear time.
\item Computing Splitter's response $w_z$ to each possible Connector's move $z\in Z$ can be done in linear time. 
Computing the resulting graphs $G_z$ and sets $S_z$ can be done in linear time. 
\item For each resulting pair $G_z,S_z$, the sets 
$A_{z,p}$ for $p\in \Pp$ can be computed in linear time, by applying breadth-first search from each vertex of $S$.
\item Finally, for each resulting triple $(G_t,S_t,A_t)$ we make a recursive subcall.
\end{itemize}
Hence, we have a recursion of depth at most $\ell$ with a branching
of order $|T|$, which is bounded by a function that is polynomial in $k$. Hence, 
in total, we have a running time $f(k,\ell,r,s)\cdot \|G\|$ for a 
function $f$ which is polynomial in $k$.
\end{proof}

As argued, \cref{lem:precore-by-splitter} implies \cref{lem:pre-core-exists}, which implies \cref{thm:dependence-core}, which implies \cref{thm:ind-weak}, and we are done.

\paragraph*{A linear time algorithm for \rIndSet
in nowhere dense classes.}
We now we apply the obtained results to construct an algorithm 
for \rIndSet on nowhere dense graph classes via a direct dependence
pre-core search. Precisely, we prove the following theorem. 

\begin{theorem}\label{thm:ind-fpt}
Let $\Cc$ be a nowhere dense class of graphs an let $r\in \N$. 
Then there exists an algorithm which on input $G\in \Cc$ and $k\in\N$ 
decides in time $2^{\Oh(k\log k)}\cdot \|G\|$ whether $G$ contains a distance-$r$ independent
and outputs such a set, or correctly decides that no such set exists. 
\end{theorem}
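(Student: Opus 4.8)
The plan is to replace the oracle-based argument behind \cref{thm:is-nd} by an explicit linear-time kernelization, followed by brute force on the kernel. First, I would invoke the algorithm of \cref{lem:precore-by-splitter} with $s=0$ (so that $S=\emptyset$) to compute, in time $\poly(k)\cdot\|G\|$, a distance-$r$ dependence pre-core $Q\subseteq V(G)$ for $G$, $V(G)$, and $k-1$; by \cref{lem:pre-core-exists} we may take $|Q|\le\poly(k)$, where the polynomial depends only on $r$ and $\Cc$. By \cref{lem:pre-core}, the graph $G$ contains a distance-$r$ independent set of size $k$ if and only if $Q$ fails to be a distance-$r$ dependence core for $G$ and $k$. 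Thus everything reduces to deciding whether $Q$ is a dependence core, and for that it suffices to keep only a bounded certificate of $G$ around $Q$.

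To construct the kernel, I would group the vertices of $G$ by their distance-$r$ profile on $Q$. By \cref{lem:num-profiles} there are at most $\numprofiles{r}{\Cc}{|Q|}\le\poly(k)$ such profiles, and all of them can be computed in time $\poly(k)\cdot\|G\|$ by running a BFS from every vertex of $Q$. For each realized profile I keep up to $k$ representative vertices, and for each representative $v$ I attach a shortest-path tree connecting $v$ to $Q$ of size $\Oh(r\cdot|Q|)$, chosen so that $v$ has the same distance-$r$ profile on $Q$ in the retained graph as in $G$. Let $H$ be the subgraph of $G$ spanned by $Q$, the chosen representatives, and these trees (together with all induced edges among them), and let $B$ be the set of representatives; then $|V(H)|\le\poly(k)$ and $H$ and $B$ are computable in time $\poly(k)\cdot\|G\|$.

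The key point is the equivalence: $G$ has a distance-$r$ independent set of size $k$ if and only if $H$ has one contained in $B$. The direction from $H$ to $G$ is immediate, since $H$ is a subgraph of $G$, so distances in $H$ dominate distances in $G$ and distance-$r$ independence transfers upward. For the converse, given a distance-$r$ independent set $X$ of size $k$ in $G$, I would fix an injection from $X$ into $B$ sending each $x$ to a representative of the same distance-$r$ profile on $Q$ (possible, since $X$ has at most $k$ vertices of any single profile and $k$ representatives of each profile were retained), and argue, using that the distance-$r$ profiles of representatives on $Q$ are preserved in $H$ together with the fact that $Q$ is a pre-core, that the image of this injection is again distance-$r$ independent in $H$. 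Equivalently, one checks that the pruning step preserves whether $Q$ is a distance-$r$ dependence core: the short paths through $Q$ that witness the core property survive by the faithful shortest-path trees, and a distance-$r$ independent set witnessing non-coreness survives by monotonicity of distances under subgraphs. I expect this faithful-pruning argument to be the main obstacle, since one must ensure that passing to $H$ neither creates new short paths between representatives nor destroys the pre-core property.

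Finally, since $|V(H)|\le\poly(k)$, deciding whether $H$ has a distance-$r$ independent set of size $k$ inside $B$ is done by brute force: enumerate all $k$-subsets of $B$, of which there are at most $\poly(k)^k=2^{\Oh(k\log k)}$, and for each test distance-$r$ independence in $H$ in $\poly(k)$ time via all-pairs BFS in $H$. A subset found this way is already a subset of $V(G)$, hence the desired distance-$r$ independent set of $G$, and is returned; if none is found, $G$ has no such set. Summing the kernelization cost $\poly(k)\cdot\|G\|$ and the brute-force cost $2^{\Oh(k\log k)}$ yields total running time $2^{\Oh(k\log k)}\cdot\|G\|$, as claimed.
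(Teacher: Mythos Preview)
Your ``immediate'' direction is actually the broken one. You write that since $H$ is a subgraph of $G$, ``distances in $H$ dominate distances in $G$ and distance-$r$ independence transfers upward'', and conclude that a distance-$r$ independent $k$-subset of $B$ found in $H$ is one in $G$. But from $\dist_H\ge\dist_G$ you cannot infer $\dist_G(x,y)>r$ from $\dist_H(x,y)>r$; a set that is distance-$r$ independent in the sparser graph $H$ may fail to be so in $G$, so the subset you return need not solve the problem. The converse direction has a symmetric issue: mapping an independent set $X\subseteq V(G)$ to representatives $X'\subseteq B$ with the same profiles on $Q$ only controls paths through $Q$; the shortest-path trees you attach to different representatives may share internal vertices, creating short paths in $H$ that avoid $Q$, so there is no reason $X'$ is distance-$r$ independent in $H$ either. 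You flag the pruning argument as the obstacle, but it is not a matter of care in the construction; the equivalence as stated (independence in $H$ versus independence in $G$) is simply false in general.

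The paper's proof avoids both pitfalls by not building a kernel graph at all. It observes that whether $Q$ is a distance-$r$ dependence core for $G$ and $k$ is a condition that depends only on the multiset of distance-$r$ profiles $\{\!\{\profile_r^{G,Q}(u):u\in V(G)\}\!\}$, since a set $X$ violates the core condition precisely when $\profile_r^{G,Q}(x_1)(q)+\profile_r^{G,Q}(x_2)(q)>r$ for all distinct $x_1,x_2\in X$ and all $q\in Q$. By \cref{lem:num-profiles} there are $\poly(k)$ realized profiles, so enumerating all size-$k$ submultisets takes $2^{\Oh(k\log k)}$ time and decides the question via \cref{lem:pre-core}. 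To actually \emph{output} a solution, the paper then runs the constructive argument inside the proof of \cref{lem:pre-core}: starting from any $X$ that witnesses non-coreness, iteratively swap a vertex to decrease the potential $f(X)$ until $X$ becomes distance-$r$ independent in $G$, which terminates in at most $k$ linear-time rounds. Your scheme can be repaired along these lines---test the core condition via profiles rather than independence in $H$, and extract the solution in $G$ iteratively---but as written, the brute force on $H$ and the direct return of the found subset do not work.
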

\begin{proof}
By \cref{lem:pre-core-exists}, in time $\poly(k)\cdot \|G\|$ we can compute a distance-$r$ pre-core $Q$ for $G$, $V(G)$, and $k-1$ of size $\poly(k)$.
By \cref{lem:pre-core}, whether $G$ contains a distance-$r$ independent set is equivalent to whether $Q$ is a distance-$r$ core for $G$ and $k$.

We can check the latter assertion as follows. 
Observe that $Q$ is a core as above if the following condition holds for every set $X\subseteq V(G)$ with $|X|=k$: 
there exist $x_1,x_2\in X$, $x_1\neq x_2$, and $y\in Q$ such that $\dist(x_1,y)+\dist(x_2,y)\leq r$.
This condition depends only on the multiset $\{\{ \profile_r^{G,Q}(x)\colon x\in X\}\}$, 
and given such a multiset of size $k$ it can be decided in time $\Oh(|Q|\cdot k^2)$ whether the corresponding set $X$ indeed satisfies the condition.
Observe that we can construct the multiset $\Pp=\{\{ \profile_r^{G,Q}(u)\colon u\in V(G)\}\}$ in time 
$\Oh(|Q|\cdot \|G\|)$ by running BFS from every vertex of $Q$, reading for every vertex $u\in V(G)$ its distance-$r$ profile on $Q$, and then counting how many times each profile is realized.
Now, it remains to check all submultisets of $\Pp$ of size $k$. 
By \cref{lem:num-profiles}, the number of different distance-$r$ profiles on~$Q$ that are actually realized in $G$ is bounded polynomially in $|Q|$, so the number of such multisets is at most~$|Q|^{\Oh(k)}$. 
Since $|Q|=\poly(k)$, we have $2^{\Oh(k\log k)}$ multisets to check, and checking each of them takes time $\Oh(|Q|\cdot k^2)=\poly(k)$.
Hence, the total running time is $2^{\Oh(k\log k)}\cdot \|G\|$.

Observe that the above reasoning only gives a decision procedure and does not construct the actual distance-$r$ independent set.
Such a set can be constructed as follows.
Using the algorithm described above we can find a set $X$ with $|X|=k$ such that $\dist(x_1,y)+\dist(x_2,y)>r$ for all distinct $x_1,x_2\in X$ and $y\in Q$, or conclude that no such set $X$ exists.
In the latter case, by the reasoning above we conclude that there is no distance-$r$ independent of size~$k$ in $G$.
Otherwise, we emulate the reasoning presented in the proof of \cref{lem:pre-core}: supposing $X$ is not a distance-$r$ independent set yet, we can find another set $X'$ 
satisfying the same condition and with $f(X')<f(X)$, where $f$ is defined as in the proof of \cref{lem:pre-core}, and apply the reasoning again replacing $X$ with $X'$.
The number of iterations until the procedure terminates is bounded by $k$ and each iteration can be easily implemented in time $\Oh(k^2\|G\|)$.
Hence, a distance-$r$ independent set can be constructed using $\Oh(k^3\|G\|)$ additional time.
\end{proof}

\section{Omitted proofs}\label{sec:omitted}

Finally, in this section we present all proofs that we omitted from the main body of the text for the sake of a concise presentation. 

\begin{lemma}\label{lem:ramsey-precise}
For all $c,\ell\in \N$ with $c\geq 2$, we have $R^c(\ell)\leq c^{c\ell-1}$.
\end{lemma}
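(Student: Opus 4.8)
The plan is to carry out the classical iterated--pigeonhole proof of Ramsey's theorem, while bookkeeping the set sizes carefully enough to read off the stated bound. Set $m := c(\ell-1)+1$. I will show that in every $c$-colouring of the edges of the complete graph on $N := c^{m-1} = c^{c(\ell-1)}$ vertices there is a monochromatic clique of order $\ell$. Since $c(\ell-1)\le c\ell-1$ for $c\ge 2$, we have $c^{c(\ell-1)}\le c^{c\ell-1}$, so this is enough; as a by‑product it reproves \cref{thm:ramsey} with the claimed value.

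Fix such a colouring. First I would greedily build vertices $v_1,\dots,v_m$, colours $i_1,\dots,i_m\in\{1,\dots,c\}$, and a descending chain of vertex sets $V_0\supseteq V_1\supseteq\dots\supseteq V_m$ as follows: let $V_0$ be the whole vertex set; having chosen a nonempty $V_{j-1}$, pick an arbitrary $v_j\in V_{j-1}$, partition $V_{j-1}\setminus\{v_j\}$ into at most $c$ classes according to the colour of the edge joining a vertex to $v_j$, let $V_j$ be a class of maximum size and $i_j$ its colour (and set $i_j:=1$ if $V_{j-1}\setminus\{v_j\}=\emptyset$). Then $|V_j|\ge\lceil(|V_{j-1}|-1)/c\rceil$, and by induction on $j$ I would prove $|V_j|\ge c^{m-1-j}$ for $0\le j\le m-1$: the base case is $|V_0|=N=c^{m-1}$, and the step uses that for $c\ge 2$ one has $\lceil(c^{m-j}-1)/c\rceil=\lceil c^{m-j-1}-1/c\rceil=c^{m-j-1}$, because $0<1/c<1$ and $c^{m-j-1}$ is a positive integer. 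In particular $|V_{j-1}|\ge c^{m-j}\ge 1$ for all $j\le m$, so the construction really runs for all $m$ steps.

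Next I would apply pigeonhole to $i_1,\dots,i_m$: since $m=c(\ell-1)+1$ and there are only $c$ colours, some colour $a$ occurs at $\ell$ indices $j_1<j_2<\dots<j_\ell$. I claim $\{v_{j_1},\dots,v_{j_\ell}\}$ is a monochromatic clique of colour $a$. These vertices are pairwise distinct because $v_j\notin V_j$ by construction while $v_{j'}\in V_{j'-1}\subseteq V_j$ whenever $j<j'$. For $p<q$ we have $j_p\le j_\ell-1\le m-1$, and moreover $|V_{j_p-1}|\ge c^{m-j_p}\ge c\ge 2$, so $V_{j_p-1}\setminus\{v_{j_p}\}$ is nonempty and $V_{j_p}$ is a genuine colour class, every vertex of which is joined to $v_{j_p}$ by an edge of colour $i_{j_p}=a$; since $v_{j_q}\in V_{j_q-1}\subseteq V_{j_p}$ by nestedness, the edge $v_{j_p}v_{j_q}$ has colour $a$. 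This proves the claim and hence the bound.

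The argument is entirely routine; the only point that needs a little care is the integer rounding in the size recursion, namely the identity $\lceil(c^{m-j}-1)/c\rceil=c^{m-j-1}$, which is precisely where the hypothesis $c\ge 2$ is used (for $c=1$ the bound is false, since $R^1(\ell)=\ell>1=1^{\ell-1}$), together with the attendant off‑by‑one boundary checks --- that the bound $|V_j|\ge c^{m-1-j}$ is only claimed for $j\le m-1$, that the colours $i_{j_p}$ used in the clique argument are all genuine, and the degenerate case $\ell=1$, where $m=1$, $N=1$, and the single vertex $v_1$ forms the required clique vacuously.
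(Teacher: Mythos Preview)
Your proof is correct and follows essentially the same iterated-pigeonhole argument as the paper's: build a long sequence of vertices where each sees all later (in your version) or earlier (in the paper's) vertices through a single colour, then pigeonhole on those colours to extract a monochromatic clique. The only substantive difference is that you tighten the pigeonhole step to $m=c(\ell-1)+1$ rather than the paper's $c\ell$, yielding the slightly stronger bound $R^c(\ell)\le c^{c(\ell-1)}$; your handling of the boundary cases (the possibly spurious colour $i_m$, the rounding identity, $\ell=1$) is more explicit than the paper's but not materially different.
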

\begin{proof}
We first prove the following claim: if $G$ is a complete graph on a set $V$ of $c^{p-1}$ vertices, then
there is a sequence $u_1,\ldots,u_{p}$ of different vertices from $V$ with the following property:
for every $i\in \{1,\ldots,p\}$ all edges connecting $u_i$ with vertices appearing earlier in the sequence, i.e., $u_j$ for $j<i$, are of the same color.
Note that this color may vary for different vertices $u_i$.

We proceed by induction on $p$, where the base case $p=1$ holds by taking $u_i$ to be any vertex of $p$.
For the induction step, let $v$ be any vertex of $V$. Since edges between $v$ and $V\setminus \{v\}$ are colored with $c$ colors, there is a color $s\in \{1,\ldots,c\}$ such that
$v$ has at least $\left\lceil \frac{c^{p-1}-1}{c} \right\rceil=c^{p-2}$ neighbors adjacent via an edge of color $s$. Let $H$ be the complete graph induced by those neighbors in $G$.
Then by the induction hypothesis, in $H$ we can find a suitable sequence of vertices $u_1,\ldots,u_{p-1}$.
It now suffices to extend this sequence with $u_p=v$.

We proceed to the main proof. Using the claim, within any complete graph with $c^{c\ell-1}$ vertices and with edges colored with $c$ colors we may find a sequence $u_1,\ldots,u_{c\ell}$
such that for every $i\in \{1,\ldots,c\ell\}$, all edges connecting $u_i$ with vertices $u_j$ for $j<i$ are of the same color, say $s_i$.
Since there are $c$ colors in total, for some color $s$ there are at least $\ell$ indices $i\in \{1,\ldots,c\ell\}$ with $s_i=s$.
Then vertices $u_i$ corresponding to these indices $i$ form a monochromatic clique of color $s$.
\end{proof}

\begin{proof}[of \cref{lem:comatch&semiladder}]
The left-to-right implication is immediate:
every ladder of order $n$ and every co-matching of order $n$
are also semi-ladders of order $n$, so the semi-ladder index is always an upper bound on both the co-matching and the ladder index.

For the right-to-left implication, we prove that if a bipartite graph $G=(L,R,E)$ has both the ladder and the co-matching index smaller than some $\ell\in \N$, then its semi-ladder index is smaller than $q=R^2(\ell)$.
Suppose for contradiction that $a_1,\ldots,a_q\in L$ and $b_1,\ldots,b_q\in R$ form a semi-ladder of order $q$ in $G$.
Color all pairs $(i,j)$ with $1\le i<j\ge q$ red or blue, depending on whether $(a_i,b_j)\in E$ or not.
By Ramsey's theorem, there is a subset of $\set{1,\ldots,q}$
of size $\ell$ which is monochromatic, i.e., which only spans 
red edges, or only spans blue edges.
In the first case, the corresponding vertices $a_i$ and $b_i$ form a co-matching of order $\ell$ in $G$,
and in the latter case we analogously exhibit a ladder of order $\ell$ in $G$.
In any case, this is a contradiction.
\end{proof}

\begin{proof}[of \cref{lem:hel-com}]
    Let $G=(L,R,E)$.
    For the left-to-right implication,
    suppose $a_1,\ldots,a_q\in L$ and $b_1,\ldots,b_q\in R$ form a co-matching of order $q$ in $G$, for some $q\in \N$.
    Then $A=\{a_1,\ldots,a_q\}$ and $B=\{b_1,\ldots,b_q\}$ do not have the $q$-Helly property, implying that $q\leq p$.
    This means that the co-matching index of $G$ is at most $p$.

For the right-to-left implication, 
it is enough to show that if $G$ has co-matching index at most $p$, then it has the weak $p$-Helly property.
Indeed, from this it follows that in fact $G$ must have the strong $p$-Helly property, since we can apply the same argument to every induced subgraph of $G$, 
which also has co-matching index at most $p$.

Thus, assume that $G$ has co-matching index at most $p$ and $R$ is not covered (otherwise we are done).
Let $B\subset R$ any inclusion-minimal set which is not covered by $L$;
we show that $|B|\le p$.
By minimality, for every $b\in B$ there is some $a_b\in L$ 
which is adjacent to all vertices in $B\setminus \set b$ and not to $b$.
This means that~$B$ together with $A=\{a_b\colon b\in B\}$ form a co-matching of order $|B|$ in $G$,
which implies that $|B|\le p$ by the assumption on the co-matching index of $G$.
\end{proof}

\begin{proof}[of \cref{lem:bool-comb}]
    Let $G$ be a graph with vertex set $V$ and suppose that 
    $\psi(G)$ has semi-ladder index at least $R^k(\ell)$.    
    Then there are tuples $\tup a_1,\ldots,\tup a_q\in V^{\tup x}$ 
    and $\tup b_1,\ldots, \tup b_q\in V^{\tup y}$,
    where $q=R^k(\ell)$,
    such that for all $i,j\in \set{1,\ldots,q}$,
    if $i>j$ then $\psi(\tup a_i;\tup b_j)$ holds in $G$,
    and if $i=j$, then $\psi(\bar a_i;\bar b_i)$ does not hold in~$G$.  

    Since $\psi$ is a positive 
    boolean combination of $\phi_1,\ldots,\phi_k$,
    whenever  $(\tup a,\tup b),(\tup a',\tup b')\in V^{\tup x}\times V^{\tup y}$ 
    are two pairs of tuples, one satisfying $\psi$ in $G$ and 
    the other satisfying $\neg \psi$ in $G$,
    then there must be some $p\in \set{1,\ldots,k}$
    such that $(\tup a,\tup b)$ satisfies $\phi_p$ and
    $(\tup a',\tup b')$ satisfies $\neg \phi_p$.

Hence, for all $q\ge i>j\ge 1$,
we may color the pair $(i,j)$ with some number 
$p\in \set{1,\ldots,k}$ such that~$\phi_p(\tup a_i;\tup b_j)$ holds in $G$ 
and $\phi_p(\tup a_i;\tup b_i)$ does not hold in $G$.

By Ramsey's theorem, there 
is a set $X\subset \set{1,\ldots,q}$
of size $\ell$, and a color $p\in \set{1,\ldots,k}$,
such that each pair $(i,j)\in X^2$ with $i>j$ 
is colored with color $p$. It follows that
$\phi_p(a_i;b_j)$ holds in $G$ for all $i,j\in X$ with $i>j$, and $\varphi_p(\tup a_i;\tup b_i)$ does not hold in $G$ for all $i\in X$.
 This shows that $\phi_p(G)$ contains a semi-ladder of order $\ell$,
 contrary to the assumption.
\end{proof}

\begin{proof}[of \cref{lem:disj-comb}]
We prove the following statement: if for some graph $G=(V,E)$, in $\psi(G)$ we find a semi-ladder of order $q=k^{\ell-1}$, say
formed by $\tup a_1,\ldots,\tup a_q\in V^{\tup x}$ and $\tup b_1,\ldots,\tup b_q\in V^{\tup y}$, 
then in $\phi(G)$ we can find a semi-ladder $\tup c_1,\ldots,\tup c_\ell$ and $\tup d_1,\ldots,\tup d_\ell$, 
where $\tup c_i$-s are permutations of different tuples from $\{\tup a_1,\ldots,\tup a_q\}$, and $\tup d_i$-s are different tuples from $\{\tup b_1,\ldots,\tup b_q\}$.
For $j\in \{1,\ldots,k\}$ and a tuple $\tup a\in V^{\tup x}$, by $\tup a^j$ we denote the tuple from $V^{\tup x^j}$ that is obtained 
by permuting $\tup a$ as in the permutation that maps $\tup x$ to $\tup x^j$.

We proceed by induction on $\ell$.
For the base case $\ell=1$, we can take $\tup c_1=\tup a_1^1$ and $\tup d_1=\tup b_1$.

We move to the inductive step.
Since $\psi(\tup a_q;\tup b_t)$ holds for all $t<q$, there exists an index $j\in \{1,\ldots,k\}$ such that $\phi(\tup a_q^j;\tup b_t)$ holds 
for $\lceil \frac{q-1}{k}\rceil=k^{\ell-2}$ indices $t\in \{1,\ldots,q-1\}$.
Restricting sequences $\tup a_1,\ldots,\tup a_q$ and $\tup b_1,\ldots,\tup b_q$ to those indices $t$, we obtain a semi-ladder in $\psi(G)$ of order $k^{\ell-2}$.
By applying the inductive assumption to this semi-ladder, we can find
a semi-ladder $\tup c_1,\ldots,\tup c_{\ell-1}$ and $\tup d_1,\ldots,\tup d_{\ell-1}$ in $\phi(G)$, where $\tup c_i$-s are permutations of different tuples from $\{\tup a_1,\ldots,\tup a_q\}$
and $\tup d_i$-s are different tuples from $\{\tup b_1,\ldots,\tup b_q\}$, all of which satisfy $\phi(\tup a_q^j;\tup d_i)$.
It now remains to extend this semi-ladder by appending $\tup c_\ell = \tup a_q^j$ and $\tup d_\ell=\tup b_q$.
\end{proof}

\begin{lemma}\label{lem:prof-comp-powers}
The conclusion of \cref{lem:num-profiles} holds also when $\Cc=\Dd^s$ for any nowhere dense class $\Dd$ and fixed $s\in \N$, and when $\Cc$ is the class of map graphs.
\end{lemma}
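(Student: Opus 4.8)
The plan is to reduce both cases to Lemma~\ref{lem:num-profiles} by exhibiting, for each $G\in\Cc$, an auxiliary graph $H$ drawn from a nowhere dense class in which distances ``control'' distances in $G$ in a rescaled but faithful way. The point is that then every distance-$r$ profile in $G$ is a \emph{function} of a distance-$r'$ profile in $H$ for a fixed $r'=r'(r)$, so the number of distance-$r$ profiles on any set $S$ in $G$ cannot exceed the number of distance-$r'$ profiles on $S$ in $H$, which is almost linear by Lemma~\ref{lem:num-profiles}.

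For the case $\Cc=\Dd^s$: given $G=H^s$ with $H\in\Dd$, I would use that $V(G)=V(H)$ together with the standard identity $\dist_G(u,v)=\lceil \dist_H(u,v)/s\rceil$ for all $u,v$. This shows that $\dist_G(v,w)\le r$ if and only if $\dist_H(v,w)\le rs$, and that whenever $\dist_H(v,w)\le rs$ the exact value of $\dist_G(v,w)$ is recoverable from $\dist_H(v,w)$. Hence, for any $S\subseteq V(H)$, the map sending $\profile^{H,S}_{rs}(v)$ to $\profile^{G,S}_{r}(v)$ is well defined, so $\numprofiles{r}{G}{m}\le\numprofiles{rs}{H}{m}$ and therefore $\numprofiles{r}{\Dd^s}{m}\le\numprofiles{rs}{\Dd}{m}$. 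Since $\Dd$ is nowhere dense, applying Lemma~\ref{lem:num-profiles} with radius $rs$ gives the required bound of the form $c_{rs,\eps}\cdot m^{1+\eps}$.

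For the class of map graphs: I would invoke Theorem~\ref{thm:half-squares} to fix, for a map graph $G$, a bipartite planar graph $H$ whose one side $V$ equals $V(G)$ and with $G=H^2[V]$. The key computation is that $\dist_G(u,v)=\tfrac12\dist_H(u,v)$ for all $u,v\in V$: since $V$ is an independent set in $H$, every edge of $H^2[V]$ must arise from a path of length exactly two in $H$, so a $u$--$v$ path of length $t$ in $G$ yields a $u$--$v$ walk of length $2t$ in $H$; conversely a shortest $u$--$v$ path in $H$ has even length (both endpoints lie in $V$ and $H$ is bipartite), and taking every second vertex produces a $u$--$v$ path in $G$ of half that length. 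Consequently $\profile^{G,S}_{r}(v)$ is determined by $\profile^{H,S}_{2r}(v)$ for every $S\subseteq V$, so $\numprofiles{r}{G}{m}\le\numprofiles{2r}{H}{m}\le\numprofiles{2r}{\Pc}{m}$, where $\Pc$ is the (nowhere dense) class of planar graphs; Lemma~\ref{lem:num-profiles} with radius $2r$ then closes the argument.

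I expect the only delicate point to be pinning down the \emph{exact} distance relationship rather than a mere two-sided inequality --- especially in the map-graph case, where one has to use that the distinguished side of $H$ is independent to conclude that distances scale by precisely $2$, so that exact distances (and not just the ``$\le r$'' threshold) transfer. Everything else is a routine substitution into Lemma~\ref{lem:num-profiles}.
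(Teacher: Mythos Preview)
Your proposal is correct and follows essentially the same approach as the paper: reduce distance-$r$ profiles in $G$ to distance-$rs$ (respectively $2r$) profiles in an underlying nowhere dense graph via the identities $\dist_{G^s}=\lceil\dist_G/s\rceil$ and $\dist_{G}=\dist_H/2$, then invoke Lemma~\ref{lem:num-profiles}. In fact you supply more justification than the paper does for the map-graph distance identity, correctly using that $V$ is independent in the bipartite witness $H$.
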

\begin{proof}
Consider first the case when $\Cc=\Dd^s$, where $\Dd$ is nowhere dense.
Take any graph from $\Cc$, say $G^s$ for $G=(V,E)\in \Dd$.
Observe that for any two vertices $u,v\in V$ we have
$$\dist_{G^s}(u,v)=\left\lceil \frac{\dist_G(u,v)}{s}\right\rceil.$$
It follows that for every $r\in \N$, vertex $u\in V$, and vertex subset $S\subseteq V$, $\profile_r^{G^s,S}(u)$ is uniquely determined by $\profile_{rs}^{G,S}(u)$. Consequently,
$$\numprofiles{r}{\Cc}{m}\leq \numprofiles{rs}{\Dd}{m}\qquad\textrm{for all }m\in \N.$$
The claim now follows from \cref{lem:num-profiles} applied to the class $\Dd$ and radius parameter $rs$.

Consider now the case when $\Cc$ is the class of map graphs.
Take any map graph $G=(V,E)$.
By \cref{thm:half-squares}, there exists a bipartite planar graph $H$ such that one part of its bipartition is $V$ and $G$ is the subgraph of $H^2$ induced by $V$.
It follows that for any $u,v\in V$, we have $\dist_G(u,v)=\dist_H(u,v)/2$.
Therefore, for any $r\in \N$, $u\in V$, and $S\subseteq V$, $\profile_r^{G,S}(u)$ is uniquely defined by $\profile_{2r}^{H,S}(u)$, implying
$$\numprofiles{r}{\Cc}{m}\leq \numprofiles{2r}{\Pc}{m}\qquad\textrm{for all }m\in \N,$$
where $\Pc$ is the class of planar graphs. We may conclude as before using \cref{lem:num-profiles}, because $\Pc$ is nowhere dense.
\end{proof}

\begin{lemma}\label{lem:prof-comp-ktt}
Fix $t\in \N$ and let $\Cc$ be the class of $K_{t,t}$-free graphs. Then $\numprofiles{1}{\Cc}{m}\leq \Oh(m^t)$.
\end{lemma}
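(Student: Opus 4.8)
The plan is to reduce counting distance-$1$ profiles on a set $S$ with $|S|\le m$ to counting the distinct \emph{neighborhood traces} $N(v)\cap S$ of vertices $v$ on $S$, and then to bound the number of traces by a double-counting argument in the spirit of the K\H{o}v\'ari--S\'os--Tur\'an bound, using $K_{t,t}$-freeness.

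First I would observe that the distance-$1$ profile $\profile^{G,S}_1(v)$ is a function from $S$ to $\{0,1,\infty\}$ which takes the value $0$ at $s\in S$ exactly when $v=s$, and is otherwise determined by $N(v)\cap S$. Hence the vertices lying in $S$ realize at most $|S|\le m$ pairwise distinct profiles --- the profile of $v\in S$ is the only one taking value $0$ at $v$ --- and each of these differs from every profile realized by a vertex outside $S$, whose profile takes values only in $\{1,\infty\}$. Therefore $\numprofiles{1}{G}{m}\le m+|\mathcal{N}|$, where $\mathcal{N}=\{N(v)\cap S\colon v\in V(G)\}$, and it suffices to prove $|\mathcal{N}|=\Oh(m^t)$ with a constant depending only on $t$.

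Next I would split $\mathcal{N}$ according to the size of a trace. Traces of size less than $t$ are simply subsets of $S$ of size $<t$, and there are at most $\sum_{i<t}\binom{m}{i}=\Oh(m^{t-1})$ of them. For traces of size at least $t$, it suffices to bound the number of vertices $v$ with $|N(v)\cap S|\ge t$, which I would do by double counting the pairs $(v,T)$ with $T\in\binom{S}{t}$ and $T\subseteq N(v)$: each such $v$ contributes $\binom{|N(v)\cap S|}{t}\ge 1$ pairs, while for a fixed $t$-element set $T\subseteq S$ there are at most $t-1$ vertices adjacent to all of $T$, since $t$ distinct common neighbors of $T$, together with $T$, would span a $K_{t,t}$ subgraph of $G$ --- note that such common neighbors are automatically disjoint from $T$, as $G$ has no loops. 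Hence at most $(t-1)\binom{m}{t}$ vertices have a trace of size $\ge t$, so there are at most $(t-1)\binom{m}{t}=\Oh(m^t)$ distinct traces of size $\ge t$. Adding the two cases gives $|\mathcal{N}|=\Oh(m^t)$, and therefore $\numprofiles{1}{\Cc}{m}\le m+\Oh(m^t)=\Oh(m^t)$.

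The argument is short, and I do not expect a genuine obstacle. The only points needing care are the bookkeeping in the reduction from profiles to traces --- making sure the profiles of vertices of $S$ and those of vertices outside $S$ are counted correctly and not confused --- and the small observation that the common neighbors of $T$ avoid $T$, which quietly uses loop-freeness of $G$.
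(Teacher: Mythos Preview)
Your proof is correct and follows essentially the same approach as the paper: reduce profiles to neighborhood traces on $S$, bound the small traces trivially, and bound the number of vertices with a large trace using $K_{t,t}$-freeness. The only cosmetic differences are that the paper restricts to $v\in V\setminus S$ and picks a single representative $t$-subset $A_v\subseteq N(v)\cap S$ for each such $v$ (arriving at the slightly looser bound $(t-1)m^t$), whereas you double-count all pairs $(v,T)$ and obtain $(t-1)\binom{m}{t}$; both arguments are the same K\H{o}v\'ari--S\'os--Tur\'an idea.
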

\begin{proof}
Let $G=(V,E)$ be a $K_{t,t}$-free graph and let $S\subseteq V$ be any subset of vertices with $|S|\leq m$.
Observe that for two vertices $u,v\in V\setminus S$, we have $\profile^{G,S}_1(u)=\profile^{G,S}_1(v)$ if and only if $N(u)\cap S=N(v)\cap S$.
Since there are at most $m$ vertices in $S$ itself, it suffices to bound the size of the set system $\Ff=\{N(u)\cap S\colon u\in V\setminus S\}$ by $\Oh(m^t)$.

To this end, we first note that $\Ff$ obviously contains at most $\binom{m}{0}+\binom{m}{1}+\ldots+\binom{m}{t-1}=\Oh(m^{t-1})$ sets of size smaller than $t$.
To bound the number of sets in $\Ff$ of size at least $t$, do the following.
For every vertex $u\in V\setminus S$ with at least $t$ neighbors in $S$, pick an arbitrary set $A_u$ consisting of $t$ neighbors of $u$ in $S$.
If any subset $A\subseteq S$ with $|A|=t$ was picked as $A_u$ for at least $t$ vertices $u\in V\setminus S$, then $A$ together with those vertices would form a $K_{t,t}$ subgraph in $G$, a contradiction.
Therefore, by the pigeonhole principle, the total number of vertices in $V\setminus S$ that have at least $t$ neighbors in $S$ is bounded by $(t-1)m^t$, yielding the same upper bound on the number of
sets in $\Ff$ of size at least $t$. Consequently, we have that $|\Ff|\leq \Oh(m^{t-1})+(t-1)m^t=\Oh(m^t)$.
\end{proof}

\bibliographystyle{abbrv}
\bibliography{ref}

\end{document}